\documentclass[journal]{IEEEtran}
\usepackage{amsmath,amssymb,amsfonts}
\usepackage{array}
\usepackage{mathtools}
\usepackage{graphicx}
\usepackage{textcomp}
\usepackage{xcolor}
\usepackage{bm}
\usepackage[hidelinks]{hyperref}
	\hypersetup{pdftitle={Scalable Policy Optimization for Networked Multi-Agent Reinforcement Learning with Continuous State-Action Spaces},pdfauthor={Dongming Wang, Pengcheng Dai, Wenwu Yu, Wei Ren}}
\usepackage{cite}
\usepackage{amsthm}
\usepackage[theorems,algorithms]{textool}
\definecolor{dm}{rgb}{0.80,0.10,0.10}

\theoremstyle{plain}
\newtheorem{theorem}{Theorem}
\newtheorem{lemma}{Lemma}
\newtheorem{proposition}{Proposition}
\newtheorem{corollary}{Corollary}

\providecommand{\hhatphi}{{\ensuremath{\hat{\boldsymbol{\phi}}}}}
\providecommand{\hhatmu}{{\ensuremath{\hat{\boldsymbol{\mu}}}}}
\providecommand{\Cfloor}{{\ensuremath{C_{\rm floor}}}}
\providecommand{\Efloor}{{\ensuremath{\epsilon_{\rm floor}}}}
\providecommand{\bbarr}{{\ensuremath{\bar{r}}}}
\providecommand{\bbarg}{{\ensuremath{\bar{g}}}}
\providecommand{\tdbbphi}{{\ensuremath{\tilde{\boldsymbol{\phi}}}}}
\providecommand{\Bk}{{\ensuremath{B_\kappa(m)}}}

\DeclareMathOperator{\dist}{dist}
\DeclareMathOperator{\sym}{sym}

\begin{document}

	\title{Scalable Policy Optimization for Networked Multi-Agent Reinforcement Learning with Continuous State-Action Spaces}

	\author{Dongming Wang, Pengcheng Dai, Wenwu Yu,~\IEEEmembership{Senior Member,~IEEE}, Wei Ren,~\IEEEmembership{Fellow,~IEEE}
		\thanks{D. Wang and W. Ren are with the Department of Electrical and Computer Engineering, University of California, Riverside, CA 92521, USA.
			{\tt\small \{wdong025, wei.ren\}@ucr.edu}.}
		\thanks{P. Dai is with the Engineering Systems and Design Pillar, Singapore University of Technology and Design, Singapore 487372.
			{\tt\small Jldaipc@163.com}.}
		\thanks{W. Yu is with the Frontiers Science Center for Mobile Information Communication and Security, School of Mathematics, Southeast University, Nanjing 211102, China, and also with Purple Mountain Laboratories, Nanjing, 211102, China.
			{\tt\small wwyu@seu.edu.cn}.}}

	\maketitle

	\begin{abstract}
		Learning local policies for continuous networked systems requires accounting for the effects of decisions beyond each agent's observation neighborhood. Spatial decay limits these effects, but a finite critic must also control representation and estimation errors throughout policy optimization. We analyze the Continuous Distributed Coupled Policy Gradient (CDCPG) algorithm using local random Fourier features and least-squares temporal-difference critics. For features that retain the boundary inputs required by the local dynamics, we derive an action-value representation with separate spatial and finite-feature residuals. A global integrated transition-approximation bound and a projected Bellman argument control population prediction error without an inverse-conditioning multiplier. We then quantify the dependence of critic estimation on feature excitation and dimension, and construct simultaneous lower confidence bounds for temporal-difference conditioning along the executed iterates. Combining critic error with localized reward aggregation bounds the expected squared projected-gradient mapping by an optimization term and an explicit residual separating spatial approximation, finite features, and omitted distant rewards. For fixed neighborhoods and feature dimension, the shared-oracle sample count is inverse-squared in the excess squared-stationarity accuracy, up to logarithmic factors. The guarantee assumes known local dynamics and rewards, independent discounted-occupancy samples, and stated excitation, decay, and smoothness conditions, and is conditional on favorable feature draws. Numerical studies illustrate related implementations on a linear-coupled-quadratic benchmark.
	\end{abstract}

	\begin{IEEEkeywords}
		Multi-agent reinforcement learning, networked systems, distributed policy optimization, random Fourier features, least-squares temporal-difference learning.
	\end{IEEEkeywords}

	\section{Introduction}\label{sec:intro}

	Cooperative multi-agent reinforcement learning (MARL) provides a framework for coordinating decision makers whose policies affect a shared objective~\cite{zhang2021multi}. Local information is particularly important in networked control, as illustrated by decentralized voltage regulation~\cite{feng2023stability} and distributed linear-quadratic control~\cite{shin2023near}. Such settings motivate learning methods whose information requirements remain confined to prescribed neighborhoods as the network grows. The central question is how these local decisions can be improved while accounting for their contribution to the performance of the entire network.

	We study this question for networked Markov decision processes (MDPs) with continuous state and action spaces. Each agent observes a prescribed neighborhood, and the team seeks to maximize the sum of its discounted local rewards. Even with local dynamics, the learning problem remains coupled across agents because the consequences of an action propagate through the network over time. Its contribution to the team objective therefore involves rewards beyond the acting agent's immediate neighborhood. Continuous variables create a second difficulty. The joint continuous dimension grows linearly with the number of agents $n$, and a fixed-resolution Cartesian discretization grows exponentially, so a finite critic must approximate the relevant value function, and the resulting statistical error must remain controlled as the policy changes.

	\subsection{Related Work}

	\emph{Locality in networked MARL.} Localized actor--critic analyses exploit decay of distant-variable influence to control local value-approximation bias. Qu et al.~\cite{qu2020scalable,qu2022scalable} developed scalable actor--critic methods based on the \emph{exponential decay property} of networked MDPs, under which the influence of distant agents on each local value decays geometrically in graph distance, with truncated local critics for finite state-action spaces. Lin et al.~\cite{lin2021multi} extended the locality framework to stochastic network dependencies using generalized decay and trajectory-based analysis, and Zhang et al.~\cite{zhang2023global} analyzed localized policy iteration for entropy-regularized networked MARL, with a locality-dependent gap to global performance. Chakraborty et al.~\cite{chakraborty2026locality} studied policy-dependent locality in the average-reward setting by separating environmental and policy sensitivities; in the present analysis the required decay properties are stated explicitly as premises, and a baseline decay rate is derived from finite propagation (Lemma~\ref{lem:baseline_decay}). These finite-space constructions do not directly supply a finite representation and a policy-evaluation error bound for the continuous model considered here.

	\emph{Communication-based decentralized learning.} A related literature distributes learning through a communication network. Fully decentralized actor--critic methods with globally observed states, linear value approximation, and consensus over a communication graph were analyzed asymptotically by Zhang et al.~\cite{zhang2018fully}; Doan et al.~\cite{doan2019finite} gave a finite-time analysis of distributed temporal-difference learning, TD($0$), for policy evaluation with linear function approximation under independent stationary samples; and Chen et al.~\cite{chen2022sample} developed finite-time decentralized actor--critic and natural actor--critic analyses under Markovian sampling. Chu et al.~\cite{chu2020multi} investigated communication and spatial discounting for networked-system control empirically. Communication connectivity and physical locality play different roles: exchanging estimates among neighbors does not, by itself, quantify the error caused by excluding distant physical variables from a critic.

	\emph{Continuous networked control and spectral representations.} For structured continuous network control, Shin et al.~\cite{shin2023near} quantify the performance of spatially truncated distributed linear-quadratic controllers, and localized least-squares temporal-difference evaluation of $Q$-functions (LSTDQ) with gradient policy learning provides a related approach in the linear-quadratic setting~\cite{olsson2024scalable}. Beyond linear-quadratic models, spectral representations derive features from the transition dynamics. Random Fourier features (RFF) approximate positive-definite shift-invariant kernels by finite-dimensional trigonometric features~\cite{rahimi2007random}; further constructions and approximation results are surveyed in~\cite{liu2021random}. Spectral dynamic embeddings extend this idea to stochastic nonlinear control~\cite{ren2023spectral}. The closest predecessor, Ren et al.~\cite{ren2024scalable}, combines local spectral representations, evaluated on the boundary neighborhood required by the local dynamics, with policy optimization for continuous networked MDPs, and expresses its policy-evaluation bound through the inverse norms of the empirical least-squares temporal-difference (LSTD) matrices under stationary sampling. We analyze this type of representation under discounted-occupancy sampling, separate population prediction error from finite-sample coefficient estimation, and relate the latter to population feature excitation, its dimension dependence, and simultaneous sample-based conditioning bounds.

	\emph{Policy evaluation and optimization tools.} Our critic uses least-squares temporal-difference policy evaluation, introduced by Bradtke and Barto~\cite{bradtke1996linear} and developed further in~\cite{boyan2002technical}; Lazaric et al.~\cite{lazaric2012finite} provide a finite-sample analysis of least-squares policy iteration and its evaluation step, and finite-time linear TD analyses distinguish coefficient-error bounds, which depend on feature conditioning, from averaged prediction bounds, which need not~\cite{bhandari2018finite}; the present contribution is the corresponding separation for the local spectral representation under discounted occupancy. Population prediction error is analyzed through a projected Bellman argument following the linear-TD viewpoint of Tsitsiklis and Van Roy~\cite{tsitsiklis1997analysis}; the contraction needed here is established under discounted-occupancy sampling rather than a stationary distribution. Our actor update uses the policy-gradient identity~\cite{sutton1999policy}, and the critic and localization bounds are combined through projected stochastic-ascent analysis following the composite-optimization framework of~\cite{ghadimi2016mini}. Policy-gradient guarantees depend on the parameterization and approximation conditions~\cite{agarwal2021theory}; our target is projected stationarity within the specified local policy class. The inverse-squared dependence on the excess squared-stationarity accuracy has the same exponent as the bounded-variance stochastic first-order benchmark of~\cite{arjevani2023lower}, but the different oracle, projection, estimator bias, and fixed residual preclude a minimax-optimality claim.

	\emph{Empirical multi-agent reinforcement learning.} MADDPG uses centralized training with decentralized execution~\cite{lowe2017multi}, COMA uses a centralized critic with counterfactual baselines for credit assignment~\cite{foerster2018counterfactual}, Iqbal and Sha use attention to aggregate the information relevant to each agent in a centralized critic~\cite{iqbal2019actor}, and Yu et al.\ show that appropriately configured proximal policy optimization (PPO) can perform competitively on cooperative benchmarks~\cite{yu2022surprising}. These works provide empirical context. Our numerical comparisons include adaptations of the decentralized actor--critic method of Zhang et al.~\cite{zhang2018fully} and the attention-critic method of Iqbal and Sha cited above.

	\subsection{Contributions}

	The contributions of this paper connect the local representation used by the critic to the policy-optimization guarantee through three steps.
	\begin{enumerate}
		\item \emph{From local dynamics to population prediction error.}
		The features are evaluated on the neighboring inputs that generate the true local successor state, and the signed random-feature approximant requires an integrated error bound. We establish a global integrated transition-approximation bound for the random-phase construction and derive a local action-value representation with separate spatial and finite-feature residuals (Theorems~\ref{thm:rff_approx} and~\ref{thm:q_linear}). A projected Bellman argument then bounds population prediction error without an inverse-conditioning multiplier (Lemma~\ref{lem:pop_L2}).

		\item \emph{From feature excitation to finite-sample critic estimation.}
		Estimating the critic coefficients additionally requires control of the TD matrix. We show that, under the feature normalization, the TD conditioning constant cannot be dimension-free, derive a sufficient excitation condition for stability, and construct simultaneous lower confidence bounds along executed iterates (Proposition~\ref{prop:mu_M_obstruction}, Lemma~\ref{lem:excitation}, and Proposition~\ref{prop:diagnostic}). These results identify the conditioning dependence of the estimation bound; the diagnostic assesses the sampled run and does not enforce excitation.

		\item \emph{From critic error to localized policy improvement.}
		Combining the critic bounds with localized reward aggregation bounds the expected averaged squared projected-gradient mapping by an optimization term and an explicit residual separating spatial approximation, finite features, and omitted distant rewards (Theorems~\ref{thm:main_convergence} and~\ref{thm:sample_complexity_fixed}). For fixed neighborhoods and feature dimension, the shared-oracle sample count is inverse-squared in the excess squared-stationarity accuracy, up to logarithmic factors. The decomposition also provides an offline neighborhood-selection rule under stronger graph-tail conditions (Proposition~\ref{prop:adaptive}); a predictable-prefix argument gives the corresponding trajectory-stability result (Theorem~\ref{thm:sample_complexity_traj}). A baseline exponential decay of the local action-value functions is derived from finite propagation (Lemma~\ref{lem:baseline_decay}).
	\end{enumerate}

	\emph{Scope.}
	The analysis uses known local drift and reward functions and independent shared samples from the current discounted occupancy. Its excitation, value-decay, and smoothness conditions remain explicit premises; the diagnostic supplies statistical evidence at sampled iterates rather than verifying those premises for an entire model class. The main result bounds expected projected stationarity within the prescribed local policy class, conditional on favorable feature draws. The linear-coupled-quadratic studies illustrate related implementations; their return and reference-discrepancy metrics are distinct from the stationarity quantity of the theorems, and they do not certify the assumptions.

	Table~\ref{tab:comparison} compares the representations, sampling models, and evaluation assumptions of the closest methods.

	\begin{table*}[!t]
		\centering
		\caption{Settings of the closest scalable networked MARL analyses. ``Cont.'' indicates continuous state-action spaces; the last column names the premise under which policy evaluation is analyzed. Rates are not tabulated because targets, sampling models, sample units, and fixed parameters differ across rows.}
		\label{tab:comparison}
		\renewcommand{\arraystretch}{1.2}
		\begin{tabular}{@{}lcp{3.1cm}p{3.4cm}p{4.9cm}@{}}
			\hline
			\textbf{Method} & \textbf{Cont.} & \textbf{Critic representation} & \textbf{Sampling model} & \textbf{Evaluation premise} \\
			\hline
			Qu et al.~\cite{qu2020scalable,qu2022scalable} & $\times$ & truncated local critic & Markovian trajectory & exploration and decay assumptions \\
			Lin et al.~\cite{lin2021multi} & $\times$ & truncated local critic & Markovian trajectory, mixing & generalized decay assumptions \\
			Ren et al.~\cite{ren2024scalable} & \checkmark & local spectral features & shared i.i.d.\ samples; stationary state-action law & empirical LSTD inverse norms enter the bound~\cite[Lemma~6]{ren2024scalable} \\
			\textbf{CDCPG (this work)} & \checkmark & augmented RFF on raw local inputs & shared discounted-occupancy oracle & $\sigma_{\min}(\bbM_i^{\bbtheta})\geq\mu_M(m)$ (Set~C) or along iterates (Set~C$'$), with diagnostic \\
			\hline
		\end{tabular}
	\end{table*}

	The remainder of this paper is organized as follows. Section~\ref{sec:pre} fixes notation, the networked MDP model and objective, and Assumption Sets~A, B, and D. Section~\ref{sec:spectral} develops the spectral framework and the RFF approximation. Section~\ref{sec:main} establishes the decay lemmas and the direct linear representation of the local-reward $Q$-functions. Section~\ref{sec:algr} states the TD-stability Assumption Sets~C and C$'$ once the population TD matrix has been defined, analyzes population and finite-sample LSTD error, and presents the algorithm. Section~\ref{sec:convergence} states the convergence and complexity theorems, the diagnostic, and the locality-selection rule. Section~\ref{sec:simu} reports numerical experiments. Section~\ref{sec:conclu} concludes. Proofs are given inline or in the appendices.

	\section{Preliminaries}\label{sec:pre}

	\subsection{Notation}

	Vectors are bold lowercase, matrices bold uppercase, $\bbI_d$ the identity, $\norm{\cdot}$ the Euclidean vector norm and the induced matrix norm; $\sigma_{\min}(\bbA)$ is the smallest singular value, $\norm{\bbA}_\ast$ the nuclear norm (sum of singular values), $\lambda_{\min}(\bbA),\lambda_{\max}(\bbA)$ the extreme eigenvalues of symmetric $\bbA$, and $\sym(\bbA):=(\bbA+\bbA^\top)/2$. The network is an undirected graph $\ccalG=([n],\ccalE)$ whose maximum degree is at most $\Delta$, with $\Delta\geq 2$ a fixed upper bound (for a disconnected graph we read $\rho^{d(i,j)}:=0$ when $d(i,j)=\infty$, with $\rho$ the decay base of (B1)); agent $i$'s $\kappa$-hop neighborhood is $\ccalN_i^\kappa$ ($\ccalN_i^0=\{i\}$, $\ccalN_i:=\ccalN_i^1$), $D_\kappa:=\max_i\abs{\ccalN_i^\kappa}$, and $d(i,j)$ is graph distance. Local state spaces $\ccalS_i=\mbR^{d_S}$; local action spaces $\ccalA_i\subseteq\mbR^{d_A}$ are compact; the joint spaces are $\ccalS:=\prod_i\ccalS_i$ and $\ccalA:=\prod_i\ccalA_i$. Discount $\gamma\in(0,1)$. The notation $\widetilde\ccalO(\cdot)$ hides polylogarithmic factors in its argument and in the problem primitives. Three locality radii are fixed throughout: $\kappa_\pi\in\mbN_0$ is the \emph{policy observation radius} (each local policy observes $\bbs_{\ccalN_i^{\kappa_\pi}}$; Assumption (A3) below), $\kappa\in\mbN$ with $\kappa\geq\max\{1,\kappa_\pi+1\}$ is the \emph{critic truncation radius} (the standing inequality is used in Lemma~\ref{lem:V_decay} and Theorem~\ref{thm:q_linear}), and $\kappa_c:=\kappa+\kappa_\pi$ is the \emph{gradient aggregation radius}.

	\subsection{Networked model and problem statement}

	The joint state is $\bbs=(\bbs_1,\ldots,\bbs_n)$ and the joint action $\bba=(\bba_1,\ldots,\bba_n)$. The transition law factorizes across agents into local kernels, $\mbP(\bbs'\mid\bbs,\bba)=\prod_{i=1}^n\mbP_i(\bbs_i'\mid\bbs_{\ccalN_i},\bba_{\ccalN_i})$, each agent receives a local reward $r_i(\bbs_{\ccalN_i},\bba_{\ccalN_i})$, and the initial state is drawn as $\bbs^{(0)}\sim\mu_0$, a law that does not depend on the policy parameter; the Gaussian form of $\mbP_i$ and all regularity constants are imposed in Assumption Set~A below. Policies are local and factorized, $\pi(\bba\mid\bbs)=\prod_i\pi_i(\bba_i\mid\bbs_{\ccalN_i^{\kappa_\pi}};\bbtheta_i)$, with joint parameter $\bbtheta=(\bbtheta_1,\ldots,\bbtheta_n)$ ranging over a compact product set $\Theta_0=\prod_i\Theta_{0,i}$ specified in (A4).

	The local value and action-value functions under joint policy $\pi$ are
	\feq{
		V_i^\pi(\bbs)&:=\mbE_\pi\Bigl[\sum_{t=0}^\infty\gamma^t r_i(\bbs_{\ccalN_i}^{(t)},\bba_{\ccalN_i}^{(t)})\,\Big\vert\,\bbs^{(0)}=\bbs\Bigr],\\
		Q_i^\pi(\bbs,\bba)&:=r_i(\bbs_{\ccalN_i},\bba_{\ccalN_i})+\gamma\mbE_{\bbs'\sim\mbP}[V_i^\pi(\bbs')].
	}
	The global objective is $J(\bbtheta):=\mbE_{\bbs^{(0)}\sim\mu_0}[\sum_i V_i^\pi(\bbs^{(0)})]$ with $J^\ast:=\sup_{\bbtheta\in\Theta_0}J(\bbtheta)$. Writing $\mbP_t$ for the law of $\bbs^{(t)}$ under $\pi$ started from $\mu_0$, the \emph{discounted state occupancy} and the state-action occupancy (the composition $d^\pi(d\bbs)\,\pi(d\bba\mid\bbs)$, not a product of independent measures) are
	\feq{
		d^\pi:=d^\pi_{\mu_0}:=(1-\gamma)\sum_{t=0}^\infty\gamma^t\,\mbP_t,\qquad\nu^\pi:=d^\pi\otimes\pi.
	}
	Since $\Theta_0$ is nonempty, compact, and convex by (A4), the ascent projected-gradient mapping
	\feq{
		\ccalG_\eta(\bbtheta):=\eta^{-1}\bigl(\Pi_{\Theta_0}(\bbtheta+\eta\nabla J(\bbtheta))-\bbtheta\bigr)
	}
	is the stationarity object: it coincides with $\nabla J(\bbtheta)$ whenever the unprojected update $\bbtheta+\eta\nabla J(\bbtheta)$ remains in $\Theta_0$, in particular in the unconstrained case, and $\norm{\ccalG_\eta}\leq\norm{\nabla J}$ pointwise by nonexpansiveness of $\Pi_{\Theta_0}$. Under (A4), the product structure $\Theta_0=\prod_i\Theta_{0,i}$ gives $\Pi_{\Theta_0}(\bbtheta+\eta\bbv)=(\Pi_{\Theta_{0,i}}(\bbtheta_i+\eta\bbv_i))_{i=1}^n$, so Algorithm~\ref{alg:cdcpg}'s blockwise update equals the global projection step required by the convergence analysis. We measure progress by the \emph{averaged projected-gradient-mapping stationarity}, $n^{-1}\norm{\ccalG_\eta(\bbtheta)}^2\leq\epsilon$. This per-agent normalization is an exact rescaling of the aggregate mapping: for the averaged objective $J/n$ with step size $n\eta$ one has $\ccalG^{J/n}_{n\eta}=\ccalG_\eta/n$, hence $n^{-1}\norm{\ccalG_\eta}^2=n\,\lVert\ccalG^{J/n}_{n\eta}\rVert^2$; comparisons with results stated for the averaged objective must therefore account for this factor of $n$, and comparisons with generic complexity lower bounds are made at the level of the $\epsilon$-exponent only. The policy gradient theorem~\cite{sutton1999policy} gives, under (A2)--(A3) (differentiation under the integral is justified as in Appendix~\ref{app:LJ_proof}, Step~0, whose first-derivative bounds use only (A3)),
	\begin{equation}\label{eq:pg_theorem}
		\begin{aligned}
			\nabla J(\bbtheta)
			&=\tfrac{1}{1-\gamma}\mbE_{\nu^\pi}\Bigl[\textstyle\sum_{i=1}^n Q_i^\pi(\bbs,\bba)\,\nabla_{\bbtheta}\log\pi(\bba\mid\bbs)\Bigr],
		\end{aligned}
	\end{equation}
	with the occupancy pair $(d^\pi,\nu^\pi)$ as defined above. Since $\pi(\bba\mid\bbs)=\prod_j\pi_j(\bba_j\mid\bbs_{\ccalN_j^{\kappa_\pi}};\bbtheta_j)$ factorizes across agents and only $\pi_i$ depends on $\bbtheta_i$, the per-agent gradient takes the form
	\begin{equation}\label{eq:pg_theorem_agent}
		\begin{aligned}
			\nabla_{\bbtheta_i}J(\bbtheta)
			&=\tfrac{1}{1-\gamma}\mbE_{\nu^\pi}\Bigl[\textstyle\sum_{\ell=1}^n Q_\ell^\pi(\bbs,\bba)\\
			&\qquad\qquad\quad\times\nabla_{\bbtheta_i}\log\pi_i(\bba_i\mid\bbs_{\ccalN_i^{\kappa_\pi}})\Bigr].
		\end{aligned}
	\end{equation}
	This per-agent form is the basis of Algorithm~\ref{alg:cdcpg}'s gradient estimator, constructed in Section~\ref{sec:algr}.

	\subsection{Assumptions}

	We group the assumptions into four sets. The model set~A, the decay set~B, and the sampling set~D are stated here; the TD-stability sets~C and~C$'$ constrain the population temporal-difference matrix built from the critic features, and are therefore stated in Section~\ref{sec:algr}, immediately after that matrix has been defined.

	\begin{assumption}[Set A: Model and regularity]\label{asm:A}
		\hfill
		\begin{enumerate}
			\item[\emph{(A1)}] Local dynamics $\bbs_i'=f_i(\bbs_{\ccalN_i},\bba_{\ccalN_i})+\bbvarepsilon_i$ with $f_i$ deterministic and continuous, $\sup\norm{f_i}\leq B_f$, and $\bbvarepsilon_i\sim\ccalN(\bbzero,\sigma^2\bbI_{d_S})$ independent across $i,t$, for a fixed $\sigma>0$.
			\item[\emph{(A2)}] Local rewards are measurable with $\abs{r_i(\bbs_{\ccalN_i},\bba_{\ccalN_i})}\leq\bbarr$.
			\item[\emph{(A3)}] For each $i$, let $U_i$ be a bounded open convex neighborhood of $\Theta_{0,i}$. The density $\pi_i(\bba_i\mid\bbs_{\ccalN_i^{\kappa_\pi}};\bbtheta_i)$ is jointly measurable in action, state, and parameter, vanishes off a fixed measurable support $\ccalA_i^{\rm supp}\subseteq\ccalA_i$, and is strictly positive on that support outside a common action-null set, for every conditioning state and every $\bbtheta_i\in U_i$. It is differentiable in $\bbtheta_i$ on $U_i$ with $\norm{\nabla_{\bbtheta_i}\log\pi_i}\leq G$ uniformly over $\bbtheta_i\in U_i$, the conditioning state, and $\bba_i\in\ccalA_i^{\rm supp}$ outside that null set. The joint policy factors as $\pi(\bba\mid\bbs)=\prod_i\pi_i(\bba_i\mid\bbs_{\ccalN_i^{\kappa_\pi}})$.
			\item[\emph{(A4)}] The parameter space has a product structure $\Theta_0=\prod_{i=1}^n\Theta_{0,i}\subseteq\mbR^{nd_\theta}$, with each $\Theta_{0,i}\subset\mbR^{d_\theta}$ compact convex containing $\bbtheta_i^{(0)}$.
		\end{enumerate}
	\end{assumption}

	Under (A2), $V_i^\pi$ and $Q_i^\pi$ are uniformly bounded by $Q_{\max}:=\bbarr/(1-\gamma)$; the map $\bbtheta\mapsto J(\bbtheta)$ is continuous on $\Theta_0$ under (A2)--(A3) by dominated convergence, so $J^\ast$ is attained on the compact $\Theta_0$, and $\Delta_J:=J^\ast-J(\bbtheta^{(0)})\leq 2nQ_{\max}=\ccalO(n)$.

	Compactly parameterized exponential families with uniformly bounded sufficient statistics have bounded parameter scores. Conventional nondegenerate Gaussian location policies followed by a $\tanh$ action transformation have unbounded parameter scores near the action boundary, so we use bounded-score policy classes as an explicit premise. Assumption~(A3) is invoked to justify bounded-gradient policy-gradient concentration together with the locality and decay arguments; no uniform upper or lower bound on the policy density is imposed, only strict positivity on a fixed common support, which underwrites the log-density comparison in the continuity argument for the population TD matrix (Lemma~\ref{lem:M_continuity}).

	\begin{assumption}[Set B: Decay structure]\label{asm:B}
		\hfill
		\begin{enumerate}
			\item[\emph{(B1)}] Every policy in the considered class satisfies $(c,\rho)$-exponential decay: for all $i\in[n]$, $\kappa\in\mbN$, and $(\bbs,\bba),(\bbs',\bba')$ agreeing on $\ccalN_i^\kappa$ at the state-action level,
			\feq{
				\abs{Q_i^\pi(\bbs,\bba)-Q_i^\pi(\bbs',\bba')}\leq c\rho^{\kappa+1},\label{eq:exp_decay}
			}
			with constants $c>0$ and $\rho\in(0,1)$ independent of $n$.
		\end{enumerate}
	\end{assumption}

	Set~B contains a single item. Lemma~\ref{lem:baseline_decay} in Section~\ref{sec:main} shows that (B1) always holds with an explicit baseline pair $(c_0,\rho_0)$ obtained from finite propagation and bounded rewards alone. The assumption content of (B1) is therefore the rate: the graph-tail conditions $(\Delta-1)\rho<1$ (used in Lemma~\ref{lem:grad_bias} and (D3)(iii)) and $\Delta\rho<1$ (used in Proposition~\ref{prop:adaptive}) require $\rho$ smaller than the baseline $\rho_0$ unless the discount factor is small relative to the graph growth; on the path graph of Section~\ref{sec:simu}, $\Delta=2$ and $(\Delta-1)\rho_0<1$ holds automatically. Appendix~\ref{app:lipschitz} gives the finite-propagation proof; a faster rate is not derived here. A common decay base for (B1) and (D3)(ii) can always be taken as the larger of the two, provided the required tail inequality survives. The boundedness of the spectral splitting factor used by the feature construction is not assumed but derived from (A1) in Section~\ref{sec:spectral}. Dobrushin-type influence criteria~\cite{dobrushin1968description} and spatial correlation decay in local filtering~\cite{rebeschini2015can} motivate sufficient conditions for faster spatial decay; verifying such a criterion for the present policy-controlled Gaussian model is a separate task, and those filtering assumptions do not automatically verify the controlled-MDP decay premise.

	\begin{assumption}[Set D: Sampling and smoothness]\label{asm:D}
		\hfill
		\begin{enumerate}
			\item[\emph{(D1)}] At each iteration $k$, a shared dataset $\ccalD_s^{(k)}$ of $M_s$ i.i.d.\ transitions $(\bbs,\bba,\bbs',\bba')$ with $\bbs\sim d^{\pi^{(k)}}$, $\bba\sim\pi^{(k)}(\cdot\mid\bbs)$, $\bbs'\sim\mbP(\cdot\mid\bbs,\bba)$, $\bba'\sim\pi^{(k)}(\cdot\mid\bbs')$ is drawn and shared across agents for LSTD; a separate shared dataset $\ccalD_g^{(k)}$ of $M_g$ pairs $(\bbs,\bba)\sim d^{\pi^{(k)}}\otimes\pi^{(k)}$ is drawn for gradient estimation. The random-feature realization is sampled once at preprocessing and held fixed. Conditional on the current iterate and all preprocessing and previous-iteration randomness, the critic and actor batches are independent and have the stated product laws. \emph{(Local model access.)} Each agent $i$ knows the drift functions $f_j$ for $j\in\ccalN_i^{\kappa_c+\kappa}$, the reward functions $r_\ell$ for $\ell\in\ccalN_i^{\kappa_c}$, and the noise scale $\sigma$, and the random-feature draws of all $\ell\in\ccalN_i^{\kappa_c}$ are reproducible from pseudorandom seeds broadcast once at preprocessing; these quantities are exactly what is needed to evaluate the augmented features of Definition~\ref{def:aug_features} for agent $i$ and its $\kappa_c$-neighbors. The feature construction is thus \emph{model-assisted}: transition and reward functions are known locally, while the occupancy measure is accessed only through samples.
			\item[\emph{(D2)}] $J(\bbtheta)$ is $L_J$-smooth on $\Theta_0$, i.e.\ $\nabla J$ is $L_J$-Lipschitz on $\Theta_0$. We treat $L_J$ as a primitive constant; under the additional regularity of (D3) below, $L_J$ is moreover independent of the network size $n$, and Appendix~\ref{app:LJ_proof} derives this $n$-uniformity from (D3).
			\item[\emph{(D3)}] \emph{(Differentiated regularity, used only for the $n$-uniformity of $L_J$.)} (i) Each local policy density is twice continuously differentiable in $\bbtheta_i$ on the neighborhood $U_i$ of (A3), and its score Jacobian has operator norm at most $G'$ uniformly over $\bbtheta_i\in U_i$, the conditioning state, and the action on the support outside the common null set of (A3); under (A2)--(A3) and (i), $J$ is then twice continuously differentiable on $\Theta_0$ (Appendix~\ref{app:LJ_proof}, Step~0). (ii) \emph{(Differentiated decay.)} For the network family under consideration, there exist constants $c_\partial<\infty$ and $\rho\in(0,1)$ (the decay base of (B1)), independent of $n$, such that the mixed parameter curvature of the objective decays geometrically in graph distance: for every pair $i,j$ and every $\bbtheta\in\Theta_0$,
			\feq{
				\norm{[\nabla^2 J(\bbtheta)]_{ij}}\leq c_\partial\,\rho^{\,d(i,j)},
			}
			where $[\nabla^2 J]_{ij}:=\partial^2_{\bbtheta_i\bbtheta_j}J\in\mbR^{d_\theta\times d_\theta}$ is the cross-parameter Hessian block. This is the curvature analogue of the value-decay hypothesis (B1): whereas (B1) postulates that agent $i$'s value couples weakly to distant agents' state-actions, (ii) postulates that the cross-parameter curvature couples agents $i$ and $j$ weakly at graph distance. It does not follow from (B1): differentiating the policy-gradient identity in $\bbtheta_j$ produces value-sensitivity, occupancy-sensitivity, and score-derivative terms, and the score-derivative term contains the sum of all local rewards, so termwise bounds can be of order $n$ and do not establish decay of the full Hessian block; (B1) alone does not control these mixed second-order differences (Appendix~\ref{app:LJ_proof}). (iii) The graphs of the family have the common maximum-degree upper bound $\Delta$, and the graph-tail hypothesis $\tilde\delta:=(\Delta-1)\rho<1$ holds.
		\end{enumerate}
	\end{assumption}

	Sample counts in (D1) measure generative-oracle calls; a standard transition oracle does not by itself provide an independent discounted-occupancy draw at unit cost. One realization restarts from $\mu_0$, draws a geometric time $T$ with $\mbP(T=t)=(1-\gamma)\gamma^t$, and simulates to that time, so the expected rollout length is of order $1/(1-\gamma)$ environment steps per sample, and the critic tuple additionally requires one successor transition. The $n$-uniformity of $L_J$ in (D2)--(D3) is essential to the network-size scalability claim of Theorem~\ref{thm:sample_complexity_fixed}; if $L_J$ instead scales with $n$ (for example, under a sum-of-values objective without the differentiated-decay structure of (D3)), the descent term of the complexity inherits the same scaling. Appendix~\ref{app:LJ_proof} derives the $n$-uniform smoothness constant from (D3); (D3) is a separate premise, not a consequence of (B1), and it is needed only for the network-family claims, not for the fixed-$n$ theorems.

	\section{Spectral Representation Framework}\label{sec:spectral}

	For $I=\ccalN_i^\kappa$, the conditional density of the successor state on $I$ is Gaussian with mean $\bbf_{i,\kappa}(\bbZ)$ defined below, i.e.\ $\mbP_{i,\kappa}=\prod_{j\in I}\mbP_j$ with $\mbP_j(\bbs_j'\mid\cdot)=k(\bbs_j'-f_j(\cdot))$ and $k(\bbDelta)=(2\pi\sigma^2)^{-d_S/2}\exp(-\norm{\bbDelta}^2/(2\sigma^2))$. The Gaussian function $k_0(\bbx-\bby)$ below is shift invariant in its two Euclidean arguments and admits a Fourier representation; we apply this representation after substituting the local drift into the first argument. The transition map itself need not be translation invariant in its state-action input.

	Let $d_{i,\kappa}:=\abs{\ccalN_i^\kappa}\cdot d_S$. For each $i\in[n]$ and $\ell\in[m]$, draw $\bbomega_{i,\ell}\sim\ccalN(\bbzero,\sigma^{-2}\bbI_{d_{i,\kappa}})$ and $b_{i,\ell}\sim\mathrm{Unif}[0,2\pi]$, with all frequency and phase draws mutually independent. Fix $\alpha\in(0,1)$ and define
	\begin{equation}\label{eq:g_alpha_def}
		\begin{aligned}
			g_\alpha^{(i)}(\bbs,\bba)&:=(2\pi\sigma^2)^{-d_{i,\kappa}/2}\exp\!\Bigl(\tfrac{\alpha^2\norm{\bbf_{i,\kappa}}^2}{2\sigma^2(1-\alpha^2)}\Bigr),\\
			p_\alpha^{(i)}(\bbs_{\ccalN_i^\kappa}')&:=\exp\!\Bigl(-\tfrac{\alpha^2\norm{\bbs_{\ccalN_i^\kappa}'}^2}{2\sigma^2}\Bigr),
		\end{aligned}
	\end{equation}
	with $\bbf_{i,\kappa}:=(f_j)_{j\in\ccalN_i^\kappa}$, a function of the $(\kappa{+}1)$-neighborhood input $\bbZ:=(\bbs_{\ccalN_i^{\kappa+1}},\bba_{\ccalN_i^{\kappa+1}})$, giving the splitting identity
	\begin{equation}\label{eq:p_splitting_identity}
		\begin{aligned}
			&\mbP_{i,\kappa}(\bbs_{\ccalN_i^\kappa}'\mid\cdot)\\
			&\quad=g_\alpha^{(i)}\cdot p_\alpha^{(i)}\cdot k_0\!\Bigl(\tfrac{\bbf_{i,\kappa}}{\sqrt{1-\alpha^2}}-\sqrt{1-\alpha^2}\,\bbs_{\ccalN_i^\kappa}'\Bigr),
		\end{aligned}
	\end{equation}
	$k_0(\bbDelta)=\exp(-\norm{\bbDelta}^2/(2\sigma^2))$. Under (A1), $\norm{\bbf_{i,\kappa}}\leq\sqrt{D_\kappa}B_f$, so the splitting factor is explicitly and uniformly bounded:
	\feq{&\sup_{i,\bbZ}g_\alpha^{(i)}(\bbZ)\leq\bbarg_\alpha(\kappa):=\\
		&\qquad\qquad\max\{1,(2\pi\sigma^2)^{-1}\}^{D_\kappa d_S/2}\exp\!\Bigl(\tfrac{\alpha^2 D_\kappa B_f^2}{2\sigma^2(1-\alpha^2)}\Bigr),\label{eq:g_alpha_bound}
	}
	a derived bound (not an assumption) that is finite for fixed $\kappa$ but exponential in the neighborhood dimension $D_\kappa d_S$; this dependence enters $\tilde g_\alpha$ and the resulting approximation constants; locality removes the full network dimension at a fixed radius, but high-dimensional local approximation can still be expensive.

	\begin{definition}[Normalized RFF]\label{def:rff}
		The raw RFF features are
		\feq{
			\phi_{i,\kappa}^{\rm raw}(\bbZ)&:=g_\alpha^{(i)}(\bbZ)\sqrt{\tfrac{2}{m}}\\
			&\quad\times\Bigl\{\cos\!\Bigl(\tfrac{\bbomega_{i,\ell}^\top\bbf_{i,\kappa}(\bbZ)}{\sqrt{1-\alpha^2}}+b_{i,\ell}\Bigr)\Bigr\}_{\ell=1}^m,\\
			\mu_{i,\kappa}^{\rm raw}(\bby)&:=p_\alpha^{(i)}(\bby)\sqrt{\tfrac{2}{m}}\\
			&\quad\times\Bigl\{\cos\!\bigl(\sqrt{1-\alpha^2}\,\bbomega_{i,\ell}^\top\bby+b_{i,\ell}\bigr)\Bigr\}_{\ell=1}^m.
		}
		The normalized features $\hhatphi_{i,\kappa}:=\phi_{i,\kappa}^{\rm raw}/\bbarg_\alpha$ and $\hhatmu_{i,\kappa}:=\mu_{i,\kappa}^{\rm raw}\bbarg_\alpha$ preserve the inner product and satisfy $\norm{\hhatphi_{i,\kappa}}^2\leq 2$. The finite-feature approximant $\inner{\hhatphi_{i,\kappa}}{\hhatmu_{i,\kappa}}$ is a signed function and need not be a (sub)probability kernel; it enters the analysis only through its $\ccalL^1$ distance to $\mbP_{i,\kappa}$ against bounded integrands.
	\end{definition}

	The feature map is the \emph{random-phase} (single-cosine) construction of~\cite{rahimi2007random}. Its kernel estimator $\tfrac2m\sum_\ell\cos(\bbomega_\ell^\top\bbx+b_\ell)\cos(\bbomega_\ell^\top\bby+b_\ell)$ equals $\tfrac1m\sum_\ell\cos(\bbomega_\ell^\top(\bbx-\bby))+\tfrac1m\sum_\ell\cos(\bbomega_\ell^\top(\bbx+\bby)+2b_\ell)$; the second term is mean zero but is a nonzero random process, so the estimator is not a function of $\bbx-\bby$ alone, and the uniform bound stated for the paired sine--cosine map cannot be imported verbatim. The distinction between the paired and random-phase maps, including their different uniform-error bounds, is analyzed in~\cite[Sec.~2.2.1]{sutherland2015error}. We prove a bound in the form needed here and convert it into a global integrated error bound for the local transition density.

	\begin{theorem}[RFF approximation]\label{thm:rff_approx}
		Let $\tilde g_\alpha:=\bbarg_\alpha\max\{1,(2\pi\sigma^2/\alpha^2)\}^{D_\kappa d_S/2}$, $R_2(\epsilon_P):=\sqrt{D_\kappa}B_f+(\sigma/\alpha)(\sqrt{D_\kappa d_S}+\sqrt{2\log(8\tilde g_\alpha/\epsilon_P)})$, $D_\ccalR^{\rm int}(\epsilon_P):=\sqrt{D_\kappa}B_f/\sqrt{1-\alpha^2}+2R_2(\epsilon_P)$, and $K_\delta:=4n\sqrt{D_\kappa d_S}/(\sigma\delta_{\rm rff})+\sigma^{-1}$. For $\delta_{\rm rff}\in(0,1)$, $\epsilon_P\in(0,1]$, if
		\begin{equation}\label{eq:rff_feasibility}
			\begin{aligned}
				m&\geq C_{\rm RFF}(D_\kappa d_S)\,\frac{\tilde g_\alpha^2}{\epsilon_P^2}\\
				&\quad\times\log\!\Bigl(\frac{4n}{\delta_{\rm rff}}\Bigl(1+\frac{16\,D_\ccalR^{\rm int}(\epsilon_P)\,K_\delta\,\tilde g_\alpha}{\epsilon_P}\Bigr)\Bigr),
			\end{aligned}
		\end{equation}
		with $C_{\rm RFF}(d):=128(2d+1)$, then the favorable RFF event
		\feq{
			E_{\rm rff}:=\Bigl\{\sup_{i,\bbZ}\norm{\mbP_{i,\kappa}(\cdot\mid\bbZ)-\inner{\hhatphi_{i,\kappa}(\bbZ)}{\hhatmu_{i,\kappa}(\cdot)}}_{\ccalL^1}\leq\epsilon_P\Bigr\}
		}
		satisfies $\mbP(E_{\rm rff})\geq 1-\delta_{\rm rff}$. We define $\epsilon_P(m,\delta_{\rm rff},n):=\inf\{\epsilon_P\in(0,1]:\text{\eqref{eq:rff_feasibility} holds}\}$; if the feasible set is nonempty its minimum is attained, because the right-hand side of~\eqref{eq:rff_feasibility} is continuous and strictly decreasing on $(0,1]$ and diverges as $\epsilon_P\to0$, so the feasible set is an interval $[\epsilon_0,1]$ with $\epsilon_0>0$ (if the feasible set is empty we set $\epsilon_P(m,\delta_{\rm rff},n):=+\infty$ and every bound involving it is vacuous); the minimum satisfies $\epsilon_P(m)^2=\widetilde\ccalO(C_{\rm RFF}(D_\kappa d_S)\,\tilde g_\alpha^2/m)=\widetilde\ccalO(\tilde g_\alpha^2 D_\kappa d_S/m)$, where the polylogarithmic factor contains $\log(n/\delta_{\rm rff})$. The random-feature dimension $m$ is a problem primitive fixed before the target stationarity tolerance $\epsilon$ is sent to zero in Theorem~\ref{thm:sample_complexity_fixed}.
	\end{theorem}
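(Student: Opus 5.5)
The proof reduces the $\ccalL^1$ error to a uniform kernel-approximation error on a bounded region plus Gaussian tail mass, and then controls the uniform error by an $\epsilon$-net argument with a union bound over the $n$ agents.

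\emph{Step 1: reduction to a kernel error.} Fix $i$ and write $\bbx:=\bbf_{i,\kappa}(\bbZ)/\sqrt{1-\alpha^2}$ and $\bby:=\sqrt{1-\alpha^2}\,\bbs'_{\ccalN_i^\kappa}$. By the splitting identity \eqref{eq:p_splitting_identity} and Definition~\ref{def:rff},
\feq{
	\mbP_{i,\kappa}-\inner{\hhatphi_{i,\kappa}}{\hhatmu_{i,\kappa}}=g_\alpha^{(i)}\,p_\alpha^{(i)}(\bbs')\,\bigl(k_0(\bbx-\bby)-\hat k_m(\bbx,\bby)\bigr),
}
where $\hat k_m(\bbx,\bby):=\tfrac2m\sum_\ell\cos(\bbomega_\ell^\top\bbx+b_\ell)\cos(\bbomega_\ell^\top\bby+b_\ell)$. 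Since $\bbomega\sim\ccalN(\bbzero,\sigma^{-2}\bbI)$, Bochner's theorem gives $\mbE\hat k_m=k_0(\bbx-\bby)$. The second term in the random-phase decomposition has zero mean because $b_\ell$ is uniform.

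\emph{Step 2: splitting the $\ccalL^1$ integral.} Split the $\bbs'$-integral into the ball $\norm{\bbs'}\leq R_2$ and its complement.
\begin{itemize}
	\item On the complement, $|\mbP_{i,\kappa}|$ and $g_\alpha p_\alpha\abs{\hat k_m}\leq 2g_\alpha p_\alpha$ are both bounded by Gaussian profiles. The first is centered at $\bbf$ with variance $\sigma^2$; the second has the weight $p_\alpha$ of variance $\sigma^2/\alpha^2$ and total mass $\leq\bbarg_\alpha(2\pi\sigma^2/\alpha^2)^{D_\kappa d_S/2}\leq\tilde g_\alpha$.
	\item Standard $\chi$-tail bounds with $R_2=\sqrt{D_\kappa}B_f+(\sigma/\alpha)(\sqrt{D_\kappa d_S}+\sqrt{2\log(8\tilde g_\alpha/\epsilon_P)})$ make the tail contributions at most $\epsilon_P/2$ in total. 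Since $\alpha<1$, the radius also dominates the $\sigma$-scale tail of $\mbP_{i,\kappa}$.
	\item On the ball, the contribution is at most $\tilde g_\alpha\sup\abs{k_0-\hat k_m}$ times the $p_\alpha$-weighted mass, which is already absorbed in $\tilde g_\alpha$. It therefore suffices to show $\sup_{\bbx,\bby}\abs{\hat k_m-k_0}\leq\epsilon_P/(2\tilde g_\alpha)$ over $\norm{\bbx}\leq\sqrt{D_\kappa}B_f/\sqrt{1-\alpha^2}$ and $\norm{\bby}\leq R_2$.
\end{itemize}

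\emph{Step 3: uniform bound via an $\epsilon$-net.} This is the main obstacle, because $\hat k_m$ is not shift invariant and the paired-feature bound cannot simply be imported. I would treat $\hat k_m$ as a function of the pair $(\bbx,\bby)\in\mbR^{2d}$, with $d=D_\kappa d_S$, on a set whose diameter is at most $D_\ccalR^{\rm int}$.
\begin{itemize}
	\item Each summand is bounded by $2/m$, so Hoeffding's inequality at a fixed point gives a tail $2\exp(-m t^2/8)$.
	\item Lipschitz continuity in $(\bbx,\bby)$ is controlled by $\tfrac2m\sum_\ell\norm{\bbomega_\ell}$ plus the Lipschitz constant of $k_0$. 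By Markov's inequality and $\mbE\norm{\bbomega}\leq\sqrt d/\sigma$, this is at most $K_\delta$ with probability $1-\delta_{\rm rff}/(4n)$ simultaneously over the $n$ agents. This is where the factor $4n/\delta_{\rm rff}$ inside $K_\delta$ comes from.
	\item A net of mesh $t/(4K_\delta)$ on the $2d$-dimensional set has cardinality at most $(1+4D_\ccalR^{\rm int}K_\delta/t)^{2d}$.
\end{itemize}
Set $t=\epsilon_P/(4\tilde g_\alpha)$ and take a union bound over the net points and over the $n$ agents. Requiring the total failure probability to be at most $\delta_{\rm rff}/2$ yields a condition of the form $m\geq 128(2d+1)\,\tilde g_\alpha^2\epsilon_P^{-2}\log\bigl(\tfrac{4n}{\delta_{\rm rff}}(1+16D_\ccalR^{\rm int}K_\delta\tilde g_\alpha/\epsilon_P)\bigr)$. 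The constants will need care: the exponent $2d$ is absorbed by the factor $2d+1$ after taking logarithms, and the slack between $2\log$ and $\log$ is handled by the constant $128$. Combining Steps 2 and 3 gives $E_{\rm rff}$ with probability at least $1-\delta_{\rm rff}$.

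\emph{Step 4: properties of $\epsilon_P(m,\delta_{\rm rff},n)$.} On $(0,1]$, the terms $R_2$ and $D_\ccalR^{\rm int}$ decrease in $\epsilon_P$, so the right-hand side of \eqref{eq:rff_feasibility} is continuous, strictly decreasing, and tends to $\infty$ as $\epsilon_P\to0$. The feasible set is therefore an interval $[\epsilon_0,1]$ and the infimum is attained. Inverting \eqref{eq:rff_feasibility}, and noting that the logarithm is polylogarithmic in $m$, $n/\delta_{\rm rff}$, and $\tilde g_\alpha$, gives $\epsilon_P(m)^2=\widetilde\ccalO(\tilde g_\alpha^2D_\kappa d_S/m)$.
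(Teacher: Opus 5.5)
Your proposal is correct and follows essentially the paper's proof. The steps match: the splitting identity reduces the $\ccalL^1$ error to a uniform bound for the two-argument random-phase estimator on a compact core, plus the true-Gaussian tail and the $2g_\alpha^{(i)}p_\alpha^{(i)}$ envelope tail; the uniform bound then comes from a Markov bound on $\tfrac1m\sum_\ell\norm{\bbomega_{i,\ell}}$, a Lipschitz/net argument (your single net in $\mbR^{2d}$ is equivalent to the paper's pairs of net points of a ball in $\mbR^{d}$), Hoeffding at the net points, a union bound over agents, and the same monotonicity argument for $\epsilon_P(m,\delta_{\rm rff},n)$. Only the bookkeeping you flagged needs repair: Markov gives failure probability $\delta_{\rm rff}/(2n)$ per agent (so $\delta_{\rm rff}/2$ overall, which still fits the budget), and the count $(1+4D_\ccalR^{\rm int}K_\delta/t)^{2d}$ is justified at mesh $t/(2K_\delta)$ with covering radius $D_\ccalR^{\rm int}$ rather than at mesh $t/(4K_\delta)$; taking your $t=\epsilon_P/(4\tilde g_\alpha)$ as the Hoeffding deviation, this still gives a core error of at most $\tfrac{3}{2}t<\epsilon_P/(2\tilde g_\alpha)$ and reproduces exactly the constant in~\eqref{eq:rff_feasibility}.
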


	The proof (Appendix~\ref{app:rff_proof}) combines (i) a uniform bound for the random-phase estimator on a scaled compact core at tolerance $\epsilon_P/(2\tilde g_\alpha)$, obtained from a Markov bound on the empirical frequency norms, a Lipschitz argument, a deterministic net, and Hoeffding's inequality at the net points, (ii) integration of the weighted core error against the Gaussian factor $p_\alpha^{(i)}$, (iii) the true Gaussian-transition tail, and (iv) the tail of the signed approximant via the pointwise envelope $2g_\alpha^{(i)}p_\alpha^{(i)}$. The constant $C_{\rm RFF}(d)=128(2d+1)$ is conservative; the order $d/m$ in $\epsilon_P^2$ is what the analysis uses. The intrinsic boundedness $\norm{\bbf_{i,\kappa}}\leq\sqrt{D_\kappa}B_f$ from (A1) makes the conclusion global in $\bbZ$; no confinement of the current state is required.

	\section{Exponential Decay and $Q$-Function Representation}\label{sec:main}

	\subsection{Exponential decay}

	\begin{lemma}[Baseline decay from finite propagation]\label{lem:baseline_decay}
		Under (A2)--(A3) and the local transition structure of Section~\ref{sec:pre}, every policy in the class satisfies (B1) with
		\feq{
			\rho_0:=\gamma^{1/(\kappa_\pi+1)},\qquad c_0:=\frac{2\bbarr}{(1-\gamma)\rho_0}.
		}
		The bound uses only bounded rewards, one-hop transitions, and $\kappa_\pi$-hop policies; it does not require contraction of the drift.
	\end{lemma}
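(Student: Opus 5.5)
We prove the bound by a coupling argument built on finite propagation speed. Fix $i$, $\kappa\in\mbN$, and two pairs $(\bbs,\bba),(\bbs',\bba')$ that agree on $\ccalN_i^\kappa$. We run the two processes with shared noise: the same Gaussian increments $\bbvarepsilon_j^{(t)}$ from (A1), and the same action randomness for each local policy $\pi_j$. Action coupling is done through a common uniform variable per agent and time, realized by the conditional quantile or inverse-transform map. This works because $\pi_j$ depends only on $\bbs_{\ccalN_j^{\kappa_\pi}}$, so identical conditioning states yield identical actions under the coupling. Under this coupling we define the agreement set $A_t\subseteq[n]$ as the set of agents whose state at time $t$ coincides in the two runs. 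For the action at time $0$ we also track agreement on $\ccalN_i^\kappa$ at the state-action level.

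\textbf{Step 1: propagation.} Set $r(t)$ to be the radius with $\ccalN_i^{r(t)}\subseteq A_t$, with states and actions agreeing there. At $t=0$ we have $r(0)=\kappa$, counting state-action agreement. The update $\bbs_j^{(t+1)}=f_j(\bbs_{\ccalN_j},\bba_{\ccalN_j})+\bbvarepsilon_j$ shows that agreement of states and actions on $\ccalN_i^{r}$ yields agreement of states on $\ccalN_i^{r-1}$ at the next step. The coupled actions at that step then agree on $\ccalN_i^{r-1-\kappa_\pi}$. Hence one full step costs at most $\kappa_\pi+1$ hops of radius, and the state-action agreement radius satisfies $r(t)\geq\kappa-t(\kappa_\pi+1)$. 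The reward $r_i(\bbs_{\ccalN_i},\bba_{\ccalN_i})$ depends on $\ccalN_i^1$, and it is identical in both runs as long as $r(t)\geq1$. So the two reward streams coincide pathwise for every $t\le T:=\lfloor(\kappa-1)/(\kappa_\pi+1)\rfloor$.

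\textbf{Step 2: tail bound.} Because $Q_i^\pi(\bbs,\bba)=\mbE[\sum_t\gamma^t r_i^{(t)}\mid\bbs^{(0)}=\bbs,\bba^{(0)}=\bba]$, the coupling gives
\[
|Q_i^\pi(\bbs,\bba)-Q_i^\pi(\bbs',\bba')|\le\sum_{t>T}2\bbarr\gamma^t=\frac{2\bbarr\gamma^{T+1}}{1-\gamma}.
\]
Using $T+1\geq(\kappa-1)/(\kappa_\pi+1)+\tfrac{1}{\kappa_\pi+1}\cdot\text{(slack)}$, we have in any case $T+1\ge \kappa/(\kappa_\pi+1)$, since $\lfloor(\kappa-1)/(\kappa_\pi+1)\rfloor+1\ge \kappa/(\kappa_\pi+1)$. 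This gives $\gamma^{T+1}\le\rho_0^{\kappa}=\rho_0^{\kappa+1}/\rho_0$, and the bound equals $c_0\rho_0^{\kappa+1}$ with $c_0=2\bbarr/((1-\gamma)\rho_0)$.

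\textbf{Main obstacle.} The delicate step is the measure-theoretic coupling of actions on continuous spaces. When the conditioning states coincide we need identical actions, and we need a joint measurable construction in which the marginal law of each run is exactly the original process. I would handle this via the kernel representation: (A3) gives a jointly measurable density, so there is a measurable map $h_j(\bbs_{\ccalN_j^{\kappa_\pi}},u)$ with $u\sim\mathrm{Unif}[0,1]$ whose output has law $\pi_j(\cdot\mid\bbs_{\ccalN_j^{\kappa_\pi}})$, by the standard Borel randomization lemma. The second point to check is the off-by-one accounting of the reward's one-hop dependence together with the time-$0$ convention that the actions agree on $\ccalN_i^\kappa$. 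No contraction of $f_j$ is used: disagreement is never bounded in magnitude, only its support is tracked.
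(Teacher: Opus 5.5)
Your proof is correct and follows essentially the same route as the paper's. Both use a shared-randomization coupling with the time-zero actions held at their prescribed values, a loss of at most $\kappa_\pi+1$ hops of agreement radius per step, reward agreement for $t\le\lfloor(\kappa-1)/(\kappa_\pi+1)\rfloor=\lceil\kappa/(\kappa_\pi+1)\rceil-1$, and the resulting geometric tail $\frac{2\bar r}{1-\gamma}\gamma^{\kappa/(\kappa_\pi+1)}$.

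There is one adjustment. The lemma assumes only the factorized local kernels of Section~\ref{sec:pre}, not (A1). You should therefore couple the transitions with the same measurable uniform randomization you already use for the actions; shared Gaussian increments are just the special case the paper also notes. Separately, the garbled ``(slack)'' inequality in your Step~2 can be deleted, since the ceiling identity you state right after it is all you need.
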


	The proof (Appendix~\ref{app:lipschitz}) couples two trajectories through common transition and action-sampling randomization, with the prescribed time-zero actions retained: one transition and the subsequent local action selection reduce the guaranteed agreement radius by at most $\kappa_\pi+1$, so the rewards of agent $i$ agree for the first $\lceil\kappa/(\kappa_\pi+1)\rceil$ steps and the remaining discounted tail is at most $2\bbarr\gamma^{\lceil\kappa/(\kappa_\pi+1)\rceil}/(1-\gamma)$. Faster decay, needed for the graph-tail conditions when $(\Delta-1)\rho_0\geq 1$, remains the assumption content of (B1).

	\begin{lemma}[Value decay from $Q$-decay]\label{lem:V_decay}
		Under (A3) and (B1), for $\kappa\geq\kappa_\pi+1$ and any $\bbs,\bbs'$ with $\bbs_{\ccalN_i^\kappa}=\bbs'_{\ccalN_i^\kappa}$,
		\feq{
			\abs{V_i^\pi(\bbs)-V_i^\pi(\bbs')}\leq\tilde c\rho^{\kappa+1},\quad\tilde c:=2c\rho^{-\kappa_\pi}.
		}
	\end{lemma}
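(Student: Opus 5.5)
We would write the value function as a policy average of the action-value function, $V_i^\pi(\bbs)=\int Q_i^\pi(\bbs,\bba)\,\pi(d\bba\mid\bbs)$, which follows from the definitions of $V_i^\pi$ and $Q_i^\pi$ and the Markov property. Since $\pi$ factorizes under (A3), each local factor $\pi_j(\cdot\mid\bbs_{\ccalN_j^{\kappa_\pi}})$ depends only on the states within $\kappa_\pi$ hops of $j$.

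Fix $\bbs,\bbs'$ that agree on $\ccalN_i^\kappa$, and set $\kappa':=\kappa-\kappa_\pi\geq1$, using the standing inequality $\kappa\geq\kappa_\pi+1$. For every $j\in\ccalN_i^{\kappa'}$ we have $\ccalN_j^{\kappa_\pi}\subseteq\ccalN_i^\kappa$, so $\pi_j(\cdot\mid\bbs)=\pi_j(\cdot\mid\bbs')$ for all such $j$.

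We couple the two action draws. On the agents $j\in\ccalN_i^{\kappa'}$ we draw a common $\bba_j$ from the shared conditional law. For the agents outside $\ccalN_i^{\kappa'}$ we draw $\bba_j\sim\pi_j(\cdot\mid\bbs)$ and $\bba'_j\sim\pi_j(\cdot\mid\bbs')$ independently. Because the joint policy is a product, the resulting marginals are exactly $\pi(\cdot\mid\bbs)$ and $\pi(\cdot\mid\bbs')$. Under this coupling, $(\bbs,\bba)$ and $(\bbs',\bba')$ agree at the state-action level on $\ccalN_i^{\kappa'}$, and so (B1) applied at radius $\kappa'$ gives, pointwise,
\[
\abs{Q_i^\pi(\bbs,\bba)-Q_i^\pi(\bbs',\bba')}\leq c\rho^{\kappa'+1}=c\rho^{-\kappa_\pi}\rho^{\kappa+1}.
\]
Taking expectations over the coupling and using Jensen's inequality gives $\abs{V_i^\pi(\bbs)-V_i^\pi(\bbs')}\leq c\rho^{-\kappa_\pi}\rho^{\kappa+1}$. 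This already implies the claimed bound with $\tilde c=2c\rho^{-\kappa_\pi}$. The extra factor $2$ is slack, and it also covers the more conservative route that passes through an intermediate hybrid pair and applies (B1) twice.

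The only delicate step is measure-theoretic. We must confirm that the coupled pair is well defined with the correct marginals. This follows from the product structure and the joint measurability in (A3), which makes each $\pi_j$ a Markov kernel. We must also confirm that $Q_i^\pi$ is integrable, which follows from the bound $Q_{\max}$. Finally, (B1) must be applicable at radius $\kappa'\in\mbN$, and $\kappa'\geq1$ guarantees this. We expect no real obstacle beyond this radius bookkeeping.
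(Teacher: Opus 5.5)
Your proof is correct and follows the paper's own argument. You couple the action draws so that they coincide on agents whose $\kappa_\pi$-hop observation window lies inside $\mathcal{N}_i^{\kappa}$, a set containing $\mathcal{N}_i^{\kappa-\kappa_\pi}$; you then apply (B1) at radius $\kappa-\kappa_\pi\geq 1$ inside the policy average defining $V_i^\pi$, obtaining $c\rho^{-\kappa_\pi}\rho^{\kappa+1}$, and the factor $2$ in $\tilde c$ is slack, as the paper also notes.
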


	The short proof is in Appendix~\ref{app:V_decay}.

	\subsection{Direct linear representation in the deployed features}

	The critic of Algorithm~\ref{alg:cdcpg} evaluates the raw local feature at the $(\kappa{+}1)$-neighborhood input $\bbZ$; this input contains all local state-action variables required to evaluate the true $\kappa$-neighborhood transition $\mbP_{i,\kappa}(\cdot\mid\bbZ)=\prod_{j\in\ccalN_i^\kappa}\mbP_j(\cdot\mid\bbs_{\ccalN_j},\bba_{\ccalN_j})$. The representation below therefore uses the actual local transition law, with no truncated or conditional Bellman model, and it is stated pointwise on the full state-action space.

	\begin{definition}[Augmented features and representation weight]\label{def:aug_features}
		For $\bbz=(\bbs_{\ccalN_i^{\kappa+1}},\bba_{\ccalN_i^{\kappa+1}})$,
		\feq{
			\tdbbphi_{i,\kappa}^{\rm raw}(\bbz):=\bigl(r_i(\bbs_{\ccalN_i},\bba_{\ccalN_i}),\hhatphi_{i,\kappa}(\bbz)\bigr)^\top\in\mbR^{m+1},
		}
		which satisfies $\norm{\tdbbphi_{i,\kappa}^{\rm raw}}\leq L:=\sqrt{\bbarr^2+2}$ pointwise. Fix any reference state $\bbs^\circ\in\ccalS$ and write $I:=\ccalN_i^\kappa$. The \emph{local continuation value} is the analysis-only function
		\feq{
			h_i^\pi(\bby):=V_i^\pi(\bby,\bbs^\circ_{I^c}),\qquad\bby\in\ccalS_{\ccalN_i^\kappa},
		}
		which is bounded by $Q_{\max}$, and the representation weight is
		\feq{
			\bbu_{i,\kappa}^\pi:=\Bigl(1,\,\gamma\!\int\!\hhatmu_{i,\kappa}(\bby)\,h_i^\pi(\bby)\,d\bby\Bigr)^\top\in\mbR^{m+1}.\label{eq:weight_vector}
		}
		The reference completion is not substituted into the dynamics; it only selects a representative local continuation value. The integral is well defined because $\int\norm{\hhatmu_{i,\kappa}(\bby)}\,d\bby\leq\sqrt2\,\tilde g_\alpha$ (Appendix~\ref{app:q_linear_proof}), whence $\norm{\bbu_{i,\kappa}^\pi}\leq U_1:=1+\sqrt2\gamma\tilde g_\alpha\bbarr/(1-\gamma)$, a uniform bound on the representation coefficient (which need not solve the population TD equations).
	\end{definition}

	\begin{theorem}[Linear representation of the local-reward $Q$-function]\label{thm:q_linear}
		Under Sets~A--B with $\kappa\geq\max\{1,\kappa_\pi+1\}$, on the favorable RFF event $E_{\rm rff}$ of Theorem~\ref{thm:rff_approx} at accuracy $\epsilon_P$, for every policy $\pi$ in the class, every $i\in[n]$, and every $(\bbs,\bba)\in\ccalS\times\ccalA$,
		\feq{
			Q_i^\pi(\bbs,\bba)=\inner{\tdbbphi_{i,\kappa}^{\rm raw}(\bbZ)}{\bbu_{i,\kappa}^\pi}+e_i(\bbs,\bba),\label{eq:q_linear}
		}
		with $\bbZ=(\bbs_{\ccalN_i^{\kappa+1}},\bba_{\ccalN_i^{\kappa+1}})$ and
		\feq{
			\sup_{\bbs,\bba}\abs{e_i(\bbs,\bba)}\leq\Bk:=\underbrace{\gamma\tilde c\rho^{\kappa+1}}_{=:B_{\rm loc}(\kappa)}+\underbrace{\frac{\gamma\bbarr\,\epsilon_P}{1-\gamma}}_{=:B_{\rm rff}(m)}.\label{eq:xi_bound}
		}
		The event $E_{\rm rff}$ is policy independent; only the coefficient $\bbu_{i,\kappa}^\pi$ varies with the policy.
	\end{theorem}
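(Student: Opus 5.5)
The plan is to start from the definition $Q_i^\pi(\bbs,\bba)=r_i(\bbs_{\ccalN_i},\bba_{\ccalN_i})+\gamma\,\mbE_{\bbs'\sim\mbP(\cdot\mid\bbs,\bba)}[V_i^\pi(\bbs')]$ and insert two intermediate quantities, controlling one source of error at each insertion. The first coordinate of $\tdbbphi_{i,\kappa}^{\rm raw}$ is $r_i$, and the first coordinate of $\bbu_{i,\kappa}^\pi$ is $1$, so the reward term is reproduced exactly. The whole task is therefore the continuation term. Writing $I=\ccalN_i^\kappa$, we decompose
\begin{align*}
\gamma\,\mbE[V_i^\pi(\bbs')] - \gamma\!\int h_i^\pi(\bby)\inner{\hhatphi_{i,\kappa}(\bbZ)}{\hhatmu_{i,\kappa}(\bby)}\,d\bby
&= \gamma\,\mbE\bigl[V_i^\pi(\bbs') - h_i^\pi(\bbs'_I)\bigr]\\
&\quad+\gamma\!\int h_i^\pi(\bby)\bigl(\mbP_{i,\kappa}(\bby\mid\bbZ)-\inner{\hhatphi_{i,\kappa}(\bbZ)}{\hhatmu_{i,\kappa}(\bby)}\bigr)d\bby .
\end{align*}
The first term on the right is the spatial residual and the second is the finite-feature residual. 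Define $e_i$ to be their sum. Because the inner product $\inner{\hhatphi_{i,\kappa}(\bbZ)}{\int\hhatmu_{i,\kappa}h_i^\pi}$ equals the integral of $h_i^\pi\inner{\hhatphi}{\hhatmu}$, identity \eqref{eq:q_linear} then holds by construction. Pulling the finite-dimensional integral inside the inner product is a Fubini step; it is justified by the integrability bound $\int\norm{\hhatmu_{i,\kappa}}\leq\sqrt2\tilde g_\alpha$ from Definition~\ref{def:aug_features}, together with $|h_i^\pi|\leq Q_{\max}$.

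\emph{Spatial term.} For every $\bbs'$, the states $\bbs'$ and $(\bbs'_I,\bbs^\circ_{I^c})$ agree on $\ccalN_i^\kappa$. Since $\kappa\geq\kappa_\pi+1$, Lemma~\ref{lem:V_decay} gives $|V_i^\pi(\bbs')-h_i^\pi(\bbs'_I)|\leq\tilde c\rho^{\kappa+1}$ pointwise. Taking the expectation over $\bbs'$ and multiplying by $\gamma$ yields $B_{\rm loc}(\kappa)$. The key observation is that $h_i^\pi(\bbs'_I)$ depends only on $\bbs'_I$. The law of $\bbs'_I$ under the true joint kernel is exactly the marginal $\mbP_{i,\kappa}(\cdot\mid\bbZ)=\prod_{j\in I}\mbP_j$, because the transition factorizes across agents and each $\mbP_j$ with $j\in I$ depends only on the variables in $\ccalN_j\subseteq\ccalN_i^{\kappa+1}$. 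Hence $\mbE[h_i^\pi(\bbs'_I)]=\int h_i^\pi\,\mbP_{i,\kappa}(\cdot\mid\bbZ)$ exactly, with no conditional or truncated model involved.

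\emph{Finite-feature term.} On $E_{\rm rff}$, Theorem~\ref{thm:rff_approx} bounds the $\ccalL^1$ distance between $\mbP_{i,\kappa}(\cdot\mid\bbZ)$ and $\inner{\hhatphi_{i,\kappa}(\bbZ)}{\hhatmu_{i,\kappa}(\cdot)}$ by $\epsilon_P$, uniformly in $\bbZ$. Hölder's inequality with $\norm{h_i^\pi}_\infty\leq\bbarr/(1-\gamma)$ then gives $\gamma\bbarr\epsilon_P/(1-\gamma)=B_{\rm rff}(m)$. Summing the two bounds gives $\Bk$ uniformly over $(\bbs,\bba)$, $i$, and $\pi$. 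Policy independence holds because $E_{\rm rff}$ concerns only $f$, $\sigma$, and the feature draws; $\pi$ enters only through $h_i^\pi$, and hence only through $\bbu_{i,\kappa}^\pi$.

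I expect the main subtlety to be bookkeeping rather than estimation. First, one must verify that the feature input $\bbZ$ on the $(\kappa{+}1)$-neighborhood carries all the arguments of $\bbf_{i,\kappa}$; this is what makes the splitting identity \eqref{eq:p_splitting_identity} and the $\ccalL^1$ bound applicable to the true marginal. Second, one must justify the integrability and Fubini step for the signed approximant. The main obstacle is the bound $\int\norm{\hhatmu_{i,\kappa}}\leq\sqrt2\tilde g_\alpha$. To prove it, I would use $\norm{\hhatmu_{i,\kappa}(\bby)}\leq\sqrt2\,\bbarg_\alpha p_\alpha^{(i)}(\bby)$ and the Gaussian integral $\int p_\alpha^{(i)}=(2\pi\sigma^2/\alpha^2)^{d_{i,\kappa}/2}$, and absorb the result into $\tilde g_\alpha$ via $d_{i,\kappa}\leq D_\kappa d_S$.
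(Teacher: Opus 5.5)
Your proposal is correct and follows the paper's proof (Appendix~\ref{app:q_linear_proof}) step for step. The paper also bounds $\int\norm{\hhatmu_{i,\kappa}(\bby)}\,d\bby\leq\sqrt2\,\tilde g_\alpha$ via the envelope $\sqrt2\,\bbarg_\alpha p_\alpha^{(i)}$. It then replaces $V_i^\pi(\bbs')$ by $h_i^\pi(\bbs'_I)$ using Lemma~\ref{lem:V_decay}, noting that the marginal of $\bbs'_I$ is exactly $\mbP_{i,\kappa}(\cdot\mid\bbZ)$, and bounds the finite-feature term by $\norm{h_i^\pi}_\infty\epsilon_P$ on $E_{\rm rff}$. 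The only differences are cosmetic: the paper proves the integrability bound first, and the interchange both of you call Fubini is simply linearity of the coordinatewise integrals, which your integrability bound already justifies.
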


	Appendix~\ref{app:q_linear_proof} proves the representation using the actual local transition conditioned on its boundary inputs. The two residual terms respectively quantify the omitted distant state in the continuation value and the finite-feature approximation of the true local transition.

	\section{Algorithm and LSTD Analysis}\label{sec:algr}

	\subsection{Population TD matrix and stability assumptions}

	The raw-feature population objects at parameter $\bbtheta$ are
	\begin{equation}\label{eq:M_pop_raw}
		\bbM_i^{\bbtheta}:=\bbH_i^{\bbtheta}-\gamma\bbC_i^{\bbtheta},
	\end{equation}
	with $\bbH_i^{\bbtheta}:=\mbE[\tdbbphi^{\rm raw}(\bbZ)\tdbbphi^{\rm raw}(\bbZ)^\top]$, $\bbC_i^{\bbtheta}:=\mbE[\tdbbphi^{\rm raw}(\bbZ)\tdbbphi^{\rm raw}(\bbZ')^\top]$, and $\bbb_i^{\bbtheta}:=\mbE[\tdbbphi^{\rm raw}(\bbZ)r_i]$, where $(\bbS,\bbA)\sim\nu^{\pi_\bbtheta}$, $\bbZ$ is its restriction to $\ccalN_i^{\kappa+1}$, and $\bbZ'$ is the corresponding restriction of the successor pair $(\bbS',\bbA')$ under the one-step law of (D1).

	\begin{assumption}[Set C: TD stability]\label{asm:C}
		Define the event
		\feq{
			E_M:=\bigl\{\sigma_{\min}(\bbM_i^{\bbtheta})\geq\mu_M(m)>0,\ \forall i\in[n],\,\forall\bbtheta\in\Theta_0\bigr\},\label{eq:td_stability}
		}
		where $\bbM_i^{\bbtheta}$ is the raw-feature TD matrix~\eqref{eq:M_pop_raw}. There is a constant $\delta_M(m)\in(0,1)$ such that $\mbP(E_M)\geq 1-\delta_M(m)$.
	\end{assumption}

	Three remarks delimit the scope of Set~C. First, $E_M$ is measurable: under Set~A, $\bbtheta\mapsto\bbM_i^{\bbtheta}$ is continuous on the compact $\Theta_0$ (Lemma~\ref{lem:M_continuity}, Appendix~\ref{app:M_continuity}), so the infimum over $\Theta_0$ reduces to one over a countable dense subset. Second, the notation suppresses dependencies: $\mu_M$ and $\delta_M$ may depend not only on $m$ but also on $n$, $\kappa$, the graph, and the policy class $\Theta_0$; in particular, with independent per-agent feature draws a per-agent failure probability $\delta_M^{(1)}$ compounds by a union bound to $\delta_M\leq n\,\delta_M^{(1)}$. Third, the confidence $1-\delta_M(m)$ is postulated, not user-controllable: unlike $\delta_{\rm rff}$, it cannot be driven to zero by enlarging a sample size. A practical (not theoretical) recourse is to redraw the feature realization and re-test conditioning via the diagnostic of Proposition~\ref{prop:diagnostic}; the single-draw confidence statement of that proposition does not cover such repeated selection, and we do not analyze the associated confidence spending. Lemma~\ref{lem:excitation} below reduces (C) to a symmetric feature-excitation condition. Sets~A, B, and D do not imply positive excitation: if the drift is constant, all RFF coordinates are constant functions of the input and the augmented feature family has rank at most two, so $\bbH_i^{\bbtheta}$ is singular for $m\geq2$.

	The $m$-dependence of $\mu_M$ is necessary:

	\begin{proposition}[Dimensional obstruction to dimension-free TD-stability]\label{prop:mu_M_obstruction}
		For any TD matrix $\bbM_i^{\bbtheta}=\mbE[\tdbbphi^{\rm raw}(\bbZ)(\tdbbphi^{\rm raw}(\bbZ)-\gamma\tdbbphi^{\rm raw}(\bbZ'))^\top]$ generated by features with values in $\mbR^{m+1}$ and $\norm{\tdbbphi^{\rm raw}}\leq L$ pointwise (so that both the current and the successor feature are bounded),
		\feq{
			\sigma_{\min}(\bbM_i^{\bbtheta})\leq\frac{\norm{\bbM_i^{\bbtheta}}_\ast}{m+1}\leq\frac{(1+\gamma)L^2}{m+1}\xrightarrow{m\to\infty}0.\label{eq:nuclear_obstruction}
		}
		Hence every admissible threshold in Assumption~\ref{asm:C} obeys $\mu_M(m)\leq(1+\gamma)L^2/(m+1)$, no dimension-free lower bound exists, and the analysis below operates in the fixed-$m$ regime with $m$ a problem primitive.
	\end{proposition}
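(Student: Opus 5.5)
The plan is a trace/nuclear-norm averaging argument followed by a pointwise rank-one bound. The first inequality is pure linear algebra. For any square matrix $\bbA\in\mbR^{(m+1)\times(m+1)}$, its $m+1$ singular values are nonnegative, sum to $\norm{\bbA}_\ast$, and the smallest of them is at most their average. Hence $\sigma_{\min}(\bbA)\leq\norm{\bbA}_\ast/(m+1)$. I apply this with $\bbA=\bbM_i^{\bbtheta}$, which lives in $\mbR^{(m+1)\times(m+1)}$ because the augmented features take values in $\mbR^{m+1}$.

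For the second inequality I write
\[
\bbM_i^{\bbtheta}=\mbE[\tdbbphi^{\rm raw}(\bbZ)\tdbbphi^{\rm raw}(\bbZ)^\top]-\gamma\,\mbE[\tdbbphi^{\rm raw}(\bbZ)\tdbbphi^{\rm raw}(\bbZ')^\top].
\]
I then use two facts: the nuclear norm is convex, so it passes inside the expectation (Jensen for a norm, valid since the integrand is bounded and hence Bochner integrable); and it obeys the triangle inequality. Together these give
\[
\norm{\bbM_i^{\bbtheta}}_\ast\leq\mbE\norm{\tdbbphi^{\rm raw}(\bbZ)\tdbbphi^{\rm raw}(\bbZ)^\top}_\ast+\gamma\,\mbE\norm{\tdbbphi^{\rm raw}(\bbZ)\tdbbphi^{\rm raw}(\bbZ')^\top}_\ast .
\]
Each integrand is a rank-one matrix $\bbx\bby^\top$, whose only nonzero singular value is $\norm{\bbx}\norm{\bby}$. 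The pointwise bound $\norm{\tdbbphi^{\rm raw}}\leq L$ of Definition~\ref{def:aug_features} applies to both the current and the successor argument. So the first term is at most $L^2$ and the second at most $\gamma L^2$, which yields $\norm{\bbM_i^{\bbtheta}}_\ast\leq(1+\gamma)L^2$.

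This bound is uniform in $\bbtheta$, $i$, and the feature draw. Since $L=\sqrt{\bbarr^2+2}$ does not depend on $m$, the ratio $(1+\gamma)L^2/(m+1)$ tends to $0$ as $m\to\infty$.

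For the consequence, suppose $E_M$ holds with positive probability (any admissible threshold in Assumption~\ref{asm:C} requires this, since $\delta_M<1$). On that event $\mu_M(m)\leq\sigma_{\min}(\bbM_i^{\bbtheta})\leq(1+\gamma)L^2/(m+1)$, and $\mu_M(m)$ is deterministic. So the threshold itself obeys the bound, and no lower bound independent of $m$ can hold.

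The only mildly delicate step is justifying the exchange of nuclear norm and expectation. I would do it either by Jensen's inequality for the convex, continuous norm applied to a Bochner integral of a bounded measurable matrix function, or more elementarily via the dual characterization $\norm{\bbA}_\ast=\sup_{\norm{\bbW}\leq1}\operatorname{tr}(\bbW^\top\bbA)$, pulling the trace inside the expectation. I expect no other obstacle.
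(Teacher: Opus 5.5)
Your proposal is correct and is essentially the paper's proof: the smallest of the $m+1$ singular values is at most their average $\norm{\bbM_i^{\bbtheta}}_\ast/(m+1)$, and Jensen's inequality for the nuclear norm, applied to rank-one integrands, gives $\norm{\bbM_i^{\bbtheta}}_\ast\leq(1+\gamma)L^2$. The only cosmetic difference is that you split the integrand into two rank-one pieces and use the triangle inequality. The paper instead treats $\tdbbphi^{\rm raw}(\bbZ)(\tdbbphi^{\rm raw}(\bbZ)-\gamma\tdbbphi^{\rm raw}(\bbZ'))^\top$ as a single rank-one matrix with $\norm{\tdbbphi^{\rm raw}(\bbZ)-\gamma\tdbbphi^{\rm raw}(\bbZ')}\leq(1+\gamma)L$, which yields the same bound.
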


	\begin{proof}
		For a rank-one matrix $\bbu\bbv^\top$ the nuclear norm is $\norm{\bbu}\norm{\bbv}$. The nuclear norm is convex, so by Jensen's inequality for the Bochner integral, $\norm{\bbM_i^{\bbtheta}}_\ast\leq\mbE\norm{\tdbbphi^{\rm raw}(\bbZ)}\,\norm{\tdbbphi^{\rm raw}(\bbZ)-\gamma\tdbbphi^{\rm raw}(\bbZ')}\leq(1+\gamma)L^2$. Every one of the $m+1$ singular values is at least $\sigma_{\min}(\bbM_i^{\bbtheta})$, so $(m+1)\sigma_{\min}(\bbM_i^{\bbtheta})\leq\norm{\bbM_i^{\bbtheta}}_\ast$. Neither stationarity nor positive definiteness is used.
	\end{proof}

	The weaker Frobenius-norm bound $\sigma_{\min}\leq\norm{\bbM_i^{\bbtheta}}_F/\sqrt{m+1}\leq(1+\gamma)L^2/\sqrt{m+1}$ also holds but is not tight in this argument; \eqref{eq:nuclear_obstruction} is the general upper bound used below. It is an upper bound on the true conditioning: a threshold $\mu_M(m)=\Theta(1/m)$ is compatible with it, but attainability for the specific spectral features is not proved, and faster deterioration or singularity remains possible.

	\begin{lemma}[Excitation implies TD stability; accretivity]\label{lem:excitation}
		Under (A1)--(A3) and the occupancy sampling of (D1), for every $i\in[n]$, every $\bbtheta\in\Theta_0$, and every feature realization:
		\begin{enumerate}
			\item[(a)] the symmetric part of the TD matrix is positive semidefinite, $\sym(\bbM_i^{\bbtheta})\succeq(1-\sqrt\gamma)\,\bbH_i^{\bbtheta}\succeq\bbzero$; consequently $\sigma_{\min}(\bbM_i^{\bbtheta}+\lambda\bbI)\geq\sigma_{\min}(\bbM_i^{\bbtheta})$ for every $\lambda\geq0$;
			\item[(b)] whenever $\bbH_i^{\bbtheta}\succ\bbzero$, the whitened cross term obeys $\norm{\bbH_i^{\bbtheta,-1/2}\bbC_i^{\bbtheta}\bbH_i^{\bbtheta,-1/2}}_2\leq\gamma^{-1/2}$, and consequently
			\feq{
				\sigma_{\min}(\bbM_i^{\bbtheta})\;\geq\;(1-\sqrt\gamma)\,\lambda_{\min}(\bbH_i^{\bbtheta}).
			}
		\end{enumerate}
		Hence the symmetric feature-excitation condition $\lambda_{\min}(\bbH_i^{\bbtheta})\geq h_M(m)>0$ for all $i\in[n]$ and $\bbtheta\in\Theta_0$, on an event of probability $\geq 1-\delta_M(m)$ over the feature draw, implies Assumption~(C) with $\mu_M(m)=(1-\sqrt\gamma)\,h_M(m)$.
	\end{lemma}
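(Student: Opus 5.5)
The plan rests on one structural fact about the discounted occupancy: the one-step successor law is dominated by $\gamma^{-1}\nu^\pi$. Part (a) then follows from a Cauchy--Schwarz bound on the cross term $\bbC_i^{\bbtheta}$, and part (b) from the resulting accretivity. No stationarity of $\nu^\pi$ is needed.

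\emph{Step 1 (occupancy domination).} Write $P_\pi$ for the state-to-state kernel under $\pi$. From $d^\pi=(1-\gamma)\sum_t\gamma^t\mbP_t$ and $\mbP_{t+1}=\mbP_tP_\pi$, one gets the flow identity $d^\pi=(1-\gamma)\mu_0+\gamma\,d^\pi P_\pi$. Hence $\gamma\,d^\pi P_\pi\leq d^\pi$ as measures.

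Under (D1), the successor pair $(\bbS',\bbA')$ has law $(d^\pi P_\pi)\otimes\pi$, because $\bbA'\sim\pi(\cdot\mid\bbS')$ with the same policy. It is therefore dominated by $\gamma^{-1}\nu^\pi$. The feature $\tdbbphi^{\rm raw}$ is a measurable function of the state-action pair through its restriction $\bbZ$. So for every fixed $\bby\in\mbR^{m+1}$,
\[
\mbE\bigl[(\bby^\top\tdbbphi^{\rm raw}(\bbZ'))^2\bigr]\leq\gamma^{-1}\mbE\bigl[(\bby^\top\tdbbphi^{\rm raw}(\bbZ))^2\bigr]=\gamma^{-1}\bby^\top\bbH_i^{\bbtheta}\bby.
\]
Integrability is automatic because $\norm{\tdbbphi^{\rm raw}}\leq L$. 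This step uses only (A1)--(A3) and (D1), and it holds for every feature realization.

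\emph{Step 2 (cross-term bound and part (a)).} By Cauchy--Schwarz and Step 1,
\[
\bbx^\top\bbC_i^{\bbtheta}\bby=\mbE\bigl[(\bbx^\top\tdbbphi^{\rm raw}(\bbZ))(\bby^\top\tdbbphi^{\rm raw}(\bbZ'))\bigr]\leq\gamma^{-1/2}\sqrt{\bbx^\top\bbH_i^{\bbtheta}\bbx}\,\sqrt{\bby^\top\bbH_i^{\bbtheta}\bby}.
\]
Taking $\bby=\bbx$ gives
\[
\bbx^\top\bbM_i^{\bbtheta}\bbx\geq\bbx^\top\bbH_i^{\bbtheta}\bbx-\gamma\cdot\gamma^{-1/2}\,\bbx^\top\bbH_i^{\bbtheta}\bbx=(1-\sqrt\gamma)\,\bbx^\top\bbH_i^{\bbtheta}\bbx,
\]
which is the claimed bound $\sym(\bbM_i^{\bbtheta})\succeq(1-\sqrt\gamma)\bbH_i^{\bbtheta}\succeq\bbzero$.

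For the regularized statement, take a unit vector $\bbx$ and expand
\[
\norm{(\bbM_i^{\bbtheta}+\lambda\bbI)\bbx}^2=\norm{\bbM_i^{\bbtheta}\bbx}^2+2\lambda\,\bbx^\top\bbM_i^{\bbtheta}\bbx+\lambda^2.
\]
The last two terms are nonnegative, so $\norm{(\bbM_i^{\bbtheta}+\lambda\bbI)\bbx}\geq\norm{\bbM_i^{\bbtheta}\bbx}$. Minimizing over unit $\bbx$ gives $\sigma_{\min}(\bbM_i^{\bbtheta}+\lambda\bbI)\geq\sigma_{\min}(\bbM_i^{\bbtheta})$, since the matrices are square.

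\emph{Step 3 (part (b) and the corollary).} When $\bbH_i^{\bbtheta}\succ\bbzero$, substitute $\bbx=\bbH^{-1/2}\bbu$ and $\bby=\bbH^{-1/2}\bbv$ in the Step 2 inequality. This gives $\bbu^\top\bbH^{-1/2}\bbC\bbH^{-1/2}\bbv\leq\gamma^{-1/2}\norm{\bbu}\norm{\bbv}$, i.e.\ the whitened operator-norm bound.

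For the singular-value bound, take a unit $\bbx$. Then
\[
\norm{\bbM_i^{\bbtheta}\bbx}\geq\bbx^\top\bbM_i^{\bbtheta}\bbx\geq(1-\sqrt\gamma)\,\bbx^\top\bbH_i^{\bbtheta}\bbx\geq(1-\sqrt\gamma)\,\lambda_{\min}(\bbH_i^{\bbtheta}),
\]
so $\sigma_{\min}(\bbM_i^{\bbtheta})\geq(1-\sqrt\gamma)\lambda_{\min}(\bbH_i^{\bbtheta})$.

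The final implication is then immediate. On the excitation event, $\lambda_{\min}(\bbH_i^{\bbtheta})\geq h_M(m)$ uniformly in $i$ and $\bbtheta$. Step 3 then yields $\sigma_{\min}(\bbM_i^{\bbtheta})\geq(1-\sqrt\gamma)h_M(m)$ on that same event, which is Assumption~\ref{asm:C} with the same $\delta_M(m)$.

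\emph{Main obstacle.} The delicate point is Step 1. One must check that the successor action is drawn from the \emph{same} policy, so that the successor law is exactly $(d^\pi P_\pi)\otimes\pi$, and that the flow identity holds as an inequality between measures. The identity needs justifying by monotone convergence of the geometric series. Given that, the rest is finite-dimensional linear algebra. The factor $\gamma^{-1/2}$, in place of the factor $1$ available in the stationary case, is exactly what produces the $(1-\sqrt\gamma)$ constant.
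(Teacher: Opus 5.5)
Your proof is correct. Steps 1--2 coincide with the paper's argument: the flow identity gives $\nu_+\leq\nu/\gamma$, so the successor feature covariance is dominated by $\bbH_i^{\bbtheta}/\gamma$. The same Cauchy--Schwarz step then yields the accretivity bound of (a) and the whitened bound in (b), and your $\lambda$-shift expansion is the paper's.

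The genuine difference is how you obtain $\sigma_{\min}(\bbM_i^{\bbtheta})\geq(1-\sqrt\gamma)\lambda_{\min}(\bbH_i^{\bbtheta})$.
\begin{itemize}
\item The paper factors $\bbM=\bbH^{1/2}(\bbI-\gamma\bbB)\bbH^{1/2}$ with $\bbB=\bbH^{-1/2}\bbC\bbH^{-1/2}$. It then uses $\sigma_{\min}(\bbX\bbY\bbZ)\geq\sigma_{\min}(\bbX)\sigma_{\min}(\bbY)\sigma_{\min}(\bbZ)$ together with $\sigma_{\min}(\bbI-\gamma\bbB)\geq1-\gamma\norm{\bbB}\geq1-\sqrt\gamma$. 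So there the singular-value bound is literally a consequence of the whitened norm estimate. This is why the paper needs $\bbH\succ\bbzero$ and a separate remark for the singular case.
\item You instead use the numerical-range inequality $\norm{\bbM\bbx}\geq\bbx^\top\bbM\bbx$ for unit $\bbx$, which converts the accretivity of (a) directly into the singular-value bound.
\end{itemize}

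Your route is shorter and needs neither invertibility of $\bbH$ nor the whitened estimate. It also shows that the second claim of (b) already follows from (a), so the whitened bound becomes a standalone fact rather than a needed step. The paper's factorization, in exchange, keeps the bound in the form $\lambda_{\min}(\bbH)\,\sigma_{\min}(\bbI-\gamma\bbB)$. That form isolates the whitened cross term as the quantity controlling conditioning, which matches the ``consequently'' phrasing of the statement.
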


	\begin{proof}
		Fix $i,\bbtheta$ and the feature realization, and abbreviate $\bbH:=\bbH_i^{\bbtheta}$, $\bbC:=\bbC_i^{\bbtheta}$, $\bbM:=\bbM_i^{\bbtheta}$. Let $\nu:=\nu^{\pi_\bbtheta}$ and let $\nu_+$ denote the law of the successor full state-action pair $(\bbS',\bbA')$ under the one-step law of (D1), whose restriction to $\ccalN_i^{\kappa+1}$ is $\bbZ'$. The occupancy identity $d^\pi=(1-\gamma)\mu_0+\gamma\,\mbP^{\pi\top}d^\pi$ gives $\mbP^{\pi\top}d^\pi\leq d^\pi/\gamma$ as measures, hence $\nu_+=(\mbP^{\pi\top}d^\pi)\otimes\pi_\bbtheta\leq\nu/\gamma$ (this is a domination statement, not stationarity), and therefore the successor feature covariance $\bbH_+:=\mbE_{\nu_+}[\tdbbphi^{\rm raw}\tdbbphi^{\rm raw\top}]$ satisfies $\bbu^\top\bbH_+\bbu=\int(\bbu^\top\tdbbphi^{\rm raw})^2\,d\nu_+\leq\gamma^{-1}\int(\bbu^\top\tdbbphi^{\rm raw})^2\,d\nu=\gamma^{-1}\bbu^\top\bbH\bbu$ for every $\bbu$, i.e.\ $\bbH_+\preceq\bbH/\gamma$.

		(a) For any $\bbx$, Cauchy--Schwarz under the joint one-step law gives $\bbx^\top\bbM\bbx=\bbx^\top\bbH\bbx-\gamma\,\mbE[(\bbx^\top\tdbbphi^{\rm raw}(\bbZ))(\bbx^\top\tdbbphi^{\rm raw}(\bbZ'))]\geq\bbx^\top\bbH\bbx-\gamma(\bbx^\top\bbH\bbx)^{1/2}(\bbx^\top\bbH_+\bbx)^{1/2}\geq(1-\sqrt\gamma)\,\bbx^\top\bbH\bbx$. Since $\bbx^\top\bbM\bbx=\bbx^\top\sym(\bbM)\bbx$, this is the claimed semidefinite inequality. Then, for $\lambda\geq0$, $\norm{(\bbM+\lambda\bbI)\bbx}^2=\norm{\bbM\bbx}^2+2\lambda\bbx^\top\sym(\bbM)\bbx+\lambda^2\norm{\bbx}^2\geq\norm{\bbM\bbx}^2$, and minimizing over unit $\bbx$ gives $\sigma_{\min}(\bbM+\lambda\bbI)\geq\sigma_{\min}(\bbM)$.

		(b) For unit vectors $\bbu,\bbv$, Cauchy--Schwarz under the joint one-step law gives
		\feq{
			\abs{\bbu^\top\bbH^{-1/2}\bbC\bbH^{-1/2}\bbv}
			&\leq\bigl(\mbE(\bbu^\top\bbH^{-1/2}\tdbbphi^{\rm raw}(\bbZ))^2\bigr)^{1/2}\notag\\
			&\qquad\times\bigl(\mbE(\bbv^\top\bbH^{-1/2}\tdbbphi^{\rm raw}(\bbZ'))^2\bigr)^{1/2}\notag\\
			&=\bigl(\bbv^\top\bbH^{-1/2}\bbH_+\bbH^{-1/2}\bbv\bigr)^{1/2}\leq\gamma^{-1/2}.
		}
		Writing $\bbM=\bbH^{1/2}(\bbI-\gamma\bbB)\bbH^{1/2}$ with $\bbB:=\bbH^{-1/2}\bbC\bbH^{-1/2}$ and using $\sigma_{\min}(\bbX\bbY\bbZ)\geq\sigma_{\min}(\bbX)\,\sigma_{\min}(\bbY)\,\sigma_{\min}(\bbZ)$ for square matrices,
		$\sigma_{\min}(\bbM)\geq\lambda_{\min}(\bbH)\,\sigma_{\min}(\bbI-\gamma\bbB)\geq\lambda_{\min}(\bbH)(1-\gamma\norm{\bbB})\geq(1-\sqrt\gamma)\lambda_{\min}(\bbH)$.
		If $\lambda_{\min}(\bbH)=0$ the displayed bound holds trivially.
	\end{proof}

	Lemma~\ref{lem:excitation} is proved from the discounted-occupancy identity: the whitened cross-term bound $\norm{\bbH^{-1/2}\bbC\bbH^{-1/2}}_2\leq\gamma^{-1/2}$ is derived, so TD stability reduces to positive feature covariance, a condition analogous to persistent-excitation conditions in adaptive control~\cite{narendra2012stable}, here defined directly under the discounted occupancy; it is monitorable through $\lambda_{\min}(\hat\bbH_i^{(k)})$, where $\hat\bbH_i^{(k)}:=M_s^{-1}\sum_j\tdbbphi^{\rm raw}(\bbz^{(j)})\tdbbphi^{\rm raw}(\bbz^{(j)})^\top$ is the empirical positive semidefinite feature covariance, by a matrix-Bernstein radius analogous to that of Proposition~\ref{prop:diagnostic}. The reduction does not itself establish positive excitation for the policy class. The route also carries a dimensional cap consistent with Proposition~\ref{prop:mu_M_obstruction}: $\mathrm{tr}(\bbH_i^{\bbtheta})=\mbE\norm{\tdbbphi^{\rm raw}}^2\leq L^2$ forces $\lambda_{\min}(\bbH_i^{\bbtheta})\leq L^2/(m+1)$, so the excitation certificate is at most $(1-\sqrt\gamma)L^2/(m+1)$, while the true $\sigma_{\min}(\bbM_i^{\bbtheta})$ is bounded by $(1+\gamma)L^2/(m+1)$; both bounds are of order $1/m$.

	For some results we use the strictly weaker trajectory-conditional version of (C):

	\begin{assumption}[Set C$'$: Trajectory-conditional stability]\label{asm:Cp}
		For the horizon $K$ in force, there exists $\mu_{\rm traj}>0$ such that
		\feq{
			\inf_{0\leq k<K,\,i\in[n]}\sigma_{\min}(\bbM_i^{\bbtheta^{(k)}})\geq\mu_{\rm traj}
		}
		holds with probability at least $1-\delta_M^{\rm traj}$, where the probability is over the random-feature realization at preprocessing and the realized algorithmic sample path $\{\ccalD_s^{(k)},\ccalD_g^{(k)}\}_{k<K}$ jointly.
	\end{assumption}

	When $K$ is the horizon prescribed by Theorem~\ref{thm:sample_complexity_fixed} for a target accuracy $\epsilon$, an asymptotic reading of Theorem~\ref{thm:sample_complexity_traj} as $\epsilon\to0$ requires $\mu_{\rm traj}$ and $\delta_M^{\rm traj}$ to be uniform over the horizons under consideration; otherwise the constants hidden in the rate may deteriorate with the requested accuracy.

	The empirical proxy $\sigma_{\min}(\bbM_i^{(k)})$ (Definition~\ref{def:lstd}) yields, through matrix-Bernstein concentration, a simultaneous lower confidence bound on $\sigma_{\min}(\bbM_i^{\bbtheta^{(k)}})$ along the realized trajectory in the false-certification sense of Proposition~\ref{prop:diagnostic}; it certifies neither (C$'$) for future runs nor the uniform infimum (C). The paper provides both a (C)-based theorem and a (C$'$)-based companion theorem.

	\subsection{Regularized LSTD}

	\begin{definition}[LSTD matrices]\label{def:lstd}
		With $\bbz^{(j)}=(\bbs_{\ccalN_i^{\kappa+1}}^{(j)},\bba_{\ccalN_i^{\kappa+1}}^{(j)})$ and the one-step successor $\bbz^{\prime(j)}=(\bbs_{\ccalN_i^{\kappa+1}}^{\prime(j)},\bba_{\ccalN_i^{\kappa+1}}^{\prime(j)})$ from $\ccalD_s^{(k)}$,
		\feq{
			\bbM_i^{(k)}&:=\tfrac{1}{M_s}\!\sum_{j=1}^{M_s}\!\tdbbphi_{i,\kappa}^{\rm raw}(\bbz^{(j)})\\
			&\qquad\times\bigl(\tdbbphi_{i,\kappa}^{\rm raw}(\bbz^{(j)})\!-\!\gamma\tdbbphi_{i,\kappa}^{\rm raw}(\bbz^{\prime(j)})\bigr)^{\!\top}\!\!,\\
			\bbb_i^{(k)}&:=\tfrac{1}{M_s}\!\sum_{j=1}^{M_s}r_i^{(j)}\,\tdbbphi_{i,\kappa}^{\rm raw}(\bbz^{(j)}).
		}
		Fix a deterministic threshold $\tau_{\rm sv}\geq0$ and define the regularized LSTD solution by $\hat\bbw_i^{(k),\rm raw}:=(\bbM_i^{(k)}+\lambda\bbI)^{-1}\bbb_i^{(k)}$ if $\sigma_{\min}(\bbM_i^{(k)}+\lambda\bbI)>\tau_{\rm sv}$ and $\hat\bbw_i^{(k),\rm raw}:=\bbzero$ otherwise; $\tau_{\rm sv}=0$ reduces to the singularity-only fallback. The same threshold is used in Algorithm~\ref{alg:cdcpg}, line~5, so the estimator analyzed below coincides on every sample path with the implemented one. The uniform-conditioning theorems assume $0\leq\tau_{\rm sv}<\mu_M/2$ (and $\tau_{\rm sv}<\mu_{\rm traj}/2$ in the trajectory variant): on the favorable events collected in the Randomness subsection below, $\sigma_{\min}(\bbM_i^{(k)}+\lambda\bbI)\geq\mu_M/2>\tau_{\rm sv}$, so the threshold test passes and the inverse is used. On every path, the norm projection of Algorithm~\ref{alg:cdcpg}, line~6, gives $\norm{\hat\bbw_i^{(k)}}\leq W^\ast$ and hence $\abs{\hat Q_i^{(k)}}\leq LW^\ast$, with $W^\ast$ the deterministic radius fixed in Lemma~\ref{lem:bounded_target} below; only these bounds are used off the favorable events.
	\end{definition}

	\subsection{Population LSTD error without conditioning amplification}

	Fix $\bbtheta$ and the feature realization, write $\pi=\pi_\bbtheta$ and $\nu=\nu^\pi$, and define the successor operator $(\ccalP^\pi f)(\bbs,\bba):=\mbE[f(\bbS',\bbA')\mid\bbS=\bbs,\bbA=\bba]$ under the one-step law of (D1). Let $\ccalV_i:=\{\bbz\mapsto\inner{\tdbbphi_{i,\kappa}^{\rm raw}(\bbz)}{\bbw}:\bbw\in\mbR^{m+1}\}$, viewed as a finite-dimensional (hence closed) subspace of $\ccalL^2(\nu)$ through $\bbZ=\bbZ(\bbs,\bba)$, and let $\Pi_i$ be the orthogonal projection onto $\ccalV_i$ in $\ccalL^2(\nu)$.

	\begin{lemma}[Projected Bellman characterization and population error]\label{lem:pop_L2}
		Under Sets~A--B and (D1) with $\kappa\geq\max\{1,\kappa_\pi+1\}$, conditional on $E_M\cap E_{\rm rff}$, for every $i\in[n]$ and $\bbtheta\in\Theta_0$:
		\begin{enumerate}
			\item[(a)] $\gamma\ccalP^\pi$ is well defined on $\ccalL^2(\nu)$ equivalence classes and $\norm{\gamma\ccalP^\pi f}_{\ccalL^2(\nu)}\leq\sqrt\gamma\,\norm{f}_{\ccalL^2(\nu)}$;
			\item[(b)] the population LSTD solution $\bbw_i^\ast:=(\bbM_i^{\bbtheta})^{-1}\bbb_i^{\bbtheta}$ is the unique coefficient for which $q_i^\ast:=\inner{\tdbbphi_{i,\kappa}^{\rm raw}}{\bbw_i^\ast}$ solves the projected Bellman equation $q_i^\ast=\Pi_i(r_i+\gamma\ccalP^\pi q_i^\ast)$, and $\bbH_i^{\bbtheta}\succ\bbzero$;
			\item[(c)] the population prediction error obeys
			\feq{
				\norm{Q_i^\pi-q_i^\ast}_{\ccalL^2(\nu)}^2\leq\frac{\dist_{\ccalL^2(\nu)}(Q_i^\pi,\ccalV_i)^2}{1-\gamma}\leq\frac{\Bk^2}{1-\gamma};\label{eq:pop_L2}
			}
			\item[(d)] the coefficient and pointwise comparisons
			\feq{
				\norm{\bbw_i^\ast-\bbu_{i,\kappa}^\pi}&\leq\frac{(1+\gamma)L\,\Bk}{\mu_M},\\
				\sup\abs{q_i^\ast-Q_i^\pi}&\leq(1+C')\Bk,\label{eq:pop_sup}
			}
			hold with $C':=(1+\gamma)L^2/\mu_M$;
			\item[(e)] $\norm{\bbw_i^\ast}\leq\norm{\bbb_i^{\bbtheta}}/\mu_M\leq L\bbarr/\mu_M$.
		\end{enumerate}
	\end{lemma}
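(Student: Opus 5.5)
The plan follows the Tsitsiklis--Van Roy projected-Bellman route, with the stationary distribution replaced by the discounted occupancy and the contraction taken from the domination $\nu_+\le\nu/\gamma$ in the proof of Lemma~\ref{lem:excitation}.

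\emph{Part (a).} Suppose $f=g$ $\nu$-a.e. Then $f=g$ $\nu_+$-a.e., because $\nu_+\le\nu/\gamma$. So $\ccalP^\pi f$ is defined on equivalence classes. By Jensen,
\[
\norm{\ccalP^\pi f}_{\ccalL^2(\nu)}^2\le\mbE_\nu\bigl[\mbE[f(\bbS',\bbA')^2\mid\bbS,\bbA]\bigr]=\int f^2\,d\nu_+\le\gamma^{-1}\norm{f}_{\ccalL^2(\nu)}^2,
\]
which gives the factor $\sqrt\gamma$ after multiplying by $\gamma$.

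\emph{Part (b).} On $E_M$, $\bbM_i^{\bbtheta}$ is invertible. Part (a) of Lemma~\ref{lem:excitation} gives $\sym(\bbM)\succeq(1-\sqrt\gamma)\bbH$. If $\bbH\bbx=0$ for some $\bbx\neq0$, then $\bbx^\top\tdbbphi^{\rm raw}(\bbZ)=0$ $\nu$-a.s., hence also $\nu_+$-a.s., so $\bbM\bbx=0$, contradicting invertibility. Thus $\bbH\succ0$, the map $\bbw\mapsto\inner{\tdbbphi^{\rm raw}}{\bbw}$ is injective into $\ccalL^2(\nu)$, and $\Pi_i g=\inner{\tdbbphi^{\rm raw}}{\bbH^{-1}\mbE[\tdbbphi^{\rm raw}g]}$. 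Substituting $q=\inner{\tdbbphi^{\rm raw}}{\bbw}$ into the projected equation and using $\mbE[\tdbbphi^{\rm raw}(\bbZ)\,\ccalP^\pi q]=\bbC\bbw$ reduces it to $\bbH\bbw=\bbb+\gamma\bbC\bbw$, i.e.\ $\bbM\bbw=\bbb$. Uniqueness follows from invertibility of $\bbM$.

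\emph{Part (c).} Use the Bellman identity $Q_i^\pi=r_i+\gamma\ccalP^\pi Q_i^\pi$; this requires checking that the successor pair in (D1) matches the definition of $Q_i^\pi$, where the next action is drawn from $\pi$. Then
\[
\Pi_iQ_i^\pi-q^\ast=\Pi_i\gamma\ccalP^\pi(Q_i^\pi-q^\ast).
\]
Combining this with Pythagoras, $\norm{Q-q^\ast}^2=\norm{Q-\Pi Q}^2+\norm{\Pi Q-q^\ast}^2$, nonexpansiveness of $\Pi_i$, and part (a) gives
\[
\norm{Q-q^\ast}^2\le\dist^2+\gamma\norm{Q-q^\ast}^2,
\]
hence the factor $1/(1-\gamma)$. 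The bound $\dist(Q_i^\pi,\ccalV_i)\le\norm{e_i}_{\ccalL^2(\nu)}\le\Bk$ comes from Theorem~\ref{thm:q_linear} by taking $\bbu_{i,\kappa}^\pi$ as the competitor. This is where $E_{\rm rff}$ enters.

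\emph{Part (d).} Write $Q_i^\pi=\inner{\tdbbphi^{\rm raw}}{\bbu}+e_i$ and substitute into the Bellman identity. This gives $r_i=\inner{\tdbbphi^{\rm raw}}{\bbu}-\gamma\ccalP^\pi\inner{\tdbbphi^{\rm raw}}{\bbu}+e_i-\gamma\ccalP^\pi e_i$. Taking $\mbE[\tdbbphi^{\rm raw}(\bbZ)\,\cdot\,]$ yields
\[
\bbb=\bbM\bbu+\mbE\bigl[\tdbbphi^{\rm raw}(\bbZ)\bigl(e_i(\bbS,\bbA)-\gamma e_i(\bbS',\bbA')\bigr)\bigr].
\]
The residual has norm at most $(1+\gamma)L\Bk$, so $\norm{\bbw^\ast-\bbu}\le(1+\gamma)L\Bk/\mu_M$. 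For the sup bound,
\[
\abs{q^\ast-Q}\le\abs{\inner{\tdbbphi^{\rm raw}}{\bbw^\ast-\bbu}}+\abs{e_i}\le L\norm{\bbw^\ast-\bbu}+\Bk=(1+C')\Bk.
\]

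\emph{Part (e).} This is immediate: $\norm{\bbw^\ast}\le\norm{\bbb}/\sigma_{\min}(\bbM)\le\norm{\bbb}/\mu_M$, and $\norm{\bbb}\le L\bbarr$.

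\emph{Main obstacle.} The delicate step is (a)--(c): well-definedness on equivalence classes and the justification of the Bellman identity in $\ccalL^2(\nu)$ under occupancy sampling rather than stationarity. I would handle both uniformly through the domination $\nu_+\le\nu/\gamma$.
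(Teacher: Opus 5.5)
Your proposal is correct and follows the paper's proof essentially step for step. The shared steps are: the domination $\nu_+\leq\nu/\gamma$ with conditional Jensen for (a); the normal equations $\bbM\bbw=\bbb$ for (b); Pythagoras plus the $\sqrt\gamma$-contraction for (c); and substituting the representation of Theorem~\ref{thm:q_linear} into the Bellman equation for (d)--(e). The only difference is cosmetic: to get $\bbH\succ\bbzero$ you derive $\bbM\bbx=\bbzero$ using $\nu_+$-domination, whereas the paper derives $\bbx^\top\bbM=\bbzero$, which needs no domination.
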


	The proof is in Appendix~\ref{app:pop_L2}. Item (c) is the statement the actor analysis uses: it bounds the prediction error in the sampling measure and contains no $\mu_M^{-1}$ multiplier. Rescaling the features changes coefficient conditioning without changing the span $\ccalV_i$, so a coefficient-norm bound such as (d) can exaggerate the approximation difficulty; the dimensional obstruction of Proposition~\ref{prop:mu_M_obstruction} constrains coefficient conditioning, and thereby the finite-sample budget below, but it does not force a worsening population prediction error. The argument follows the projected Bellman viewpoint of~\cite{tsitsiklis1997analysis}, with the contraction (a) proved for the discounted-occupancy measure of (D1) rather than imported from a stationary-distribution setting.

	\begin{lemma}[Projection radius]\label{lem:bounded_target}
		Fix the projection radius of Algorithm~\ref{alg:cdcpg} as the deterministic constant $W^\ast:=L\bbarr/\mu_M$, which depends only on the primitives $(\bbarr,L,\mu_M)$. Conditional on $E_M$ alone, every population LSTD solution satisfies $\norm{\bbw_i^\ast}=\norm{(\bbM_i^{\bbtheta})^{-1}\bbb_i^{\bbtheta}}\leq\norm{\bbb_i^{\bbtheta}}/\mu_M\leq L\bbarr/\mu_M=W^\ast$ (this is Lemma~\ref{lem:pop_L2}(e), which uses only $E_M$), so the norm projection onto $\{\norm{\bbw}\leq W^\ast\}$ does not increase the distance to $\bbw_i^\ast$.
	\end{lemma}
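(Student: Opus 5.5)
The statement has two parts: a norm bound on the population LSTD solution, and the non-expansiveness of the Euclidean projection onto the ball $\{\norm{\bbw}\leq W^\ast\}$ relative to $\bbw_i^\ast$. I would prove them in that order, working on the event $E_M$ only.

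\emph{Invertibility.} On $E_M$ we have $\sigma_{\min}(\bbM_i^{\bbtheta})\geq\mu_M>0$ for every $i$ and every $\bbtheta\in\Theta_0$. So $\bbM_i^{\bbtheta}$ is invertible and $\norm{(\bbM_i^{\bbtheta})^{-1}}=1/\sigma_{\min}(\bbM_i^{\bbtheta})\leq 1/\mu_M$. Hence $\bbw_i^\ast=(\bbM_i^{\bbtheta})^{-1}\bbb_i^{\bbtheta}$ is well defined, and
\[
\norm{\bbw_i^\ast}\leq\norm{\bbb_i^{\bbtheta}}/\mu_M .
\]
This step does not use $E_{\rm rff}$ or the projected Bellman structure of Lemma~\ref{lem:pop_L2}. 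It is the same computation as Lemma~\ref{lem:pop_L2}(e), and I would redo it here so that only $E_M$ is needed.

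\emph{Bounding $\bbb_i^{\bbtheta}$.} By definition $\bbb_i^{\bbtheta}=\mbE[\tdbbphi^{\rm raw}(\bbZ)\,r_i]$. Since the norm of a Bochner integral is at most the integral of the norm, $\norm{\bbb_i^{\bbtheta}}\leq\mbE[\norm{\tdbbphi^{\rm raw}(\bbZ)}\,\abs{r_i}]$. Definition~\ref{def:aug_features} gives $\norm{\tdbbphi^{\rm raw}}\leq L$ pointwise, and (A2) gives $\abs{r_i}\leq\bbarr$. Therefore $\norm{\bbb_i^{\bbtheta}}\leq L\bbarr$ and $\norm{\bbw_i^\ast}\leq L\bbarr/\mu_M=W^\ast$. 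The radius $W^\ast$ depends only on $(\bbarr,L,\mu_M)$, so it is deterministic and can be fixed in Algorithm~\ref{alg:cdcpg} before any sampling.

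\emph{Projection.} The ball $\ccalB:=\{\norm{\bbw}\leq W^\ast\}$ is closed and convex, and the Euclidean projection $\Pi_{\ccalB}$ onto it is nonexpansive. Since $\bbw_i^\ast\in\ccalB$, we have $\Pi_{\ccalB}\bbw_i^\ast=\bbw_i^\ast$, and therefore
\[
\norm{\Pi_{\ccalB}\hat\bbw-\bbw_i^\ast}=\norm{\Pi_{\ccalB}\hat\bbw-\Pi_{\ccalB}\bbw_i^\ast}\leq\norm{\hat\bbw-\bbw_i^\ast}
\]
for any estimate $\hat\bbw$, including the zero fallback of Definition~\ref{def:lstd}.

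\emph{Main obstacle.} The only delicate point is bookkeeping: $E_M$ must be the sole event used, so that the radius is valid uniformly over $\Theta_0$ (and hence at every iterate) without conditioning on the RFF event. The argument is otherwise routine.
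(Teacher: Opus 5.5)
Your proposal is correct and follows the paper's argument. The paper justifies the lemma by Lemma~\ref{lem:pop_L2}(e), which gives $\norm{(\bbM_i^{\bbtheta})^{-1}}\leq 1/\mu_M$ on $E_M$ and $\norm{\bbb_i^{\bbtheta}}\leq L\bbarr$ by Jensen for the Bochner integral, and then uses nonexpansiveness of the Euclidean projection onto a ball that contains $\bbw_i^\ast$; redoing part~(e) directly so that only $E_M$ is invoked is exactly the point the paper makes.
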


	Off the favorable events, the analysis uses only the deterministic bounds $\norm{\hat\bbw_i^{(k)}}\leq W^\ast$ (norm projection) and hence $\abs{\hat Q_i^{(k)}}\leq LW^\ast$.

	\begin{theorem}[LSTD concentration, fixed-$(i,k)$]\label{thm:lstd_error}
		Under Sets~A--D, fix $i\in[n]$ and $k<K$, $\delta\in(0,1)$, $\lambda\geq0$, and $0\leq\tau_{\rm sv}<\mu_M/2$. If $M_s\geq C_0 L^4\log(4(m+1)/\delta)/\mu_M^2$, then for almost every realization of the pre-iteration history (the $\sigma$-algebra $\ccalF_k^-$ of Section~\ref{sec:randomness}) on $E_M\cap E_{\rm rff}$, with conditional probability at least $1-\delta$ over the fresh critic dataset $\ccalD_s^{(k)}$,
		\feq{
			&\norm{\hat\bbw_i^{(k),\rm raw}-\bbw_i^\ast(\bbtheta^{(k)})}\\
			&\quad\leq\tfrac{2C_v L(\bbarr+(1+\gamma)LW^\ast)}{\mu_M}\sqrt{\tfrac{\log(4(m+1)/\delta)}{M_s}}+\tfrac{2\lambda W^\ast}{\mu_M}.\label{eq:lstd_error}
		}
		The uniform version over all $i\in[n]$, $k<K$ holds with $\delta$ replaced by $\delta/(nK)$. Here $C_0,C_v>0$ are universal constants; $C_0=512$ and $C_v=\sqrt2$ suffice.
	\end{theorem}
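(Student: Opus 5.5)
Conditional on $\ccalF_k^-$, the iterate $\bbtheta:=\bbtheta^{(k)}$ is fixed, the features are fixed, and $\ccalD_s^{(k)}$ consists of $M_s$ i.i.d.\ tuples with the product law of (D1). The plan is a standard perturbation argument for linear systems, $\bbA\bbx=\bbb$ versus $\hat\bbA\hat\bbx=\hat\bbb$. We write $\bbM:=\bbM_i^{\bbtheta}$, $\bbb:=\bbb_i^{\bbtheta}$, $\hat\bbM:=\bbM_i^{(k)}$, $\hat\bbb:=\bbb_i^{(k)}$ and $\bbw^\ast:=\bbM^{-1}\bbb$, which is well defined on $E_M$. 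The proof then proceeds in three steps.

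\textbf{Step 1 (matrix Bernstein).} Each summand $\bbX_j:=\tdbbphi^{\rm raw}(\bbz^{(j)})(\tdbbphi^{\rm raw}(\bbz^{(j)})-\gamma\tdbbphi^{\rm raw}(\bbz'^{(j)}))^\top$ has norm at most $(1+\gamma)L^2\le 2L^2$. Its mean is $\bbM$. For the rectangular (here square, non-symmetric) matrix Bernstein inequality via the Hermitian dilation, the variance proxy $\max\{\norm{\mbE\bbX\bbX^\top},\norm{\mbE\bbX^\top\bbX}\}$ is bounded by $4L^4$. This gives, with probability at least $1-\delta/2$,
\[
\norm{\hat\bbM-\bbM}\le C_vL^2\sqrt{\log(4(m+1)/\delta)/M_s},
\]
with the lower-order $L^2\log/M_s$ term absorbed under the sample-size condition. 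Similarly, applying vector Bernstein (as a $(m+1)\times1$ matrix) to the summands $r_i^{(j)}\tdbbphi^{\rm raw}$, which have norm at most $\bbarr L$, gives
\[
\norm{\hat\bbb-\bbb}\le C_vL\bbarr\sqrt{\log(4(m+1)/\delta)/M_s}.
\]
A union bound over the two events gives total failure probability $\delta$.

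\textbf{Step 2 (invertibility and the threshold test).} The condition $M_s\ge C_0L^4\log(4(m+1)/\delta)/\mu_M^2$ makes $\norm{\hat\bbM-\bbM}\le\mu_M/2$. Weyl's inequality for singular values then gives $\sigma_{\min}(\hat\bbM)\ge\mu_M/2$. Lemma~\ref{lem:excitation}(a) cannot be applied directly to $\hat\bbM$, because the empirical matrix need not be accretive. Instead we use $\sigma_{\min}(\hat\bbM+\lambda\bbI)\ge\sigma_{\min}(\bbM+\lambda\bbI)-\norm{\hat\bbM-\bbM}$ together with Lemma~\ref{lem:excitation}(a) for the population matrix, which yields $\sigma_{\min}(\hat\bbM+\lambda\bbI)\ge\mu_M-\mu_M/2=\mu_M/2>\tau_{\rm sv}$. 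So the threshold test passes and $\hat\bbw^{\rm raw}=(\hat\bbM+\lambda\bbI)^{-1}\hat\bbb$.

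\textbf{Step 3 (algebraic decomposition).} We write
\[
(\hat\bbM+\lambda\bbI)(\hat\bbw^{\rm raw}-\bbw^\ast)=(\hat\bbb-\bbb)-(\hat\bbM-\bbM)\bbw^\ast-\lambda\bbw^\ast .
\]
By Lemma~\ref{lem:bounded_target}, $\norm{\bbw^\ast}\le W^\ast$. Multiplying by $(\hat\bbM+\lambda\bbI)^{-1}$, whose norm is at most $2/\mu_M$, gives
\[
\norm{\hat\bbw^{\rm raw}-\bbw^\ast}\le\tfrac{2}{\mu_M}\bigl(\norm{\hat\bbb-\bbb}+\norm{\hat\bbM-\bbM}W^\ast+\lambda W^\ast\bigr).
\]
The two deviation terms combine into a single term because the bounds of Step 1 share the factor $C_vL\sqrt{\log(4(m+1)/\delta)/M_s}$ and $L^2W^\ast\le(1+\gamma)L^2W^\ast$. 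The first deviation scale is $L\bbarr$ and the second is $(1+\gamma)L^2W^\ast$, and together they match~\eqref{eq:lstd_error}. For the uniform version, we apply the fixed-$(i,k)$ bound with $\delta/(nK)$ and take a union bound. The conditional statements integrate over $\ccalF_k^-$ because $\ccalD_s^{(k)}$ is fresh and independent given the history, by (D1).

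The main obstacle is Step 1's constant bookkeeping. Getting exactly $C_v=\sqrt2$ and $C_0=512$ requires tracking the Bernstein constants. The intended matching is a variance scale $(1+\gamma)L^2\cdot L$ for the matrix term against the $L$-scale for the vector term. A cruder variance bound, such as the $4L^4$ used above, only changes the universal constants. The other delicate point is that the dilation dimension $2(m+1)$ must produce the $\log(4(m+1)/\delta)$ factor after splitting $\delta$ in half. I would first try the dilation form of Tropp's inequality, which gives exactly $2\cdot 2(m+1)$ inside the logarithm.
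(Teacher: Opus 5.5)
Your proof is correct in structure and gives the bound with universal constants, but it takes a different route from the paper at the key step. Your invertibility step is the same as the paper's: you apply Weyl's inequality and use Lemma~\ref{lem:excitation}(a) for the population matrix, not the empirical one.

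The difference is in how you control $\bbb_i^{(k)}-\bbM_i^{(k)}\bbw^\ast$. You split it as $(\hat\bbb-\bbb)-(\hat\bbM-\bbM)\bbw^\ast$ and bound the second piece by $\norm{\hat\bbM-\bbM}\,W^\ast$, so matrix Bernstein serves both invertibility and the perturbation. The paper uses matrix Bernstein only for invertibility. It then concentrates the single vector $\bbzeta_i^{(k)}=\bbb_i^{(k)}-\bbM_i^{(k)}\bbw^\ast$ directly. Its i.i.d.\ summands $\tdbbphi^{\rm raw}(\bbz)\bigl(r_i-(\tdbbphi^{\rm raw}(\bbz)-\gamma\tdbbphi^{\rm raw}(\bbz'))^\top\bbw^\ast\bigr)$ are already mean zero, since $\bbM\bbw^\ast=\bbb$. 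They are also bounded by $B_Y=L(\bbarr+(1+\gamma)LW^\ast)$ with no centering loss. Pinelis's Hilbert-space Hoeffding inequality then gives $\norm{\bbzeta_i^{(k)}}\leq B_Y\sqrt{2\log(4/\delta)/M_s}$, with no lower-order term and no dimension factor.

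What each route buys:
\begin{itemize}
\item Your route uses only Bernstein-type tools and is the textbook perturbation argument.
\item Your operator-norm step genuinely carries the $\log(m+1)$ factor. This is harmless here only because the statement displays it; in the paper's route the vector term is dimension-free.
\item Your route cannot deliver the stated $C_v=\sqrt2$. Your $4L^4$ proxy gives leading constant $2\sqrt2$, so your displayed $C_vL^2$ is inconsistent with your own proxy. Even with the sharp proxy $(1+\gamma)^2L^4$, the two separately centered sums add Bernstein lower-order terms totalling $\tfrac43B_Y\log(4(m+1)/\delta)/M_s$. Under $M_s\geq512L^4\log(4(m+1)/\delta)/\mu_M^2$ and $\mu_M\leq2L^2$, these can only be absorbed into $C_v=13/(6\sqrt2)>\sqrt2$.
\end{itemize}

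So reaching $C_v=\sqrt2$ is not just careful constant tracking. It needs the paper's device of concentrating the pre-centered vector $(\hat\bbM-\bbM)\bbw^\ast$ in the fixed direction $\bbw^\ast$, rather than bounding it through the operator norm. Your $C_0=512$ does still work for invertibility. The slightly larger $C_v$ is also harmless downstream: Theorem~\ref{thm:joint_optimal} only needs $C_s\geq\max\{C_0,4C_v^2\}$, and $C_s=512$ still covers it.
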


	The proof (Appendix~\ref{app:lstd_concentration}) applies the rectangular matrix-Bernstein inequality~\cite[Thm.~6.1.1]{tropp2015introduction} to the centered i.i.d.\ summands of $\bbM_i^{(k)}-\bbM_i^{\bbtheta^{(k)}}$ through the Hermitian dilation, and the Hilbert-space Hoeffding-type inequality of Pinelis~\cite[Thm.~3.5]{pinelis1994optimum} to the centered vector $\bbb_i^{(k)}-\bbM_i^{(k)}\bbw_i^\ast$, which gives a dimension-free vector deviation; the displayed $\log(m+1)$ factor is retained in the vector term only as a harmless common envelope. Since $\norm{\bbw_i^\ast}\leq W^\ast$ (Lemma~\ref{lem:bounded_target}), nonexpansiveness of the norm projection gives $\norm{\hat\bbw_i^{(k)}-\bbw_i^\ast}\leq\norm{\hat\bbw_i^{(k),\rm raw}-\bbw_i^\ast}$, so the bound transfers verbatim to the projected coefficient used by the algorithm. We write $E_{\rm lstd}^{i,k}$ for the event of Theorem~\ref{thm:lstd_error} at confidence $\delta_\star$, the per-iteration confidence used in Theorem~\ref{thm:joint_optimal}.

	\begin{remark}[Role of the ridge parameter]\label{rmk:lambda_nonsymmetric}
		Under discounted-occupancy sampling the population TD matrix is accretive (Lemma~\ref{lem:excitation}(a)), so the shift $\lambda\bbI$ cannot decrease $\sigma_{\min}(\bbM_i^{\bbtheta}+\lambda\bbI)$ below $\sigma_{\min}(\bbM_i^{\bbtheta})$, and on the concentration event $\norm{\bbM_i^{(k)}-\bbM_i^{\bbtheta}}\leq r$ one has $\sigma_{\min}(\bbM_i^{(k)}+\lambda\bbI)\geq\mu_M-r$ for every $\lambda\geq0$. The empirical matrix need not be accretive on every sample path, so no unconditional monotonicity in $\lambda$ is claimed for $\bbM_i^{(k)}+\lambda\bbI$ itself. In the bound~\eqref{eq:lstd_error}, $\lambda$ contributes only the additive bias $2\lambda W^\ast/\mu_M$, which arises from the exact identity $(\bbM_i^{(k)}+\lambda\bbI)(\hat\bbw_i^{(k),\rm raw}-\bbw_i^\ast)=(\bbb_i^{(k)}-\bbM_i^{(k)}\bbw_i^\ast)-\lambda\bbw_i^\ast$; a fixed positive ridge therefore leaves a bias that increasing the sample count cannot remove, and a large ridge can worsen policy evaluation even while it stabilizes numerical inversion. Accordingly the parameter selection of Theorem~\ref{thm:joint_optimal} takes $\lambda^\ast=0$; a positive $\lambda$ is retained in Algorithm~\ref{alg:cdcpg} as an implementation option whose bias is carried explicitly, and near-singularity is handled by the singular-value threshold of Algorithm~\ref{alg:cdcpg}, line~5, whose tolerance $\tau_{\rm sv}$ lies below the proven margin $\mu_M/2$ (Definition~\ref{def:lstd}) and may in practice be tied to the concentration radius $r_M$ of Proposition~\ref{prop:diagnostic}.
	\end{remark}

	\subsection{Randomness and events}\label{sec:randomness}

	The analysis distinguishes three favorable events: $E_M$ (TD-stability, Assumption~\ref{asm:C}, probability $\geq 1-\delta_M(m)$ over the random-feature draw), $E_{\rm rff}$ (RFF uniform approximation, probability $\geq 1-\delta_{\rm rff}$), and $E_{\rm lstd}^{i,k}$ (per-iteration LSTD concentration, probability $\geq 1-\delta_\star$ over $\ccalD_s^{(k)}$ conditional on past randomness and $E_M\cap E_{\rm rff}$; formalized after Theorem~\ref{thm:lstd_error}). We use two filtrations to make conditional sampling claims precise. Let $\ccalF_{\rm RFF}:=\sigma\bigl(\{\bbomega_{i,\ell},b_{i,\ell}\}_{i,\ell}\bigr)$ collect the preprocessing randomness. Then
	\feq{
		\ccalF_k^-&:=\sigma\bigl(\ccalF_{\rm RFF},\,\bbtheta^{(0)},\,\{\ccalD_s^{(j)},\ccalD_g^{(j)}\}_{j<k},\,\bbtheta^{(k)}\bigr),\\
		\ccalF_k^Q&:=\sigma\bigl(\ccalF_k^-,\,\ccalD_s^{(k)},\,\{\hat\bbw_i^{(k)}\}_{i\in[n]}\bigr).
	}
	Under (D1), $\ccalD_s^{(k)}$ is i.i.d.\ from the joint one-step law $d^{\pi_{\bbtheta^{(k)}}}\otimes\pi_{\bbtheta^{(k)}}\otimes\mbP\otimes\pi_{\bbtheta^{(k)}}$ generating $(\bbs,\bba,\bbs',\bba')$, conditional on $\ccalF_k^-$, and $\ccalD_g^{(k)}$ is i.i.d.\ from $d^{\pi_{\bbtheta^{(k)}}}\otimes\pi_{\bbtheta^{(k)}}$ conditional on $\ccalF_k^Q$. Theorems below are conditional on the preprocessing favorable events $E_M\cap E_{\rm rff}\in\ccalF_{\rm RFF}$, with the failure budget of the per-iteration events absorbed in expectation; the expectation guarantees do not require every per-iteration event to succeed simultaneously, so no union bound over iterations is taken in the main theorems.

	\subsection{Critic error under local parameter selection}

	Here and below, $\mbE_{\rm alg}$ denotes the conditional expectation $\mbE[\,\cdot\mid\ccalF_k^-]$ over the current critic dataset $\ccalD_s^{(k)}$, evaluated at a pre-iteration history in $E_M\cap E_{\rm rff}$; bounds conditional only on the preprocessing events and $\bbtheta^{(k)}$ follow by the tower property. Set $A_Q:=LW^\ast+Q_{\max}$, the deterministic pointwise bound on $\abs{\hat Q_i^{(k)}-Q_i^\pi}$, and $B_Y:=L(\bbarr+(1+\gamma)LW^\ast)$.

	\begin{theorem}[Critic error]\label{thm:joint_optimal}
		Under Sets~A--D with $\kappa\geq\max\{1,\kappa_\pi+1\}$, $m$ satisfying~\eqref{eq:rff_feasibility} at confidence $\delta_{\rm rff}$, and the threshold of Definition~\ref{def:lstd} satisfying $0\leq\tau_{\rm sv}<\mu_M/2$, fix $i\in[n]$ and $k<K$, and choose $\epsilon_Q>0$, $\delta_\star\in(0,1)$. With $\ell_\star:=\log(4(m+1)/\delta_\star)$ and the parameter choices
		\feq{
			\lambda^\ast=0,\qquad
			M_s\geq C_s\max\Bigl\{\frac{L^4\ell_\star}{\mu_M^2},\ \frac{L^2B_Y^2\ell_\star}{\mu_M^2\epsilon_Q^2}\Bigr\}\label{eq:Ms_choice}
		}
		for a universal constant $C_s$ (one may take $C_s=512$), conditional on $E_M\cap E_{\rm rff}$:
		\begin{enumerate}
			\item[(a)] for almost every pre-iteration history $\ccalF_k^-$ in $E_M\cap E_{\rm rff}$, with conditional probability at least $1-\delta_\star$ over $\ccalD_s^{(k)}$, the deployed critic $\hat Q_i^{(k)}=\inner{\tdbbphi_{i,\kappa}^{\rm raw}}{\hat\bbw_i^{(k)}}$ satisfies $\sup_{\bbz}\abs{\hat Q_i^{(k)}(\bbz)-q_i^\ast(\bbz)}\leq\epsilon_Q$, hence
			\begin{equation}\label{eq:joint_optimal_bound}
				\begin{aligned}
					\norm{\hat Q_i^{(k)}-Q_i^\pi}_{\ccalL^2(\nu^\pi)}^2&\leq2\epsilon_Q^2+\frac{2\Bk^2}{1-\gamma},\\
					\sup_{\bbs,\bba}\abs{\hat Q_i^{(k)}-Q_i^\pi}&\leq\epsilon_Q+(1+C')\Bk;
				\end{aligned}
			\end{equation}
			\item[(b)] if moreover $\delta_\star\leq\epsilon_Q^2/A_Q^2$, then
			\feq{
				\mbE_{\rm alg}\norm{\hat Q_i^{(k)}-Q_i^\pi}_{\ccalL^2(\nu^\pi)}^2\leq 3\epsilon_Q^2+\frac{2\Bk^2}{1-\gamma}.\label{eq:critic_L2}
			}
		\end{enumerate}
		The total unconditional failure probability of (a) is at most $\delta_M+\delta_{\rm rff}+\delta_\star$; a uniform version over $i\in[n]$ and $k<K$ follows by replacing $\delta_\star$ with $\delta_\star/(nK)$. If a positive implementation value $\lambda\leq\mu_M\epsilon_Q/(4LW^\ast)$ is used instead of $\lambda^\ast=0$, the additional bias is at most $\epsilon_Q/2$ and the conclusions hold with $\epsilon_Q$ replaced by $\tfrac32\epsilon_Q$.
	\end{theorem}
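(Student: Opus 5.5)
The plan is to combine the fixed-$(i,k)$ concentration bound of Theorem~\ref{thm:lstd_error}, taken at $\lambda=\lambda^\ast=0$ and $\delta=\delta_\star$, with the population bounds of Lemma~\ref{lem:pop_L2}, using a triangle inequality in $\ccalL^2(\nu^\pi)$ and in sup norm. Throughout, fix a pre-iteration history in $E_M\cap E_{\rm rff}$. Since $\bbtheta^{(k)}$ is $\ccalF_k^-$-measurable, the objects $\bbw_i^\ast(\bbtheta^{(k)})$, $q_i^\ast$, $\nu^\pi$ and $Q_i^\pi$ are all deterministic under the conditional law of $\ccalD_s^{(k)}$. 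Lemma~\ref{lem:pop_L2} therefore applies pointwise at $\bbtheta=\bbtheta^{(k)}$.

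\emph{Step 1: sample size gives coefficient accuracy.} The first branch of~\eqref{eq:Ms_choice} with $C_s=512=C_0$ verifies the hypothesis $M_s\geq C_0L^4\ell_\star/\mu_M^2$ of Theorem~\ref{thm:lstd_error}. With $\lambda=0$, on the event $E_{\rm lstd}^{i,k}$ (conditional probability $\geq1-\delta_\star$) the bound~\eqref{eq:lstd_error} reduces to
\[
\norm{\hat\bbw_i^{(k),\rm raw}-\bbw_i^\ast}\leq 2C_vB_Y\mu_M^{-1}\sqrt{\ell_\star/M_s}.
\]
The second branch of~\eqref{eq:Ms_choice} makes this at most $\epsilon_Q/L$, because $4C_v^2=8\leq512$. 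Lemma~\ref{lem:bounded_target} gives $\norm{\bbw_i^\ast}\leq W^\ast$, so by nonexpansiveness of the norm projection the same bound holds for the deployed coefficient $\hat\bbw_i^{(k)}$. Also, $\tau_{\rm sv}<\mu_M/2$ ensures that the threshold test of Definition~\ref{def:lstd} passes on this event, so the estimator is the inverse-based one analyzed. Cauchy--Schwarz with $\norm{\tdbbphi^{\rm raw}_{i,\kappa}}\leq L$ then gives $\sup_{\bbz}\abs{\hat Q_i^{(k)}-q_i^\ast}\leq L\norm{\hat\bbw_i^{(k)}-\bbw_i^\ast}\leq\epsilon_Q$.

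\emph{Step 2: item (a).} For the $\ccalL^2$ claim, write $\hat Q-Q=(\hat Q-q^\ast)+(q^\ast-Q)$ and use $(a+b)^2\leq 2a^2+2b^2$. The sup bound controls the first term in $\ccalL^2(\nu^\pi)$ by $\epsilon_Q$, and Lemma~\ref{lem:pop_L2}(c) controls the second by $\Bk^2/(1-\gamma)$. For the sup claim, the same decomposition together with Lemma~\ref{lem:pop_L2}(d) gives $\epsilon_Q+(1+C')\Bk$.

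\emph{Step 3: item (b).} Split $\mbE_{\rm alg}$ according to $E_{\rm lstd}^{i,k}$ and its complement.
\begin{itemize}
\item On the event, the bound of (a) holds.
\item Off the event, we use only the deterministic envelope $\abs{\hat Q_i^{(k)}-Q_i^\pi}\leq LW^\ast+Q_{\max}=A_Q$, which comes from the projection radius and $\abs{Q_i^\pi}\leq Q_{\max}$.
\end{itemize}
This gives $\mbE_{\rm alg}\norm{\cdot}^2\leq 2\epsilon_Q^2+2\Bk^2/(1-\gamma)+\delta_\star A_Q^2$. The condition $\delta_\star\leq\epsilon_Q^2/A_Q^2$ turns the last term into at most $\epsilon_Q^2$, which yields~\eqref{eq:critic_L2}.

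\emph{Step 4: remaining claims.} The unconditional failure probability $\delta_M+\delta_{\rm rff}+\delta_\star$ follows by a union bound over $E_M^c$, $E_{\rm rff}^c$ and the conditional failure event, with the tower property applied to the latter. The uniform version follows by a union bound over the $nK$ pairs, using the uniform form of Theorem~\ref{thm:lstd_error}. For a positive $\lambda\leq\mu_M\epsilon_Q/(4LW^\ast)$, the extra term $2\lambda W^\ast/\mu_M$ in~\eqref{eq:lstd_error} is at most $\epsilon_Q/(2L)$. After multiplying by $L$ it becomes an extra $\epsilon_Q/2$ in sup norm, so Steps 2--3 go through with $\tfrac32\epsilon_Q$ in place of $\epsilon_Q$.

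\emph{Main obstacle.} The main delicacy is not a calculation but conditioning. The conditional-probability statement of Theorem~\ref{thm:lstd_error} holds only for almost every history in $E_M\cap E_{\rm rff}$, and the measure $\nu^\pi$ in the $\ccalL^2$ norm must be frozen at $\bbtheta^{(k)}$. This must be done so that $\mbE_{\rm alg}$ integrates only over $\ccalD_s^{(k)}$ and the off-event envelope is applied pathwise.
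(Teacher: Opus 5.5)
Your proposal is correct and follows the paper's proof essentially step for step. The paper also:
\begin{itemize}
\item applies Theorem~\ref{thm:lstd_error} at $\lambda=0$ and $\delta=\delta_\star$, and uses nonexpansiveness of the norm projection together with the check $4C_v^2=8\leq C_s$ to get $\sup\lvert\hat Q_i^{(k)}-q_i^\ast\rvert\leq\epsilon_Q$;
\item invokes Lemma~\ref{lem:pop_L2}(c)--(d) for item (a);
\item uses the off-event envelope $A_Q$ with $\delta_\star A_Q^2\leq\epsilon_Q^2$ for item (b);
\item treats the positive-ridge and union-bound remarks exactly as you do.
\end{itemize}
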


	The proof is in Appendix~\ref{app:joint_optimal}. The $\ccalL^2$ bound~\eqref{eq:critic_L2} is the quantity the actor analysis uses; it carries the population residual $\Bk^2/(1-\gamma)$ without any conditioning multiplier, while the conditioning threshold $\mu_M$ enters through the sample budget~\eqref{eq:Ms_choice}, the radius $W^\ast$, and the deterministic bound $A_Q$. Three coefficient vectors play distinct roles. The representation coefficient $\bbu_{i,\kappa}^\pi$ establishes that the feature space can approximate $Q_i^\pi$; it need not solve the population TD equations. The population coefficient $\bbw_i^\ast$ solves those equations, and the estimated coefficient $\hat\bbw_i^{(k)}$ is computed from samples and then projected. Lemma~\ref{lem:pop_L2} compares $Q_i^\pi$ with the population prediction $q_i^\ast$, and Theorem~\ref{thm:lstd_error} bounds the additional error from estimating $\bbw_i^\ast$.

	\subsection{Distributed policy gradient}

	Recall from Section~\ref{sec:pre} the aggregation radius $\kappa_c=\kappa+\kappa_\pi$ and the standing inequality $\kappa\geq\kappa_\pi+1$. The gradient estimator at agent $i$, derived from the per-agent policy-gradient identity~\eqref{eq:pg_theorem_agent}, is
	\begin{equation}\label{eq:gradient_estimator}
		\begin{aligned}
			\hat\bbg_i^{(k)}&:=\tfrac{1}{(1-\gamma)M_g}\!\sum_{j=1}^{M_g}\!\hat Q_{i,\kappa_c}^{(k)}(\bbs^{(j)},\bba^{(j)})\\
			&\qquad\quad\cdot\nabla_{\bbtheta_i}\log\pi_i(\bba_i^{(j)}\mid\bbs_{\ccalN_i^{\kappa_\pi}}^{(j)}),
		\end{aligned}
	\end{equation}
	with $\hat Q_{i,\kappa_c}^{(k)}(\bbs,\bba):=\sum_{\ell\in\ccalN_i^{\kappa_c}}\hat Q_\ell^{(k)}(\bbs_{\ccalN_\ell^{\kappa+1}},\bba_{\ccalN_\ell^{\kappa+1}})$ and $\hat Q_\ell^{(k)}(\bbz):=\tdbbphi_{\ell,\kappa}^{\rm raw}(\bbz)^\top\hat\bbw_\ell^{(k)}$, so that $\abs{\hat Q_\ell^{(k)}}\leq LW^\ast$ deterministically by the norm projection. No value clipping is applied; the deployed critic is the linear function analyzed in Theorem~\ref{thm:joint_optimal}.

	\begin{lemma}[Gradient bias]\label{lem:grad_bias}
		Under the conditions of Theorem~\ref{thm:joint_optimal}(b) and the graph-tail assumption $\tilde\delta:=(\Delta-1)\rho<1$, conditional on $E_M\cap E_{\rm rff}$,
		\begin{equation}\label{eq:grad_bias_bound}
			\begin{aligned}
				&\norm{\mbE[\hat\bbg_i^{(k)}\mid\ccalF_k^-]-\nabla_{\bbtheta_i}J(\bbtheta^{(k)})}\\
				&\quad\leq\tfrac{GD_{\kappa_c}}{1-\gamma}\Bigl(\sqrt 3\,\epsilon_Q+\sqrt{\tfrac{2}{1-\gamma}}\,\Bk\Bigr)+\epsilon_{\rm agg}(\kappa),
			\end{aligned}
		\end{equation}
		with the aggregation-bias term
		\feq{
			\epsilon_{\rm agg}(\kappa):=\tfrac{Gc}{(1-\gamma)(1-\tilde\delta)}\cdot\tfrac{\Delta}{\Delta-1}\tilde\delta^{\kappa_c+1}.\label{eq:eps_agg_def}
		}
	\end{lemma}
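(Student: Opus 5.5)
The plan is to split the bias into a critic-error part over the aggregation neighborhood and a truncation part over the omitted agents. By (D1), conditional on $\ccalF_k^Q$ the actor batch $\ccalD_g^{(k)}$ is i.i.d.\ from $\nu^\pi$ with $\pi=\pi_{\bbtheta^{(k)}}$. Hence, by the tower property,
\[
\mbE[\hat\bbg_i^{(k)}\mid\ccalF_k^-]=\tfrac{1}{1-\gamma}\,\mbE_{\rm alg}\,\mbE_{\nu^\pi}\Bigl[\textstyle\sum_{\ell\in\ccalN_i^{\kappa_c}}\hat Q_\ell^{(k)}\,\nabla_{\bbtheta_i}\log\pi_i\Bigr].
\]
We subtract the per-agent identity~\eqref{eq:pg_theorem_agent} and split $\sum_{\ell=1}^n Q_\ell^\pi$ into $\ell\in\ccalN_i^{\kappa_c}$ and $\ell\notin\ccalN_i^{\kappa_c}$. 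The difference is then
\[
\tfrac{1}{1-\gamma}\Bigl(\mbE_{\rm alg}\textstyle\sum_{\ell\in\ccalN_i^{\kappa_c}}\mbE_{\nu^\pi}[(\hat Q_\ell^{(k)}-Q_\ell^\pi)\nabla_{\bbtheta_i}\log\pi_i]\;-\;\sum_{\ell\notin\ccalN_i^{\kappa_c}}\mbE_{\nu^\pi}[Q_\ell^\pi\nabla_{\bbtheta_i}\log\pi_i]\Bigr).
\]
Everything is bounded, since $\abs{\hat Q_\ell^{(k)}}\leq LW^\ast$ and the score is bounded by $G$ under (A3). So Fubini and the interchange of $\mbE_{\rm alg}$ and $\mbE_{\nu^\pi}$ are legitimate.

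For the critic part, fix $\ell\in\ccalN_i^{\kappa_c}$. Cauchy--Schwarz in $\ccalL^2(\nu^\pi)$ together with $\norm{\nabla_{\bbtheta_i}\log\pi_i}\leq G$ bounds the inner expectation by $G\norm{\hat Q_\ell^{(k)}-Q_\ell^\pi}_{\ccalL^2(\nu^\pi)}$. Next, $\mbE_{\rm alg}X\leq(\mbE_{\rm alg}X^2)^{1/2}$ and Theorem~\ref{thm:joint_optimal}(b), applied to agent $\ell$ (which is admissible since that theorem holds for each fixed index), give
\[
\mbE_{\rm alg}\norm{\hat Q_\ell^{(k)}-Q_\ell^\pi}_{\ccalL^2(\nu^\pi)}\leq\Bigl(3\epsilon_Q^2+\tfrac{2\Bk^2}{1-\gamma}\Bigr)^{1/2}\leq\sqrt3\,\epsilon_Q+\sqrt{\tfrac{2}{1-\gamma}}\,\Bk,
\]
where the last step is subadditivity of the square root. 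Summing over at most $D_{\kappa_c}$ indices gives the first term of~\eqref{eq:grad_bias_bound}.

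The main step is the omitted-agent term, where we use that the score has zero conditional mean. Fix $\ell$ with $d:=d(i,\ell)\geq\kappa_c+1$ and set $r:=d-1\geq\kappa_c\geq1$, so that $i\notin\ccalN_\ell^{r}$. Define $\tilde Q_\ell(\bbs,\bba):=Q_\ell^\pi(\bbs_{\ccalN_\ell^r},\bbs^\circ_{\rm rest},\bba_{\ccalN_\ell^r},\bba^\circ_{\rm rest})$ by freezing all coordinates outside $\ccalN_\ell^r$ at a reference pair; this function is measurable and does not depend on $\bba_i$. By (B1) with radius $r$, $\abs{Q_\ell^\pi-\tilde Q_\ell}\leq c\rho^{r+1}=c\rho^{d}$ pointwise. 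Under $\nu^\pi$ the actions are conditionally independent given $\bbs$, and $\int\nabla_{\bbtheta_i}\pi_i(\bba_i\mid\cdot)\,d\bba_i=\nabla_{\bbtheta_i}1=0$; this differentiation under the integral is justified by the bounded score of (A3) and dominated convergence, as in Step~0 of Appendix~\ref{app:LJ_proof}. Therefore $\mbE_{\nu^\pi}[\tilde Q_\ell\nabla_{\bbtheta_i}\log\pi_i]=0$, and
\[
\norm{\mbE_{\nu^\pi}[Q_\ell^\pi\nabla_{\bbtheta_i}\log\pi_i]}\leq Gc\rho^{d}.
\]
I expect the delicate part to be confirming that (B1) applies at the required radius $r$ to pairs that differ in both state and action coordinates outside $\ccalN_\ell^r$. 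The statement "agreeing on $\ccalN_\ell^r$ at the state-action level" covers exactly this case.

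Finally we count agents. In a graph of maximum degree $\Delta$, at most $\Delta(\Delta-1)^{d-1}$ agents lie at distance exactly $d$ from $i$. Summing the per-agent bound over $d\geq\kappa_c+1$ gives
\[
\sum_{d\geq\kappa_c+1}Gc\,\Delta(\Delta-1)^{d-1}\rho^{d}=Gc\,\tfrac{\Delta}{\Delta-1}\sum_{d\geq\kappa_c+1}\tilde\delta^{\,d}=Gc\,\tfrac{\Delta}{\Delta-1}\cdot\tfrac{\tilde\delta^{\,\kappa_c+1}}{1-\tilde\delta},
\]
where the geometric series converges because $\tilde\delta<1$. Multiplying by $1/(1-\gamma)$ gives exactly $\epsilon_{\rm agg}(\kappa)$ in~\eqref{eq:eps_agg_def}, and the triangle inequality combines the two parts into~\eqref{eq:grad_bias_bound}.
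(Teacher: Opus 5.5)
Your proof is correct and follows the paper's argument. The paper makes the same split into a critic-error term, handled by Cauchy--Schwarz, Jensen, and \eqref{eq:critic_L2} for each neighbor, and an omitted-reward term, handled by zero-mean score cancellation against a surrogate independent of $\bba_i$, (B1) at radius $d(i,\ell)-1$, and shell counting. The only differences are cosmetic: you freeze every coordinate outside $\ccalN_\ell^{d(i,\ell)-1}$ rather than only $\bba_i$, and either choice works. You should also add one line noting that agents in other connected components contribute zero, because (B1) then applies at every finite radius.
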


	\begin{lemma}[Critic-conditional bias-squared]\label{lem:bias_squared}
		Let $\bbd_i^{(k)}:=\mbE[\hat\bbg_i^{(k)}\mid\ccalF_k^Q]-\nabla_{\bbtheta_i}J(\bbtheta^{(k)})$ denote the critic-conditional bias and $A_\kappa:=G^2D_{\kappa_c}^2/(1-\gamma)^2$. Under the conditions of Lemma~\ref{lem:grad_bias}, for every $i$ and $k$, almost surely on $E_M\cap E_{\rm rff}$,
		\begin{equation}\label{eq:eps_b_bound}
			\mbE\bigl[\norm{\bbd_i^{(k)}}^2\,\big|\,\ccalF_k^-\bigr]\leq\bar\epsilon_b^2:=2\epsilon_{\rm agg}(\kappa)^2+2A_\kappa\Bigl(3\epsilon_Q^2+\frac{2\Bk^2}{1-\gamma}\Bigr).
		\end{equation}
	\end{lemma}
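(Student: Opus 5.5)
We decompose the critic-conditional bias into an aggregation part and a critic-error part, then square. Since $\ccalD_g^{(k)}$ is i.i.d.\ from $\nu^{\pi}$ with $\pi=\pi_{\bbtheta^{(k)}}$ conditional on $\ccalF_k^Q$, and the weights $\hat\bbw_\ell^{(k)}$ are $\ccalF_k^Q$-measurable, we have
\[
\mbE[\hat\bbg_i^{(k)}\mid\ccalF_k^Q]=\tfrac{1}{1-\gamma}\mbE_{\nu^\pi}\bigl[\hat Q_{i,\kappa_c}^{(k)}\,\nabla_{\bbtheta_i}\log\pi_i\bigr].
\]
Subtracting~\eqref{eq:pg_theorem_agent} gives $\bbd_i^{(k)}=\bbd_{i,\rm agg}+\bbd_{i,\rm crit}^{(k)}$, where
\[
\bbd_{i,\rm agg}:=-\tfrac{1}{1-\gamma}\mbE_{\nu^\pi}\Bigl[\textstyle\sum_{\ell\notin\ccalN_i^{\kappa_c}}Q_\ell^\pi\,\nabla_{\bbtheta_i}\log\pi_i\Bigr],
\]
\[
\bbd_{i,\rm crit}^{(k)}:=\tfrac{1}{1-\gamma}\mbE_{\nu^\pi}\Bigl[\textstyle\sum_{\ell\in\ccalN_i^{\kappa_c}}(\hat Q_\ell^{(k)}-Q_\ell^\pi)\,\nabla_{\bbtheta_i}\log\pi_i\Bigr].
\]
The first term depends only on $\bbtheta^{(k)}$ and is $\ccalF_k^-$-measurable. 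Applying $\norm{\bba+\bbb}^2\leq2\norm{\bba}^2+2\norm{\bbb}^2$ reduces the claim to two bounds: $\norm{\bbd_{i,\rm agg}}\leq\epsilon_{\rm agg}(\kappa)$, and $\mbE[\norm{\bbd_{i,\rm crit}^{(k)}}^2\mid\ccalF_k^-]\leq A_\kappa(3\epsilon_Q^2+2\Bk^2/(1-\gamma))$.

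For the aggregation term we reuse the argument behind $\epsilon_{\rm agg}$ in Lemma~\ref{lem:grad_bias}, which bounds exactly the same deterministic quantity. For $\ell\notin\ccalN_i^{\kappa_c}$ we have $d(i,\ell)>\kappa+\kappa_\pi$. Let $\bar Q_\ell$ be $Q_\ell^\pi$ with the coordinates in $\ccalN_i^{\kappa_\pi}$ frozen to a reference value. This function does not depend on $(\bba_i,\bbs_{\ccalN_i^{\kappa_\pi}})$, so the score identity $\mbE_{\bba_i\sim\pi_i}[\nabla\log\pi_i]=\bbzero$ kills its contribution. Assumption (B1) then bounds $\abs{Q_\ell^\pi-\bar Q_\ell}\leq c\rho^{d(i,\ell)-\kappa_\pi}$, and the score is bounded by $G$. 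Counting at most $\Delta(\Delta-1)^{r-1}$ agents at distance $r$ and summing the geometric tail for $r\geq\kappa_c+1$ yields $\epsilon_{\rm agg}(\kappa)$ under $\tilde\delta<1$. We take this step from the proof of Lemma~\ref{lem:grad_bias} without redoing it.

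For the critic term, Cauchy--Schwarz in $\ccalL^2(\nu^\pi)$ together with $\norm{\nabla\log\pi_i}\leq G$ gives
\[
\norm{\bbd_{i,\rm crit}^{(k)}}\leq\tfrac{G}{1-\gamma}\textstyle\sum_{\ell\in\ccalN_i^{\kappa_c}}\norm{\hat Q_\ell^{(k)}-Q_\ell^\pi}_{\ccalL^2(\nu^\pi)}.
\]
Squaring and applying Cauchy--Schwarz over the at most $D_{\kappa_c}$ summands gives
\[
\norm{\bbd_{i,\rm crit}^{(k)}}^2\leq\tfrac{G^2D_{\kappa_c}}{(1-\gamma)^2}\textstyle\sum_{\ell}\norm{\hat Q_\ell^{(k)}-Q_\ell^\pi}_{\ccalL^2(\nu^\pi)}^2.
\]
We then take $\mbE[\cdot\mid\ccalF_k^-]$ and invoke~\eqref{eq:critic_L2} for each $\ell$, which applies because the conditions of Theorem~\ref{thm:joint_optimal}(b) are in force. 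Each summand is at most $3\epsilon_Q^2+2\Bk^2/(1-\gamma)$, so the sum is at most $D_{\kappa_c}$ times this. The resulting factor $G^2D_{\kappa_c}^2/(1-\gamma)^2$ is exactly $A_\kappa$.

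The main subtlety is measurability and conditioning. We must check that $\bbd_{i,\rm crit}^{(k)}$ is $\ccalF_k^Q$-measurable and that~\eqref{eq:critic_L2} holds for almost every $\ccalF_k^-$ history in $E_M\cap E_{\rm rff}$ simultaneously for all $\ell\in\ccalN_i^{\kappa_c}$. This is fine because only a finite sum of conditional expectations is involved, so no union bound over events is needed. Combining the two bounds then gives~\eqref{eq:eps_b_bound}.
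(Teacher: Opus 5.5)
Your overall structure is the paper's: you split the critic-conditional bias into the deterministic aggregation part and the neighborhood critic-error part, apply $(a+b)^2\leq 2a^2+2b^2$, bound the critic part using the score bound $G$ and Cauchy--Schwarz over the at most $D_{\kappa_c}$ neighbors, and invoke~\eqref{eq:critic_L2} for each $\ell$. Using the $\ccalL^2(\nu^\pi)$ norm directly, instead of the paper's $\ccalL^1$-then-conditional-Jensen step, makes no difference. The critic half is correct as written.

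The aggregation half, however, does not give the constant you claim. If $\bar Q_\ell$ freezes all coordinates in $\ccalN_i^{\kappa_\pi}$, the original and frozen tuples agree only on $\ccalN_\ell^{d(i,\ell)-\kappa_\pi-1}$. Then (B1) gives $c\rho^{d(i,\ell)-\kappa_\pi}$, as you say, and the shell sum becomes
\[
\frac{Gc}{1-\gamma}\sum_{r>\kappa_c}\Delta(\Delta-1)^{r-1}\rho^{r-\kappa_\pi}=\rho^{-\kappa_\pi}\,\epsilon_{\rm agg}(\kappa).
\]
This exceeds $\epsilon_{\rm agg}(\kappa)$ whenever $\kappa_\pi\geq1$. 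So ``yields $\epsilon_{\rm agg}(\kappa)$'' does not follow from your sketch, and you would not obtain the lemma's $\bar\epsilon_b^2$. Your sketch is also not the argument of Lemma~\ref{lem:grad_bias}, so citing that lemma does not repair it as described.

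The point you are missing is that the score identity holds conditionally on the full state and the other actions: $\mbE[\nabla_{\bbtheta_i}\log\pi_i\mid\bbS,\bbA_{-i}]=\bbzero$. It therefore annihilates any function that merely does not depend on the single coordinate $\bbA_i$. No state coordinate, and no other agent's action, needs to be frozen. Freeze only $\bba_i$. Then the two tuples agree on $\ccalN_\ell^{d(i,\ell)-1}$, and (B1) at radius $d(i,\ell)-1\geq\kappa_c$ gives $c\rho^{d(i,\ell)}$. The shell sum is then exactly $\epsilon_{\rm agg}(\kappa)$.

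With this correction, or by quoting the aggregation-route bound from the proof of Lemma~\ref{lem:grad_bias} instead of re-sketching it, your proof is complete and coincides with the paper's.
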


	Lemma~\ref{lem:bias_squared} bounds the history-conditional expectation of the squared critic-conditional bias by a deterministic quantity, not the square of the unconditional mean bias; its proof (Appendix~\ref{app:bias_squared}) conditions on the critic, applies Cauchy--Schwarz over the at most $D_{\kappa_c}$ neighboring critics, and then invokes~\eqref{eq:critic_L2}.

	\subsection{Algorithm}

	\begin{algorithm}[!t]
		\caption{CDCPG with projected updates}\label{alg:cdcpg}
		\textbf{Input:} $\{\bbtheta_i^{(0)}\}\subseteq\Theta_0$, $\kappa$, $\kappa_c=\kappa+\kappa_\pi$, $m$, $\lambda$, $\eta$, $M_s$, $M_g$, $K$, projection radius $W^\ast$ (Lemma~\ref{lem:bounded_target}), singular-value threshold $\tau_{\rm sv}\geq0$; $K,M_s,M_g\in\mbN$, $\eta>0$, $\lambda\geq0$.\\
		\textbf{Preprocessing:} for each $i\in[n]$, sample $\{(\bbomega_{i,\ell},b_{i,\ell})\}_{\ell=1}^m$ independently.
		\begin{algorithmic}[1]
			\For{$k=0,\ldots,K-1$}
			\State Draw shared $\ccalD_s^{(k)}, \ccalD_g^{(k)}$ under $\pi^{(k)}$.
			\For{each $i\in[n]$ in parallel}
			\State Compute $\bbM_i^{(k)},\bbb_i^{(k)}$ as in Definition~\ref{def:lstd}.
			\State Set $\hat\bbw_i^{(k),\rm raw}\gets(\bbM_i^{(k)}+\lambda\bbI)^{-1}\bbb_i^{(k)}$ if $\sigma_{\min}(\bbM_i^{(k)}+\lambda\bbI)>\tau_{\rm sv}$, else $\bbzero$ (Definition~\ref{def:lstd}; the analysis assumes $\tau_{\rm sv}<\mu_M/2$).
			\State Set $\hat\bbw_i^{(k)}\gets\Pi_{\{\bbw:\norm{\bbw}\leq W^\ast\}}(\hat\bbw_i^{(k),\rm raw})$.
			\State Set $\hat Q_i^{(k)}(\bbz)\gets\tdbbphi_{i,\kappa}^{\rm raw}(\bbz)^\top\hat\bbw_i^{(k)}$.
			\State Gather $\{\hat\bbw_\ell^{(k)}\}_{\ell\in\ccalN_i^{\kappa_c}}$.
			\State Compute $\hat\bbg_i^{(k)}$ via~\eqref{eq:gradient_estimator}.
			\State $\bbtheta_i^{(k+1)}\gets\Pi_{\Theta_{0,i}}(\bbtheta_i^{(k)}+\eta\hat\bbg_i^{(k)})$.
			\EndFor
			\EndFor
		\end{algorithmic}
		\textbf{Output:} $\bbtheta^{(R)}$ with $R\sim\mathrm{Unif}\{0,\ldots,K-1\}$ drawn independently of the run, so that the trajectory-average guarantees below bound $n^{-1}\mbE\norm{\ccalG_\eta(\bbtheta^{(R)})}^2$ directly; the final iterate $\bbtheta^{(K)}$ may be returned in practice.
	\end{algorithm}

	Per-agent computation is $\ccalO(M_s(mD_\kappa d_S+m^2)+m^3+M_g D_{\kappa_c}mD_\kappa d_S)$, where the $mD_\kappa d_S$ factors account for evaluating the random-feature maps at each sample; inter-agent algorithmic communication, excluding shared-oracle sample delivery, is $\ccalO(D_{\kappa_c}m)$ scalars per iteration for weight exchange; sample delivery additionally contributes $\ccalO((M_s+M_g)D_{\kappa_c+\kappa+1}(d_S+d_A))$ coordinates per agent per iteration, up to constant factors for current and successor tuples, after a one-time preprocessing dissemination of feature seeds (with a public master seed and $s_i=\mathrm{Hash}(i,\text{master})$, no exchange is needed; otherwise each agent broadcasts one seed to its $\kappa_c$-neighborhood, so each agent receives $\ccalO(D_{\kappa_c})$ seed words; explicit exchange of the maps would instead cost $\ccalO(mD_\kappa d_S)$ scalars per neighbor pair). Under the shared-sample model (D1), each agent needs only the coordinates of the shared samples within its $(\kappa_c{+}\kappa{+}1)$-neighborhood, that is, $\ccalO(D_{\kappa_c+\kappa+1}(d_S+d_A))$ scalars per sample, so the oracle admits a local-dissemination realization; the ``shared'' oracle means that agents reuse compatible coordinates of the same global sample, not that each agent's number of observations is physically reduced by a factor $1/n$. With non-shared sampling, evaluating neighboring critics would additionally require coordinate exchange. Preprocessing stores $\ccalO(nm)$ random-feature pairs, corresponding to $\ccalO(nm D_\kappa d_S)$ scalar frequency entries plus $\ccalO(nm)$ phase entries. The stated arithmetic count covers feature construction from evaluated local drift and reward values, LSTD formation and solution, and critic evaluation; local model, score, and parameter-projection evaluation costs are treated separately, and the communication count records coefficient payloads delivered to each aggregation neighborhood, so a physical multi-hop implementation must also account for routing.

	\section{Convergence and Sample Complexity}\label{sec:convergence}

	\subsection{Main convergence bound}

	\begin{theorem}[Projected-gradient-mapping convergence]\label{thm:main_convergence}
		Under Sets~A--D, with the critic hypotheses of Theorem~\ref{thm:joint_optimal}(b) in force at every iteration (in particular $\lambda=0$ and the batch and confidence choices there), with $0\leq\tau_{\rm sv}<\mu_M/2$, and with $0<\eta\leq 1/(4L_J)$ for a positive smoothness bound $L_J$, Algorithm~\ref{alg:cdcpg} satisfies, for almost every feature realization in $E_M\cap E_{\rm rff}$,
		\begin{equation}\label{eq:convergence}
			\begin{aligned}
				\frac{1}{nK}\sum_{k=0}^{K-1}\mbE\norm{\ccalG_\eta(\bbtheta^{(k)})}^2
				&\leq\frac{16\Delta_J}{3n\eta K}+\frac{14}{3}\Bigl(\frac{\sigma_g^2}{M_g}+\bar\epsilon_b^2\Bigr),
			\end{aligned}
		\end{equation}
		where $\sigma_g^2:=G^2 D_{\kappa_c}^2 (LW^\ast)^2/(1-\gamma)^2=A_\kappa(LW^\ast)^2$ (the per-sample gradient magnitude is bounded by $GD_{\kappa_c}LW^\ast/(1-\gamma)$ under the projected linear critic) and $\bar\epsilon_b^2$ is the deterministic bias bound of Lemma~\ref{lem:bias_squared}.
	\end{theorem}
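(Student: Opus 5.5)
The plan is the standard biased projected stochastic-ascent analysis in the style of Ghadimi--Lan--Zhang~\cite{ghadimi2016mini}, applied to the full vector $\hat\bbg^{(k)}=(\hat\bbg_i^{(k)})_i$. Write $\bbtheta^{(k+1)}=\Pi_{\Theta_0}(\bbtheta^{(k)}+\eta\hat\bbg^{(k)})$; this equals the global projection by the product structure in (A4). Decompose $\hat\bbg^{(k)}=\nabla J(\bbtheta^{(k)})+\bbd^{(k)}+\bbxi^{(k)}$, where $\bbd^{(k)}:=\mbE[\hat\bbg^{(k)}\mid\ccalF_k^Q]-\nabla J(\bbtheta^{(k)})$ is the critic-conditional bias of Lemma~\ref{lem:bias_squared}. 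The remaining term $\bbxi^{(k)}$ is the actor sampling noise, which has zero mean conditional on $\ccalF_k^Q$ because $\ccalD_g^{(k)}$ is drawn independently given $\ccalF_k^Q$.

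First, I would bound the noise. Each summand in~\eqref{eq:gradient_estimator} has norm at most $GD_{\kappa_c}LW^\ast/(1-\gamma)$, since $\abs{\hat Q_{i,\kappa_c}^{(k)}}\leq D_{\kappa_c}LW^\ast$ and the score is bounded by $G$. For i.i.d.\ samples this gives $\mbE[\norm{\bbxi_i^{(k)}}^2\mid\ccalF_k^Q]\leq\sigma_g^2/M_g$ per agent, and summing over agents gives $n\sigma_g^2/M_g$. Lemma~\ref{lem:bias_squared} supplies $\mbE[\norm{\bbd^{(k)}}^2\mid\ccalF_k^-]\leq n\bar\epsilon_b^2$. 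These two estimates account for the factor $n$ that is divided out in~\eqref{eq:convergence}.

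Second comes the ascent step. Let $\tilde\ccalG^{(k)}:=\eta^{-1}(\bbtheta^{(k+1)}-\bbtheta^{(k)})$. $L_J$-smoothness (D2) gives
\[
J(\bbtheta^{(k+1)})\geq J(\bbtheta^{(k)})+\eta\langle\nabla J,\tilde\ccalG^{(k)}\rangle-\tfrac{L_J\eta^2}{2}\norm{\tilde\ccalG^{(k)}}^2 .
\]
The projection inequality $\langle\hat\bbg^{(k)},\tilde\ccalG^{(k)}\rangle\geq\norm{\tilde\ccalG^{(k)}}^2$, which follows from the variational characterization of $\Pi_{\Theta_0}$, then yields
\[
J(\bbtheta^{(k+1)})\geq J(\bbtheta^{(k)})+\eta(1-\tfrac{L_J\eta}{2})\norm{\tilde\ccalG^{(k)}}^2-\eta\langle\bbe^{(k)},\tilde\ccalG^{(k)}\rangle,
\]
with $\bbe^{(k)}:=\bbd^{(k)}+\bbxi^{(k)}$. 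I would not try to exploit cancellation of the martingale part, because $\tilde\ccalG^{(k)}$ depends on $\bbxi^{(k)}$. Instead I would apply Young's inequality, $\langle\bbe,\tilde\ccalG\rangle\leq\tfrac14\norm{\tilde\ccalG}^2+\norm{\bbe}^2$, and then $\norm{\bbe}^2\leq 2\norm{\bbd}^2+2\norm{\bbxi}^2$. With $\eta\leq 1/(4L_J)$ the coefficient of $\norm{\tilde\ccalG^{(k)}}^2$ is at least $\eta(1-\tfrac18-\tfrac14)$, a positive constant times $\eta$.

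Third, I would transfer from $\tilde\ccalG^{(k)}$ to the true mapping $\ccalG_\eta(\bbtheta^{(k)})$. Nonexpansiveness of $\Pi_{\Theta_0}$ gives $\norm{\ccalG_\eta(\bbtheta^{(k)})-\tilde\ccalG^{(k)}}\leq\norm{\bbe^{(k)}}$, hence $\norm{\ccalG_\eta}^2\leq 2\norm{\tilde\ccalG}^2+2\norm{\bbe}^2$. Substituting this, taking total expectations (tower property over $\ccalF_k^Q$ and $\ccalF_k^-$, on the preprocessing events), telescoping over $k<K$, and using $J^\ast-J(\bbtheta^{(0)})=\Delta_J$ produces a bound of the form $\tfrac{c_1\Delta_J}{\eta K}+c_2n(\sigma_g^2/M_g+\bar\epsilon_b^2)$. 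Dividing by $nK$ gives~\eqref{eq:convergence}.

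The main obstacle I expect is bookkeeping the constants so that they land exactly at $16/3$ and $14/3$. This depends on the Young weights chosen in the ascent step and the transfer step. I would first try the weights $\tfrac14$ and $2$ and then tune them; the structural argument is unaffected by which weights are finally used. A secondary subtle point is that $\bbd^{(k)}$ is $\ccalF_k^Q$-measurable and is not controlled almost surely, only in conditional mean square. The argument therefore has to use only $\mbE\norm{\bbd^{(k)}}^2$, which the Young-inequality route above does, and never a pathwise bias bound.
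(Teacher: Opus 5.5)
Your route is the paper's own: the smoothness lower bound plus the projection optimality condition $\inner{\hat\bbg^{(k)}}{\tilde\ccalG_\eta^{(k)}}\geq\norm{\tilde\ccalG_\eta^{(k)}}^2$, Young's inequality on the error, the nonexpansive transfer $\norm{\tilde\ccalG_\eta^{(k)}}^2\geq\tfrac12\norm{\ccalG_\eta(\bbtheta^{(k)})}^2-\norm{\bbe^{(k)}}^2$, and blockwise bounds $n\bar\epsilon_b^2$ and $n\sigma_g^2/M_g$ whose factor $n$ cancels. The step that fails as described is the constant bookkeeping, and the obstruction is not the Young weights. Your split $\norm{\bbe^{(k)}}^2\leq2\norm{\bbd^{(k)}}^2+2\norm{\bbxi^{(k)}}^2$ doubles the error constant, and after that no choice of the two Young weights reaches $14/3$.
\begin{itemize}
\item Your weights ($\tfrac14$ in the ascent step, factor $2$ in the transfer) give constants $\tfrac{16}{5}$ and $\tfrac{52}{5}$.
\item The paper's weights would give $\tfrac{16}{3}$ and $\tfrac{28}{3}$.
\item Optimizing both weights at the worst case $L_J\eta=\tfrac14$ still leaves an error constant of about $9.2$.
\end{itemize}
Since $\tfrac{52}{5}>\tfrac{14}{3}$, your bound does not imply~\eqref{eq:convergence}.

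The missing ingredient is the orthogonal decomposition $\mbE\norm{\bbe^{(k)}}^2=\mbE\norm{\bbd^{(k)}}^2+\mbE\norm{\bbxi^{(k)}}^2$. You already recorded the two facts it needs: $\bbd^{(k)}$ is $\ccalF_k^Q$-measurable and $\mbE[\bbxi^{(k)}\mid\ccalF_k^Q]=\bbzero$, so the cross term vanishes after conditioning on $\ccalF_k^Q$. Keep $\norm{\bbe^{(k)}}^2$ intact in the pathwise inequality and split it only after taking expectations.
\begin{itemize}
\item Use equal Young weights, $\inner{\bbe^{(k)}}{\tilde\ccalG_\eta^{(k)}}\leq\tfrac12\norm{\bbe^{(k)}}^2+\tfrac12\norm{\tilde\ccalG_\eta^{(k)}}^2$. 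The coefficient is then $\tfrac{\eta}{2}(1-L_J\eta)\geq\tfrac{3\eta}{8}$.
\item The transfer then yields $J(\bbtheta^{(k+1)})\geq J(\bbtheta^{(k)})+\tfrac{3\eta}{16}\norm{\ccalG_\eta(\bbtheta^{(k)})}^2-\tfrac{7\eta}{8}\norm{\bbe^{(k)}}^2$.
\item Dividing the telescoped sum by $3\eta Kn/16$ gives exactly $\tfrac{16}{3}$ and $\tfrac{16}{3}\cdot\tfrac78=\tfrac{14}{3}$.
\end{itemize}
The rest of your proposal matches the paper, including the per-sample noise bound and the use of Lemma~\ref{lem:bias_squared} only in conditional mean square.
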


	The proof (Appendix~\ref{app:main_convergence}) argues through the stochastic projected mapping, following~\cite{ghadimi2016mini}; the projection prevents the classical unconstrained noise accounting, and the error correctly enters at order $\eta$ in a pathwise inequality. The result is an expected-stationarity bound for the uniformly selected iterate $\bbtheta^{(R)}$ of Algorithm~\ref{alg:cdcpg} under the favorable preprocessing events; it is not a last-iterate, high-probability, or global-optimality statement.

	\begin{corollary}[Step-size choice]\label{cor:opt_lr}
		With $\eta=1/(4L_J)$ and the bound $\bar\epsilon_b^2$ of Lemma~\ref{lem:bias_squared} (which requires $\lambda=0$ and the critic budgets of Theorem~\ref{thm:joint_optimal}(b)), the right-hand side of~\eqref{eq:convergence} is at most
		\begin{equation}\label{eq:explicit_bound}
			\begin{aligned}
				&\frac{64L_J\Delta_J}{3nK}+\frac{14\sigma_g^2}{3M_g}+28A_\kappa\epsilon_Q^2\\
				&\quad+\frac{28}{3}\epsilon_{\rm agg}(\kappa)^2+\frac{112A_\kappa}{3(1-\gamma)}\bigl(B_{\rm loc}(\kappa)^2+B_{\rm rff}(m)^2\bigr),
			\end{aligned}
		\end{equation}
		using $\Bk^2\leq2(B_{\rm loc}^2+B_{\rm rff}^2)$. With $\Delta_J\leq 2nQ_{\max}=\ccalO(n)$ and $L_J$ $n$-independent per (D2)--(D3) (Appendix~\ref{app:LJ_proof}), the first term is $\ccalO(L_JQ_{\max}/K)$, independent of $n$ in its leading prefactor.
	\end{corollary}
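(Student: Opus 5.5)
This is a direct substitution into Theorem~\ref{thm:main_convergence}; no new probabilistic argument is needed. First, $\eta=1/(4L_J)$ satisfies the step-size condition $0<\eta\leq1/(4L_J)$ of that theorem. Substituting it into the optimization term gives
\[
\frac{16\Delta_J}{3n\eta K}=\frac{64L_J\Delta_J}{3nK}.
\]
The sampling-noise term $\tfrac{14}{3}\sigma_g^2/M_g$ is copied unchanged.

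Next, I expand the bias term $\tfrac{14}{3}\bar\epsilon_b^2$ using the definition~\eqref{eq:eps_b_bound} from Lemma~\ref{lem:bias_squared}:
\[
\tfrac{14}{3}\bar\epsilon_b^2=\tfrac{28}{3}\epsilon_{\rm agg}(\kappa)^2+\tfrac{28}{3}A_\kappa\Bigl(3\epsilon_Q^2+\tfrac{2\Bk^2}{1-\gamma}\Bigr).
\]
The $\epsilon_Q$ part contributes $28A_\kappa\epsilon_Q^2$. For the residual part, $\Bk=B_{\rm loc}(\kappa)+B_{\rm rff}(m)$ by~\eqref{eq:xi_bound}, so $(a+b)^2\leq2a^2+2b^2$ gives $\Bk^2\leq2(B_{\rm loc}^2+B_{\rm rff}^2)$. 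Hence
\[
\tfrac{56A_\kappa}{3(1-\gamma)}\Bk^2\leq\tfrac{112A_\kappa}{3(1-\gamma)}\bigl(B_{\rm loc}^2+B_{\rm rff}^2\bigr).
\]
Adding the pieces yields~\eqref{eq:explicit_bound}. The hypotheses $\lambda=0$ and the critic budgets of Theorem~\ref{thm:joint_optimal}(b) are needed only so that Lemma~\ref{lem:bias_squared} applies; I would state this explicitly.

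For the final claim, I use $\Delta_J\leq2nQ_{\max}$, which follows from the uniform bound $\abs{V_i^\pi}\leq Q_{\max}$ for each of the $n$ agents. Then
\[
\frac{64L_J\Delta_J}{3nK}\leq\frac{128L_JQ_{\max}}{3K}=\ccalO(L_JQ_{\max}/K).
\]
The only delicate point, and the step I would treat as the main obstacle, is the $n$-independence of $L_J$. It is not proved here; I would invoke it from (D2)--(D3) and the derivation in Appendix~\ref{app:LJ_proof}. I would also note that without (D3) the prefactor inherits whatever $n$-scaling $L_J$ has. Everything else is elementary arithmetic.
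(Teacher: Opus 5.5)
Your proposal is correct and takes essentially the same route as the paper, which simply substitutes $\eta=1/(4L_J)$ into Theorem~\ref{thm:main_convergence}, expands $\tfrac{14}{3}\bar\epsilon_b^2$ via Lemma~\ref{lem:bias_squared}, applies $\Bk^2\leq2(B_{\rm loc}^2+B_{\rm rff}^2)$, and uses $\Delta_J\leq2nQ_{\max}$ together with the $n$-uniform $L_J$ of Appendix~\ref{app:LJ_proof}. Your constants $64/3$, $28$, $28/3$, and $112/3$ all check out.
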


	\subsection{Fixed-locality complexity}

	In this regime $\kappa$ and $m$ are problem primitives. Define the structural floor (a residual term of the upper bound, not a lower bound on the attainable stationarity)
	\begin{equation}\label{eq:efloor_def}
		\begin{aligned}
			\Efloor(\kappa,m):=\Cfloor\bigl[&\Efloor^{\rm trunc}(\kappa)+\Efloor^{\rm rff}(\kappa,m)\\&+\Efloor^{\rm graph}(\kappa)\bigr],
		\end{aligned}
	\end{equation}
	with $\Cfloor:=112/3$ and the three components
	\feq{
		\Efloor^{\rm trunc}(\kappa)&:=\frac{A_\kappa B_{\rm loc}(\kappa)^2}{1-\gamma}=\frac{A_\kappa\gamma^2\tilde c^2\rho^{2(\kappa+1)}}{1-\gamma},\\
		\Efloor^{\rm rff}(\kappa,m)&:=\frac{A_\kappa B_{\rm rff}(m)^2}{1-\gamma}=\frac{A_\kappa\gamma^2\bbarr^2\epsilon_P(m)^2}{(1-\gamma)^3},\\
		\Efloor^{\rm graph}(\kappa)&:=\epsilon_{\rm agg}(\kappa)^2.
	}
	The components are interpretable: $\Efloor^{\rm trunc}$ is the omitted distant state in the local continuation value (controlled by value decay), $\Efloor^{\rm rff}$ is the finite-feature approximation of the local Gaussian kernel (controlled by the global signed-kernel $\ccalL^1$ bound), and $\Efloor^{\rm graph}$ is the omitted distant rewards in the local policy gradient (controlled by score cancellation and graph-shell summation). Conditioning enters the sample budget, the radius $W^\ast$, and $\sigma_g^2$, but multiplies none of these population components.

	\begin{theorem}[Sample complexity, uniform TD-stability]\label{thm:sample_complexity_fixed}
		Under Sets~A--D with $\kappa,m$ fixed problem primitives such that $m$ satisfies the RFF feasibility relation~\eqref{eq:rff_feasibility} of Theorem~\ref{thm:rff_approx} at confidence $\delta_{\rm rff}$ and accuracy $\epsilon_P(m,\delta_{\rm rff},n)$, $\Delta\geq 2$, $\tilde\delta=(\Delta-1)\rho<1$, $G>0$, $\delta_M+\delta_{\rm rff}<1$, and $0\leq\tau_{\rm sv}<\mu_M/2$, let $\bbarr>0$ and $L_J>0$ denote the positive upper bounds used as primitives. For any $\epsilon>0$, Algorithm~\ref{alg:cdcpg} with the schedule
		\feq{
			\epsilon_Q^2&=\frac{\epsilon}{84A_\kappa},\qquad
			\delta_\star=\min\{\tfrac12,\epsilon_Q^2/A_Q^2\},\qquad\lambda^\ast=0,\\
			M_s&=\Bigl\lceil C_s\max\Bigl\{\frac{L^4\ell_\star}{\mu_M^2},\frac{L^2B_Y^2\ell_\star}{\mu_M^2\epsilon_Q^2}\Bigr\}\Bigr\rceil=\widetilde\ccalO\Bigl(\frac{L^6W^{\ast2}A_\kappa}{\mu_M^2\,\epsilon}\Bigr),\\
			M_g&=\max\Bigl\{1,\Bigl\lceil\frac{42\sigma_g^2}{\epsilon}\Bigr\rceil\Bigr\},\qquad\eta=\frac{1}{4L_J},\\
			K&=\max\Bigl\{1,\Bigl\lceil\frac{64L_J\Delta_J}{n\epsilon}\Bigr\rceil\Bigr\},
		}
		satisfies the following guarantee (larger budgets or a smaller critic tolerance preserve the inequality but not necessarily the displayed cost bound; all $\widetilde\ccalO$ estimates concern the limit $\epsilon\to0$ with the positive primitives fixed, whereas the displayed integer budgets are valid for every $\epsilon>0$). For almost every feature realization in $E_M\cap E_{\rm rff}$, an event of probability at least $1-\delta_M-\delta_{\rm rff}$ over the preprocessing random-feature draw, the expectation over the subsequent algorithmic sampling randomness $(\{\ccalD_s^{(k)},\ccalD_g^{(k)}\}_{k<K})$ satisfies
		\feq{
			\frac{1}{nK}\sum_{k<K}\mbE\norm{\ccalG_\eta(\bbtheta^{(k)})}^2\leq\epsilon+\Efloor(\kappa,m),\label{eq:neighborhood_plus_floor}
		}
		where the per-iteration LSTD failure events $(E_{\rm lstd}^{i,k})^c$ are absorbed into the $\epsilon$-budget through the choice of $\delta_\star$. A stronger high-probability guarantee over all algorithmic samples would require a separate failure-budget accounting over $K,n$ and the LSTD and gradient events. With this schedule the shared-oracle complexity is
		\feq{
			N_{\rm global}&=K(M_s+M_g)\notag\\&=\widetilde\ccalO\Bigl(\frac{L_JQ_{\max}}{\epsilon}\Bigl[\frac{L^6W^{\ast2}A_\kappa}{\mu_M^2\epsilon}+\frac{\sigma_g^2}{\epsilon}\Bigr]\Bigr)=\widetilde\ccalO(1/\epsilon^2).
		}
		Each agent uses the relevant local coordinates of every shared sample under the oracle (D1), so $N_{\rm global}$ is also the number of local sample observations per agent. The prefactor depends polynomially on $L_J,D_{\kappa_c},W^\ast,1/\mu_M(m)$, with degree at most $4$ in $1/\mu_M$ through $W^{\ast2}/\mu_M^2$; $D_\kappa$ and $\tilde g_\alpha$ enter only through the floor and the feasibility of $m$.
	\end{theorem}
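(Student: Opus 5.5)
The proof plugs the schedule into Corollary~\ref{cor:opt_lr}, i.e.\ into the explicit bound~\eqref{eq:explicit_bound} derived from Theorem~\ref{thm:main_convergence}. We then check that each $\epsilon$-dependent term is at most a fixed fraction of $\epsilon$, and that the remaining population terms are dominated by $\Efloor(\kappa,m)$.

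\emph{Preconditions.} First we verify that the schedule meets every hypothesis of Theorem~\ref{thm:main_convergence} and Theorem~\ref{thm:joint_optimal}(b) at each iteration.
\begin{enumerate}
\item $\lambda^\ast=0$ and $0\le\tau_{\rm sv}<\mu_M/2$ are given.
\item $M_s$ is exactly the budget~\eqref{eq:Ms_choice}, with $\ell_\star=\log(4(m+1)/\delta_\star)$.
\item $\delta_\star\le\epsilon_Q^2/A_Q^2$ holds by construction, and $\delta_\star\le\frac12$ keeps $\delta_\star\in(0,1)$.
\item $\eta=1/(4L_J)$ satisfies the step-size condition.
\end{enumerate}
The feasibility of $m$ at confidence $\delta_{\rm rff}$ puts us on $E_{\rm rff}$ with accuracy $\epsilon_P(m)$. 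Intersecting with $E_M$ gives an event of probability at least $1-\delta_M-\delta_{\rm rff}$ by a union bound. Conditional on this intersection, Theorem~\ref{thm:main_convergence} applies.

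\emph{Term-by-term bound.} We split~\eqref{eq:explicit_bound} into its five terms.
\begin{itemize}
\item Since $K\ge 64L_J\Delta_J/(n\epsilon)$, the optimization term satisfies $64L_J\Delta_J/(3nK)\le\epsilon/3$.
\item Since $M_g\ge42\sigma_g^2/\epsilon$, the noise term satisfies $14\sigma_g^2/(3M_g)\le\epsilon/9$.
\item Since $\epsilon_Q^2=\epsilon/(84A_\kappa)$, the critic term satisfies $28A_\kappa\epsilon_Q^2=\epsilon/3$.
\end{itemize}
These sum to $7\epsilon/9\le\epsilon$.

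For the remaining two terms, $\frac{28}{3}\epsilon_{\rm agg}^2\le\Cfloor\Efloor^{\rm graph}$ because $28/3<112/3$. The last term, $\frac{112A_\kappa}{3(1-\gamma)}(B_{\rm loc}^2+B_{\rm rff}^2)$, equals $\Cfloor(\Efloor^{\rm trunc}+\Efloor^{\rm rff})$ exactly by the definitions. Adding everything yields~\eqref{eq:neighborhood_plus_floor}.

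The main obstacle is the conditioning. The expectation in Theorem~\ref{thm:main_convergence} must range over the sampling randomness only, for almost every feature realization in $E_M\cap E_{\rm rff}$. The failures of the events $E_{\rm lstd}^{i,k}$ must therefore enter only through Theorem~\ref{thm:joint_optimal}(b), where the deterministic envelope $A_Q$ and $\delta_\star\le\epsilon_Q^2/A_Q^2$ absorb them into $3\epsilon_Q^2$. Lemma~\ref{lem:bias_squared} then passes this to $\bar\epsilon_b^2$. We must take care that no union bound over $k$ is needed: the tower property over $\ccalF_k^-$ suffices, since the bound in Lemma~\ref{lem:bias_squared} is deterministic.

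\emph{Complexity.} Since $\ell_\star=\widetilde\ccalO(1)$, $B_Y^2\le L^2(\bbarr+2LW^\ast)^2=\ccalO(L^4W^{\ast2})$ (absorbing constants), and $1/\epsilon_Q^2=84A_\kappa/\epsilon$, we get $M_s=\widetilde\ccalO(L^6W^{\ast2}A_\kappa/(\mu_M^2\epsilon))$. Also $M_g=\ccalO(\sigma_g^2/\epsilon)$. With $\Delta_J\le2nQ_{\max}$ we have $K=\ccalO(L_JQ_{\max}/\epsilon)$. Multiplying, $N_{\rm global}=K(M_s+M_g)=\widetilde\ccalO(1/\epsilon^2)$.

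The prefactor's degree in $1/\mu_M$ comes from $W^{\ast2}/\mu_M^2$ with $W^\ast=L\bbarr/\mu_M$, giving degree $4$; the $\sigma_g^2$ term has degree only $2$. The claim that each agent's local observations equal $N_{\rm global}$ is immediate from the shared oracle (D1).
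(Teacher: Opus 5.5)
Your proposal is correct and follows the paper's proof essentially step for step. You substitute the schedule into the bound~\eqref{eq:explicit_bound} of Corollary~\ref{cor:opt_lr}, bound the descent, noise, and critic terms by $\epsilon/3$, $\epsilon/9$, $\epsilon/3$, identify the remaining terms with $C_{\rm floor}$ times the floor components, and absorb the LSTD failures through $\delta_\star\leq\epsilon_Q^2/A_Q^2$ in Theorem~\ref{thm:joint_optimal}(b) without a union bound over iterations. The one step you leave as ``absorbing constants,'' $B_Y^2=\mathcal{O}(L^4W^{\ast2})$, needs $\bar r\lesssim LW^\ast$; the paper supplies this via $LW^\ast=L^2\bar r/\mu_M\geq\bar r/2$, which uses $\mu_M\leq 2L^2$ from Proposition~\ref{prop:mu_M_obstruction}, and this is what makes the displayed prefactor of $M_s$ valid.
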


	\begin{remark}[Lower-bound comparison and excess-to-floor reading]\label{rem:excess}
		The lower-bound comparison invoked for the $\epsilon$-exponent is with the smooth nonconvex benchmark of~\cite{arjevani2023lower}, whose oracle, namely unbiased bounded-variance stochastic gradients without constraints, differs from the present shared generative model with a biased localized estimator and projected updates; the agreement is at the exponent level only. Explicitly, the cited bound is $\Omega(\epsilon_g^{-4})$ for driving $\norm{\nabla F}\leq\epsilon_g$, and becomes $\Omega(\epsilon^{-2})$ under the reparameterization $\epsilon=\epsilon_g^2$ matching the squared-stationarity convention here.

		\emph{Excess-to-floor reading.} Theorem~\ref{thm:sample_complexity_fixed} bounds the averaged stationarity by $\epsilon+\Efloor(\kappa,m)$ with $\epsilon$ the excess above the floor: for a total target $\tau_{\rm tot}>\Efloor(\kappa,m)$, invoking it with $\epsilon=\tau_{\rm tot}-\Efloor(\kappa,m)$ gives $N_{\rm global}=\widetilde\ccalO((\tau_{\rm tot}-\Efloor(\kappa,m))^{-2})$, and the $\widetilde\ccalO(\tau_{\rm tot}^{-2})$ interpretation is justified, in particular, when $\Efloor(\kappa,m)\leq\vartheta\tau_{\rm tot}$ for a fixed $\vartheta<1$, with the other prefactors held fixed or separately controlled. There is no guarantee that an arbitrary $\tau_{\rm tot}>0$ is attainable with fixed features and locality; that would require quantitative control of the conditioning and neighborhood-dependent constants as $m$ and $\kappa$ vary, which is not claimed.
	\end{remark}

	The proof of Theorem~\ref{thm:sample_complexity_fixed} is in Appendix~\ref{app:sample_complexity}. An implementable conservative choice replaces the unknown gap $\Delta_J$ by its bound $2nQ_{\max}$, giving $K=\max\{1,\lceil128L_JQ_{\max}/\epsilon\rceil\}$ with the same order. If $G=0$ the policy gradient vanishes identically and the division by $A_\kappa$ is unnecessary; the theorem is stated for the nondegenerate regime.

	\begin{remark}[Network-size scaling of the prefactor]\label{rem:n_scaling}
		Under fixed $\kappa$, $D_{\kappa_c}\leq C_D\Delta^{\kappa_c}$, with $C_D$ an absolute constant of bounded-degree neighborhood growth, is bounded independently of $n$, and no explicit polynomial factor in $n$ appears in the leading prefactor of $N_{\rm global}$ provided that $L_J$, $\mu_M^{-1}$ (hence $W^\ast$ and $\sigma_g^2$), the feature-scale constant $\tilde g_\alpha$, and the preprocessing failure probabilities $\delta_M,\delta_{\rm rff}$ are all uniform in $n$; network-size-uniform bounds therefore require uniform conditioning and smoothness constants across the family of networks considered, not merely fixed-radius neighborhoods. The simultaneous RFF feasibility relation~\eqref{eq:rff_feasibility} retains a logarithmic dependence on $n$, and each shared-oracle sample carries $\ccalO(n(d_S+d_A))$ coordinates even though only local coordinates need be delivered per agent. Should any of these constants scale with $n$, for example $L_J$, whose descent factor $K=\ccalO(1+L_J\Delta_J/(n\epsilon))$ transmits the scaling to $N_{\rm global}$, or $\mu_M$, which the union-bound discussion of Assumption Set~C allows to deteriorate with $n$, the prefactor inherits that dependence. The cap $D_{\kappa_c}\leq n$ becomes binding only when the realized neighborhood growth reaches the network size, i.e.\ when $\kappa_c$ is of the order of the graph diameter; the crude envelope $\Delta^{\kappa_c}\geq n$ does not by itself imply this (on a path graph $D_\kappa=\min\{n,2\kappa+1\}$ grows linearly while $2^{\kappa}$ exceeds $n$ already at $\kappa=\log_2 n$). In that diameter-scale regime the prefactor carries an explicit factor of $n^2$ through $D_{\kappa_c}^2$.
		Moreover, under the present simultaneous union-bound RFF certificate, maintaining fixed approximation accuracy $\epsilon_P$, fixed confidence $\delta_{\rm rff}$, and fixed local primitives as $n$ grows requires $m=\Omega(\log n)$ through~\eqref{eq:rff_feasibility} (a requirement of this particular union-bound certificate, not an information-theoretic necessity); combined with Proposition~\ref{prop:mu_M_obstruction}, this gives $\mu_M^{-1}=\Omega(m)=\Omega(\log n)$. Hence the present proof cannot maintain a fully $n$-uniform prefactor under these fixed-accuracy conditions, and at least polylogarithmic hidden $n$-dependence should be expected.
	\end{remark}

	\begin{theorem}[Sample complexity, trajectory-conditional]\label{thm:sample_complexity_traj}
		Retain the fixed-locality, RFF-feasibility, positivity, and parameter-range hypotheses of Theorem~\ref{thm:sample_complexity_fixed} (Sets~A, B, D with $\kappa,m$ fixed, the feasibility relation~\eqref{eq:rff_feasibility} at $\delta_{\rm rff}<1$, $\Delta\geq2$, $\tilde\delta<1$, $G>0$, and $\bbarr,L_J>0$), except Set~C and the condition involving $\delta_M$. Replace $\mu_M$ throughout its schedule and derived constants by a threshold $0<\mu_{\rm traj}\leq(1+\gamma)L^2/(m+1)$ fixed before the run (larger thresholds make every $A_k$ below empty by Proposition~\ref{prop:mu_M_obstruction}), in particular $W^\ast=L\bbarr/\mu_{\rm traj}$, $A_Q$, $B_Y$, $\sigma_g^2$, and $M_s$, and take $0\leq\tau_{\rm sv}<\mu_{\rm traj}/2$; the threshold enters statements (a)--(b) only through these constants, and \emph{Assumption~\ref{asm:Cp} (at this $\mu_{\rm traj}$ and horizon $K$) is invoked only in statements (c)--(d)}. Define the predictable per-iteration events
		\feq{
			A_k:=\bigl\{\inf_{i\in[n]}\sigma_{\min}(\bbM_i^{\bbtheta^{(k)}})\geq\mu_{\rm traj}\bigr\},\quad k=0,\ldots,K-1,
		}
		each $\ccalF_k^-$-measurable, and the trajectory event $E_{\rm traj}:=\bigcap_{k=0}^{K-1}A_k$. The following statements jointly characterize the guarantee:
		\begin{enumerate}
			\item[(a)] (Mass-weighted stopped-process inequality; no (C$'$) premise.) The unconditional inequality
			\feq{
				&\mbE\!\Bigl[\,\tfrac{1}{nK}\!\sum_{k=0}^{K-1}\norm{\ccalG_\eta(\bbtheta^{(k)})}^2\,\bbone_{A_0\cap\cdots\cap A_k}\,\bbone_{E_{\rm rff}}\Bigr]\notag\\
				&\qquad\qquad\qquad\qquad\qquad\qquad\leq\epsilon+\Efloor(\kappa,m)
			}
			holds, the expectation being taken over the joint randomness of the random-feature preprocessing and the subsequent algorithmic sampling. Equivalently, writing $\tau:=\inf\{k:A_k^c\}\wedge K$ for the first-failure stopping time and noting that $\bbone_{A_0\cap\cdots\cap A_k}=\bbone_{\{\tau>k\}}$,
			\feq{
				\mbE\!\Bigl[\,\tfrac{1}{nK}\!\sum_{k=0}^{\tau-1}\norm{\ccalG_\eta(\bbtheta^{(k)})}^2\,\bbone_{E_{\rm rff}}\Bigr]\leq\epsilon+\Efloor(\kappa,m).
			}
			In particular, for the randomized output $\bbtheta^{(R)}$ of Algorithm~\ref{alg:cdcpg} with $R\sim\mathrm{Unif}\{0,\ldots,K-1\}$ drawn independently of the run, $\mbE\bigl[n^{-1}\norm{\ccalG_\eta(\bbtheta^{(R)})}^2\,\bbone_{\{R<\tau\}}\,\bbone_{E_{\rm rff}}\bigr]$ is bounded by the same right-hand side. Because the average is normalized by $K$ rather than by the realized prefix length, statement (a) is vacuously satisfiable under early stopping ($\tau=0$ zeroes the left side): it is a mass-weighted stopped-process inequality, not, by itself, a convergence guarantee. The convergence content is supplied by statements (c)--(d), which invoke Assumption~\ref{asm:Cp} prospectively; the selective-output clause inherits the same qualification.
			\item[(b)] (Global favorable-event bound.) Since $E_{\rm traj}\subseteq A_0\cap\cdots\cap A_k$ for every $k\leq K-1$, statement~(a) implies
			\feq{
				\mbE\!\Bigl[\,\tfrac{1}{nK}\!\sum_{k=0}^{K-1}\norm{\ccalG_\eta(\bbtheta^{(k)})}^2\,\bbone_{E_{\rm traj}\cap E_{\rm rff}}\Bigr]\leq\epsilon+\Efloor(\kappa,m).
			}
			\item[(c)] (Event probability.) $\mbP(E_{\rm traj}\cap E_{\rm rff})\geq 1-\delta_M^{\rm traj}-\delta_{\rm rff}$ by Assumption~\ref{asm:Cp} and the favorable RFF event.
			\item[(d)] (Conditional consequence.) If $\delta_M^{\rm traj}+\delta_{\rm rff}\leq 1/2$, dividing the bound of statement~(b) by $\mbP(E_{\rm traj}\cap E_{\rm rff})\geq 1/2$ yields
			\feq{
				&\mbE\!\Bigl[\,\tfrac{1}{nK}\!\sum_{k=0}^{K-1}\norm{\ccalG_\eta(\bbtheta^{(k)})}^2\,\big|\,E_{\rm traj}\cap E_{\rm rff}\Bigr]\notag\\
				&\qquad\qquad\qquad\qquad\qquad\qquad\leq 2\bigl(\epsilon+\Efloor(\kappa,m)\bigr).
			}
		\end{enumerate}
		The shared-oracle complexity is
		\feq{
			N_{\rm global}^{\rm traj}&=K(M_s+M_g)\notag\\
			&=\widetilde\ccalO\!\Bigl(\tfrac{L_JQ_{\max}L^6 W^{\ast 2}A_\kappa}{\mu_{\rm traj}^2\epsilon^2}+\tfrac{L_JQ_{\max}\sigma_g^2}{\epsilon^2}\Bigr),
		}
		with $W^\ast$ and all derived constants evaluated using $\mu_{\rm traj}$ in place of $\mu_M$; the factor $L_J$ enters through $K=\ccalO(1+L_J\Delta_J/(n\epsilon))$ multiplying $M_s$ and $M_g$. Per-iteration LSTD failure events are absorbed into the $\epsilon$-budget by the parameter choices of Theorem~\ref{thm:sample_complexity_fixed} (with $\delta_\star$ included).
	\end{theorem}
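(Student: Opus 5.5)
The plan is to rerun the per-iteration argument of Theorem~\ref{thm:main_convergence} with the global event $E_M$ replaced by the predictable events $A_k$, carrying indicators through the telescoping sum rather than conditioning. \textbf{Step 1 (per-iteration critic and bias bounds on $A_k$).} For each $k$, on $A_k\cap E_{\rm rff}$, which lies in $\ccalF_k^-$ because $\bbtheta^{(k)}$ and the feature draw are $\ccalF_k^-$-measurable, rerun Lemma~\ref{lem:pop_L2}, Lemma~\ref{lem:bounded_target}, Theorem~\ref{thm:lstd_error} and Theorem~\ref{thm:joint_optimal}(b) at the single parameter $\bbtheta^{(k)}$ with $\mu_M$ replaced by $\mu_{\rm traj}$. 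These proofs use $\sigma_{\min}(\bbM_i^{\bbtheta})\geq\mu_M$ only at the parameter being evaluated, so the pointwise condition $A_k$ suffices. This yields, almost surely on $A_k\cap E_{\rm rff}$, the bias-squared bound of Lemma~\ref{lem:bias_squared}, $\mbE[\norm{\bbd_i^{(k)}}^2\mid\ccalF_k^-]\leq\bar\epsilon_b^2$, together with the variance bound $\sigma_g^2/M_g$. The variance bound holds on every path, since the norm projection with $W^\ast=L\bbarr/\mu_{\rm traj}$ is deterministic.

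\textbf{Step 2 (stopped descent inequality).} Use the pathwise one-step inequality from Appendix~\ref{app:main_convergence}, which follows \cite{ghadimi2016mini}:
\[
J(\bbtheta^{(k+1)})\geq J(\bbtheta^{(k)})+c_1\eta\norm{\ccalG_\eta(\bbtheta^{(k)})}^2-c_2\eta\norm{\hat\bbg^{(k)}-\nabla J(\bbtheta^{(k)})}^2 .
\]
This inequality holds on every path and needs no conditioning assumption. Multiply it by the $\ccalF_k^-$-measurable indicator $\bbone_{\{\tau>k\}}\bbone_{E_{\rm rff}}$ and take expectations. The error term then splits into bias and variance, and the tower property applies Step~1 exactly where the indicator is one. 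To sum over $k$, write $J(\bbtheta^{(k+1)})-J(\bbtheta^{(k)})$ multiplied by $\bbone_{\{\tau>k\}}$ as the increment of the stopped process $J(\bbtheta^{(k\wedge\tau)})$. The sum telescopes to $J(\bbtheta^{(\tau)})-J(\bbtheta^{(0)})\leq\Delta_J$, using $J\leq J^\ast$ on the compact set $\Theta_0$.

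Dividing by $nK\eta$ and inserting the schedule of Theorem~\ref{thm:sample_complexity_fixed} gives statement (a). The schedule is $\epsilon_Q^2=\epsilon/(84A_\kappa)$, $\delta_\star\leq\epsilon_Q^2/A_Q^2$ to absorb the LSTD failures, $M_g$, $K$ and $\eta=1/(4L_J)$, with all constants evaluated at $\mu_{\rm traj}$. The arithmetic is identical to the proof of Theorem~\ref{thm:sample_complexity_fixed}. The restriction $\mu_{\rm traj}\leq(1+\gamma)L^2/(m+1)$ is only a nonvacuity remark based on Proposition~\ref{prop:mu_M_obstruction}. The stopping-time reformulation is just $\bbone_{A_0\cap\cdots\cap A_k}=\bbone_{\{\tau>k\}}$, and the clause for the randomized output follows by averaging over $R$, which is independent of the run.

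\textbf{Step 3 (statements (b)--(d) and complexity).} Statement (b) follows from (a) by monotonicity of indicators, since $E_{\rm traj}\subseteq\{\tau>k\}$ and the summands are nonnegative. Statement (c) is a union bound combining Assumption~\ref{asm:Cp} with Theorem~\ref{thm:rff_approx}. Statement (d) divides (b) by $\mbP(E_{\rm traj}\cap E_{\rm rff})\geq1/2$. The complexity bound $K(M_s+M_g)$ is read off the schedule exactly as before.

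\textbf{Main obstacle.} The delicate point is Step~2. A plain average over $k$ would need $A_k$ at every iteration, so the indicators must be predictable and the telescoping must go through the stopped process, not through conditioning on the future event $E_{\rm traj}$. Conditioning on that event would destroy the conditional sampling laws of (D1). I must also check that the descent inequality used in Appendix~\ref{app:main_convergence} is genuinely pathwise, with only the error terms taken in conditional expectation, so that multiplying by $\bbone_{\{\tau>k\}}$ before taking expectations is legitimate.
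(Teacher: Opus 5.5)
Your proposal is correct and follows essentially the same route as the paper. The per-iterate critic and bias bounds are invoked on the predictable event $A_k\cap E_{\rm rff}$ with $\mu_{\rm traj}$ in place of $\mu_M$, the pathwise descent inequality is multiplied by $\mathbf{1}_{\{\tau>k\}}\mathbf{1}_{E_{\rm rff}}$ before expectations are taken, the stopped sum telescopes to $J(\theta^{(\tau)})-J(\theta^{(0)})\leq\Delta_J$, and (b)--(d) and the complexity count follow by monotonicity of indicators, a union bound with Assumption~\ref{asm:Cp}, and division by the event probability.
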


	Statement~(a) is the rigorous core of the result: each per-iteration descent inequality is multiplied by the predictable indicator $\bbone_{A_0\cap\cdots\cap A_k}\bbone_{E_{\rm rff}}\in\ccalF_k^-$ before expectations are taken, so that the concentration step never conditions on the global future event $E_{\rm traj}$, and the telescoped inequality coincides with the descent inequality summed over the stopped process $\{\bbtheta^{(k)}\}_{k<\tau}$ on $E_{\rm rff}$. A probability-weighted favorable-event bound does not equal a conditional bound without division by the event probability, which is why (d) is stated separately. The proof is in Appendix~\ref{app:sample_complexity_traj}.

	\begin{proposition}[Diagnostic for (C$'$)]\label{prop:diagnostic}
		For $\delta\in(0,1)$, let $\ell_M:=\log(2nK(m+1)/\delta)$ and $r_M(M_s,\delta):=C_M\bigl(L^2\sqrt{\ell_M/M_s}+L^2\ell_M/M_s\bigr)$ with $C_M=4$. Under Sets~A--D, for every fixed feature realization, with probability at least $1-\delta$ over the algorithmic sampling,
		\feq{
			\sup_{i,k}\norm{\bbM_i^{(k)}-\bbM_i^{\bbtheta^{(k)}}}\leq r_M(M_s,\delta).
		}
		Consequently, the empirical test $T_{\rm cert}:=\{\inf_{i,k}\sigma_{\min}(\bbM_i^{(k)})\geq 2r_M\}$ has false-certification probability at most $\delta$:
		\feq{
			\mbP\bigl(T_{\rm cert}\cap\{\inf_{i,k}\sigma_{\min}(\bbM_i^{\bbtheta^{(k)}})<r_M\}\bigr)\leq\delta.\label{eq:cert_inequality}
		}
		Inequality~\eqref{eq:cert_inequality} is a simultaneous false-certification bound: it controls the joint probability of certifying while the population condition fails. It does not by itself control the conditional probability $\mbP\bigl(\inf_{i,k}\sigma_{\min}(\bbM_i^{\bbtheta^{(k)}})<r_M\,\big|\,T_{\rm cert}\bigr)$, which additionally requires a lower bound on the test-passing probability $\mbP(T_{\rm cert})$: a test that never passes satisfies~\eqref{eq:cert_inequality} vacuously. A less conservative certificate uses $\hat\mu_{\rm cert}:=\inf_{i,k}\sigma_{\min}(\bbM_i^{(k)})-r_M$; whenever $\hat\mu_{\rm cert}>0$, the executed run satisfied $\inf_{i,k}\sigma_{\min}(\bbM_i^{\bbtheta^{(k)}})\geq\hat\mu_{\rm cert}$ except on an event of probability at most $\delta$.
	\end{proposition}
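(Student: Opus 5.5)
The plan has three steps: bound the deviation for one fixed pair $(i,k)$ conditional on the history, take a union bound over all $nK$ pairs, and then deduce the certification statements deterministically via Weyl's inequality for singular values.

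\emph{Fixed pair.} Fix the feature realization and a pair $(i,k)$, and condition on $\ccalF_k^-$. Under (D1), the tuples of $\ccalD_s^{(k)}$ are i.i.d.\ from the one-step law at $\bbtheta^{(k)}$, and $\bbtheta^{(k)}$ is $\ccalF_k^-$-measurable. The summands
\[
\bbX_j:=\tdbbphi^{\rm raw}(\bbz^{(j)})\bigl(\tdbbphi^{\rm raw}(\bbz^{(j)})-\gamma\tdbbphi^{\rm raw}(\bbz^{\prime(j)})\bigr)^\top-\bbM_i^{\bbtheta^{(k)}}
\]
are therefore conditionally i.i.d., centered, $(m+1)\times(m+1)$ matrices. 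Each uncentered term has norm at most $(1+\gamma)L^2\leq 2L^2$, so $\norm{\bbX_j}\leq 4L^2$. The matrix variance statistic is bounded by $\max\{\norm{\mbE\bbX_j\bbX_j^\top},\norm{\mbE\bbX_j^\top\bbX_j}\}\leq(2L^2)^2$.

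The rectangular matrix-Bernstein inequality~\cite[Thm.~6.1.1]{tropp2015introduction}, applied via the Hermitian dilation of dimension $2(m+1)$, gives a tail of the form $2(m+1)\exp\bigl(-\tfrac{M_s t^2/2}{v+Rt/3}\bigr)$. Inverting it at level $\delta/(nK)$ yields a radius of the form $c_1L^2\sqrt{\ell_M/M_s}+c_2L^2\ell_M/M_s$ with $\ell_M=\log(2nK(m+1)/\delta)$. I would check that the constants fit into $C_M=4$, using $\sqrt{a+b}\leq\sqrt a+\sqrt b$ and $v\leq 4L^4$, $R\leq 4L^2$.

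The main obstacle is this constant bookkeeping. In particular, the variance should be bounded using the uncentered second moment, since $\mbE(\bbY-\mbE\bbY)(\bbY-\mbE\bbY)^\top\preceq\mbE\bbY\bbY^\top$. This gives $v\leq 4L^4$ instead of a cruder bound.

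\emph{Union bound.} The conditional bound holds for almost every history, so the tower property makes it unconditional. A union bound over the $nK$ pairs then gives the simultaneous statement with probability at least $1-\delta$. The paper's i.i.d.\ sampling model is essential here: the iterate $\bbtheta^{(k)}$ is random, but predictable, so conditioning handles the adaptivity without any further argument.

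\emph{Certification.} On the concentration event, Weyl's inequality for singular values gives $\sigma_{\min}(\bbM_i^{\bbtheta^{(k)}})\geq\sigma_{\min}(\bbM_i^{(k)})-r_M$ for all $i,k$.
\begin{enumerate}
\item[(i)] On $T_{\rm cert}$, this implies $\inf_{i,k}\sigma_{\min}(\bbM_i^{\bbtheta^{(k)}})\geq 2r_M-r_M=r_M$. Hence $T_{\rm cert}\cap\{\inf<r_M\}$ lies inside the complement of the concentration event, which has probability at most $\delta$; this is~\eqref{eq:cert_inequality}.
\item[(ii)] The same inequality shows that $\inf_{i,k}\sigma_{\min}(\bbM_i^{\bbtheta^{(k)}})\geq\hat\mu_{\rm cert}$ on the concentration event, which gives the less conservative certificate.
\end{enumerate}
The remarks about conditional probabilities and vacuous tests need no proof; they are interpretive.
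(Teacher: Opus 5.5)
Your proposal is correct and follows the paper's proof step for step. Both arguments apply conditional rectangular matrix Bernstein through the Hermitian dilation with $\norm{\bbX_j}\leq 4L^2$ and variance at most $4L^4$, take a union bound at level $\delta/(nK)$ over the $nK$ pairs, and use Weyl's inequality for both certification claims. The inversion you deferred gives exactly the paper's radius $2\sqrt2\,L^2\sqrt{\ell_M/M_s}+\tfrac83L^2\ell_M/M_s\leq r_M$, and you are right that the uncentered-second-moment bound is needed: the crude variance bound $16L^4$ would give a leading coefficient of $4\sqrt2>C_M$.
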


	\begin{proof}
		With $\bbY:=\tdbbphi^{\rm raw}(\tdbbphi^{\rm raw}-\gamma\tdbbphi^{\rm raw}{}')^\top$ one has $\norm{\bbY}\leq(1+\gamma)L^2\leq2L^2$, so the centered summand obeys $\norm{\bbX}\leq4L^2$ with $\max\{\norm{\mbE\bbX\bbX^\top},\norm{\mbE\bbX^\top\bbX}\}\leq4L^4$. Conditional on $\ccalF_k^-$, the transition samples in $\ccalD_s^{(k)}$ are i.i.d.\ from the joint one-step law $d^{\pi_{\bbtheta^{(k)}}}\otimes\pi_{\bbtheta^{(k)}}\otimes\mbP\otimes\pi_{\bbtheta^{(k)}}$, so rectangular matrix Bernstein~\cite[Thm.~6.1.1]{tropp2015introduction}, applied through the Hermitian dilation of dimension $2(m+1)$ at confidence $\delta/(nK)$, gives $\norm{\bbM_i^{(k)}-\bbM_i^{\bbtheta^{(k)}}}\leq2\sqrt2\,L^2\sqrt{\ell_M/M_s}+\tfrac83L^2\ell_M/M_s\leq r_M$ conditionally, and a union bound over $i\in[n]$, $k<K$ yields the simultaneous bound. On that event, Weyl's inequality gives $\inf_{i,k}\sigma_{\min}(\bbM_i^{\bbtheta^{(k)}})\geq\inf_{i,k}\sigma_{\min}(\bbM_i^{(k)})-r_M$, so $T_{\rm cert}\cap\{\inf_{i,k}\sigma_{\min}(\bbM_i^{\bbtheta^{(k)}})<r_M\}$ is contained in the complement of the concentration event, and the second statement follows in the same way.
	\end{proof}

	\emph{Scope of certification.} The diagnostic gives a simultaneous lower confidence bound for the population TD matrices along the executed run, for one feature draw and one fixed horizon; it controls only the population TD-stability event, while the vector-concentration and gradient events are absorbed in expectation through the parameter choices. It does not give a lower bound on the probability that a future run satisfies (C$'$), so it cannot replace that premise in Theorem~\ref{thm:sample_complexity_traj}(c)--(d). The analyzed budget and projection radius use a threshold fixed before the run; a separate pilot may calibrate its scale but does not certify the main trajectory. The confidence statement does not cover repeated feature redraws after failed tests. Prospective certification would require a stability-on-a-tube argument (the continuity of $\bbtheta\mapsto\bbM_i^{\bbtheta}$, Lemma~\ref{lem:M_continuity}, provides a route) or an online stopping rule, which we leave to future work.

	\begin{remark}[Passability of the certificate]\label{rem:passability}
		The test $T_{\rm cert}$ has a deterministic passability barrier beyond the rank requirement $M_s\geq m+1$. The empirical matrix $\bbM_i^{(k)}$ is an average of rank-one matrices of nuclear norm at most $(1+\gamma)L^2$, so the argument of Proposition~\ref{prop:mu_M_obstruction} gives $\sigma_{\min}(\bbM_i^{(k)})\leq(1+\gamma)L^2/(m+1)$ on every sample path. Since $T_{\rm cert}$ requires $\sigma_{\min}(\bbM_i^{(k)})\geq2r_M=8L^2(\sqrt{\ell_M/M_s}+\ell_M/M_s)$, a necessary condition for the test to pass is
		\feq{
			8\Bigl(\sqrt{\frac{\ell_M}{M_s}}+\frac{\ell_M}{M_s}\Bigr)\leq\frac{1+\gamma}{m+1},\label{eq:passability}
		}
		in which $L^2$ has cancelled; in particular $M_s\geq64(m+1)^2\ell_M/(1+\gamma)^2$. As an illustration, for $n=9$, $K=200$, $\gamma=0.95$, and $\delta=0.05$, condition~\eqref{eq:passability} requires $M_s>6.6\times10^5$ at $m=50$ and $M_s>1.2\times10^9$ at $m=2000$. These are necessary thresholds under the conservative uniform radius $r_M$, not sufficient sample sizes. The barrier does not affect the validity of Proposition~\ref{prop:diagnostic}, whose false-certification control holds for a test that rarely or never passes, but it limits its practical use at moderate batch sizes such as the $M_s=2000$ of Section~\ref{sec:simu}: there the certificate cannot pass, and an observed singular value, such as a median over $(i,k)$, is not a certificate. Sharpening the radius is outside the present scope.
	\end{remark}

	\begin{remark}[Conditioning versus approximation as $m$ varies]\label{rem:conditioning}
		The feature dimension $m$ enters the guarantee in two opposite ways. On the approximation side, the floor component $\Efloor^{\rm rff}(\kappa,m)=\widetilde\ccalO\bigl(A_\kappa\gamma^2\bbarr^2\,C_{\rm RFF}(D_\kappa d_S)\tilde g_\alpha^2/((1-\gamma)^3m)\bigr)$ decreases with $m$, and $\Efloor^{\rm trunc}$ and $\Efloor^{\rm graph}$ do not depend on $m$; since these are bounds in the function norm $\ccalL^2(\nu^\pi)$ (Lemma~\ref{lem:pop_L2}(c)), no factor $1/\mu_M$ multiplies them. On the estimation side, Proposition~\ref{prop:mu_M_obstruction} forces $\mu_M(m)\leq(1+\gamma)L^2/(m+1)$, so the radius $W^\ast=L\bbarr/\mu_M$, the gradient-noise scale $\sigma_g^2\propto W^{\ast2}$, the deterministic bound $A_Q$, and the critic batch $M_s=\widetilde\ccalO(L^6W^{\ast2}A_\kappa/(\mu_M^2\epsilon))\propto\mu_M^{-4}$ deteriorate at least polynomially in $m$: under the compatible scenario $\mu_M=\Theta(1/m)$, $M_s=\widetilde\ccalO(m^4)$ at fixed excess $\epsilon$ with the other primitives fixed, and faster deterioration or singularity is not excluded. Increasing $m$ therefore trades a decreasing population approximation floor against an increasing sample budget; the $\epsilon$-exponent of Theorem~\ref{thm:sample_complexity_fixed} is unaffected at fixed $m$, but the prefactor is not. The coefficient-norm comparison~\eqref{eq:pop_sup} does carry the factor $1+C'=1+(1+\gamma)L^2/\mu_M$; it is used only where a pointwise bound is needed and does not enter the floor. Whether a lower bound of order $1/m$ on $\mu_M$ is attainable for the spectral features is not established, and positive excitation itself remains a premise (Set~C).
	\end{remark}

	\subsection{Offline locality selection}

	\begin{proposition}[Offline radius selection from the decay envelopes]\label{prop:adaptive}
		Under the conditions of Theorem~\ref{thm:sample_complexity_fixed} with $m$ fixed and the strictly stronger hypothesis $\Delta\rho<1$, let $C_D$ be the graph-growth constant of Remark~\ref{rem:n_scaling}, so that $D_{\kappa_c}\leq C_D\Delta^{\kappa+\kappa_\pi}$ (capped at $n$), and define the $\kappa$-independent constants
		\feq{
			C_t&:=\frac{G^2C_D^2\Delta^{2\kappa_\pi}\gamma^2\tilde c^2\rho^2}{(1-\gamma)^3},\notag\\
			C_g&:=\Bigl(\frac{Gc\,\Delta}{(1-\gamma)(1-\tilde\delta)(\Delta-1)}\Bigr)^2,
		}
		so that $\Efloor^{\rm trunc}(\kappa)\leq C_t(\Delta\rho)^{2\kappa}$ and $\Efloor^{\rm graph}(\kappa)\leq C_g(\Delta\rho)^{2\kappa}$ for every $\kappa\geq1$. Fix an integer working range $[\![1,\kappa_{\max}]\!]$ with $\kappa_{\max}\geq\max\{1,\kappa_\pi+1\}$ and set
		\feq{
			\kappa^\ast:=\max\Bigl\{&1,\ \kappa_\pi+1,\\
			&\Bigl\lceil\frac{\log\max\{1,\Cfloor\max\{C_t,C_g\}/\epsilon\}}{2\log(1/(\Delta\rho))}\Bigr\rceil\Bigr\}.\label{eq:kappa_star}
		}
		Suppose that $\kappa^\ast\leq\kappa_{\max}$, that $m$ satisfies the RFF feasibility relation~\eqref{eq:rff_feasibility} at the selected radius $\kappa^\ast$ in its full form (the simplified ratio $\tilde g_\alpha(\kappa^\ast)^2D_{\kappa^\ast}d_S/m\leq1$ is a necessary scale, not a restatement of that relation), and that Assumption~\ref{asm:C} holds at $\kappa^\ast$ with threshold $\mu_M(m,\kappa^\ast)$; features are drawn once, at the selected radius, after $\kappa^\ast$ has been determined from the (oracle) decay constants. Then $\Cfloor\Efloor^{\rm trunc}(\kappa^\ast)\leq\epsilon$, $\Cfloor\Efloor^{\rm graph}(\kappa^\ast)\leq\epsilon$, and with the parameters of Theorem~\ref{thm:sample_complexity_fixed} evaluated at $\kappa^\ast$,
		\feq{
			&\mbE\!\Bigl[\,\tfrac{1}{nK}\sum_{k<K}\norm{\ccalG_\eta(\bbtheta^{(k)})}^2\,\Big|\,E_M\cap E_{\rm rff}\Bigr]\notag\\&\qquad\leq 3\epsilon+\Cfloor\,\Efloor^{\rm rff}(\kappa^\ast,m),
		}
		where, with $\kappa_c^\ast:=\kappa^\ast+\kappa_\pi$,
		\feq{
			\Efloor^{\rm rff}(\kappa^\ast,m)=\widetilde\ccalO\!\Bigl(\tfrac{G^2\gamma^2\bbarr^2}{(1-\gamma)^5}\,D_{\kappa_c^\ast}^2\,C_{\rm RFF}(D_{\kappa^\ast}d_S)\,\tfrac{\tilde g_\alpha(\kappa^\ast)^2}{m}\Bigr),
		}
		an upper envelope obtained from $\epsilon_P(m)^2=\widetilde\ccalO(C_{\rm RFF}(D_{\kappa^\ast}d_S)\tilde g_\alpha(\kappa^\ast)^2/m)$ (no matching lower inversion is claimed). Since $\tilde g_\alpha(\kappa)=\exp(\ccalO(D_\kappa d_S))$ and $\kappa^\ast=\ccalO(\log(1/\epsilon))$, substituting the selected radius into this upper envelope gives, according to the neighborhood growth $v(\kappa):=D_\kappa$, the following envelopes in $1/\epsilon$, with no matching lower growth rate asserted: for linear growth ($v(\kappa)=\ccalO(\kappa)$, e.g.\ path graphs with $D_\kappa=\min\{n,2\kappa+1\}$) the envelope is polynomial in $1/\epsilon$; for polynomial growth $v(\kappa)=\ccalO(\kappa^p)$, $p>1$, it is quasi-polynomial, $\exp(\ccalO(\log^p(1/\epsilon)))$; and under exponential neighborhood growth $v(\kappa)=\ccalO(\Delta^\kappa)$ it is exponential in a power of $1/\epsilon$, $\exp(\ccalO((1/\epsilon)^{\chi}))$ with $\chi=\log\Delta/(2\log(1/(\Delta\rho)))$, which need not be smaller than one. In every case the inflation can violate the RFF feasibility requirement if $m$ is held fixed. When the realized growth $D_{\kappa_c^\ast}$ reaches $n$, the cap contributes an explicit factor of $n^2$, consistent with Remark~\ref{rem:n_scaling}.
	\end{proposition}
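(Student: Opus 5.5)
The proof has three parts: establish the two envelope inequalities for the truncation and graph components, read off that $\kappa^\ast$ kills both via the logarithmic formula, and then invoke Theorem~\ref{thm:sample_complexity_fixed} at the selected radius.

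\emph{Envelope bounds.} For $\Efloor^{\rm trunc}$, substitute $A_\kappa=G^2D_{\kappa_c}^2/(1-\gamma)^2$ and $D_{\kappa_c}\leq C_D\Delta^{\kappa+\kappa_\pi}$ into the definition:
\[
\Efloor^{\rm trunc}(\kappa)\leq \frac{G^2C_D^2\Delta^{2\kappa_\pi}\gamma^2\tilde c^2\rho^2}{(1-\gamma)^3}\,\Delta^{2\kappa}\rho^{2\kappa}=C_t(\Delta\rho)^{2\kappa}.
\]
If the cap $D_{\kappa_c}\leq n$ binds, the bound only improves. For $\Efloor^{\rm graph}=\epsilon_{\rm agg}(\kappa)^2$, write $\tilde\delta^{\kappa_c+1}=((\Delta-1)\rho)^{\kappa+\kappa_\pi+1}\leq(\Delta\rho)^{\kappa}$. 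This uses $(\Delta-1)\rho<1$ to drop the extra $\kappa_\pi+1$ powers, and $(\Delta-1)\rho\leq\Delta\rho$. Squaring gives $C_g(\Delta\rho)^{2\kappa}$. Neither constant depends on $\kappa$, since $\tilde c=2c\rho^{-\kappa_\pi}$ involves only $\kappa_\pi$.

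\emph{Choice of radius.} Since $\Delta\rho<1$, $\log(1/(\Delta\rho))>0$. The third argument in the definition of $\kappa^\ast$ guarantees
\[
2\kappa^\ast\log(1/(\Delta\rho))\geq\log\bigl(\Cfloor\max\{C_t,C_g\}/\epsilon\bigr),
\]
so $\Cfloor\max\{C_t,C_g\}(\Delta\rho)^{2\kappa^\ast}\leq\epsilon$. If the maximum with $1$ is active, this reads $\Cfloor\max\{C_t,C_g\}\leq\epsilon$, and the same conclusion holds because $(\Delta\rho)^{2\kappa}\leq1$. Taking the maximum with $1$ and $\kappa_\pi+1$ only enlarges $\kappa^\ast$, which preserves the inequality by monotonicity and also satisfies the standing radius requirement. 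Hence $\Cfloor\Efloor^{\rm trunc}(\kappa^\ast)\leq\epsilon$ and $\Cfloor\Efloor^{\rm graph}(\kappa^\ast)\leq\epsilon$.

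\emph{Application of Theorem~\ref{thm:sample_complexity_fixed}.} The features are drawn only after $\kappa^\ast$ is fixed, and $\kappa^\ast$ is a deterministic function of oracle constants, so $\kappa^\ast$ acts as a fixed primitive. The hypotheses of Theorem~\ref{thm:sample_complexity_fixed} then hold at $\kappa^\ast$: RFF feasibility and Assumption~\ref{asm:C} are assumed there, and $\Delta\rho<1$ implies $\tilde\delta<1$. Its bound holds for almost every realization in $E_M\cap E_{\rm rff}$, so averaging over that event gives the conditional expectation bound
\[
\epsilon+\Efloor(\kappa^\ast,m)\leq 3\epsilon+\Cfloor\Efloor^{\rm rff}(\kappa^\ast,m).
\]

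\emph{The $\Efloor^{\rm rff}$ envelope and growth cases.} Insert $A_{\kappa^\ast}=G^2D_{\kappa_c^\ast}^2/(1-\gamma)^2$ and the inversion $\epsilon_P(m)^2=\widetilde\ccalO(C_{\rm RFF}\tilde g_\alpha^2/m)$ from Theorem~\ref{thm:rff_approx}; the factor $(1-\gamma)^{-5}$ follows. For the growth regimes, $\log\tilde g_\alpha(\kappa)=\ccalO(D_\kappa d_S)$ from \eqref{eq:g_alpha_bound} and the definition of $\tilde g_\alpha$. Now substitute $\kappa^\ast=\ccalO(\log(1/\epsilon))$:
\begin{itemize}
\item linear growth gives $\exp(\ccalO(\log(1/\epsilon)))$, a polynomial;
\item growth of order $\kappa^p$ gives $\exp(\ccalO(\log^p(1/\epsilon)))$;
\item growth of order $\Delta^\kappa$ gives $\Delta^{\kappa^\ast}=(1/\epsilon)^{\log\Delta/(2\log(1/(\Delta\rho)))}$, up to constants, which yields the stated $\chi$.
\end{itemize}
The polynomial prefactors $D^2C_{\rm RFF}$ are absorbed into these exponential forms.

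\emph{Main obstacle.} The delicate part is the bookkeeping: the constants must be genuinely $\kappa$-independent (in particular $\tilde c$ and the $\Delta^{2\kappa_\pi}$ factor), the $\tilde\delta$ exponent must be compared correctly against $(\Delta\rho)^\kappa$, and the exponential-growth exponent must come out exactly as $\chi$. The rest is direct substitution.
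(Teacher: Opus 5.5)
Your proposal is correct and follows the paper's proof essentially step for step. It uses the same two envelope bounds ($\tilde\delta^{\kappa_c+1}\le(\Delta\rho)^{\kappa}$ from $\tilde\delta\le\Delta\rho<1$), the same logarithmic choice forcing $\Cfloor\max\{C_t,C_g\}(\Delta\rho)^{2\kappa^\ast}\le\epsilon$, the same invocation of Theorem~\ref{thm:sample_complexity_fixed} at $\kappa^\ast$, and then substitution of $\epsilon_P(m)^2$ and the growth cases; the closing $n^2$ remark, which you leave implicit, follows immediately from the cap $D_{\kappa_c^\ast}\le n$ in the $D_{\kappa_c^\ast}^2$ factor.
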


	The proof is in Appendix~\ref{app:adaptive}. This is an \emph{offline radius-selection rule using known decay and growth constants}, not an online data-adaptive method and not a rate improvement; the selected radius balances the two spatial residuals against the target excess, and the surviving finite-feature residual need not shrink. The fixed-radius theorem guarantees a total stationarity target $\tau_{\rm tot}$ whenever the optimization excess and the complete residual bound satisfy $\epsilon_{\rm opt}+\Efloor(\kappa,m)\leq\tau_{\rm tot}$; for the selected-radius construction, a sufficient allocation is $3\epsilon_{\rm opt}+\Cfloor\Efloor^{\rm rff}(\kappa^\ast,m)\leq\tau_{\rm tot}$, for example $\epsilon_{\rm opt}=\tau_{\rm tot}/6$ with the surviving finite-feature term at most $\tau_{\rm tot}/2$. These are sufficient budget allocations, not necessary conditions, and at fixed $m$ the binding term is the finite-feature residual through $\tilde g_\alpha(\kappa)^2D_\kappa d_S/m$; the rule applies only while the selected radius remains in the feasible range.

	\section{Numerical Experiments}\label{sec:simu}

	We study CDCPG on a Linear-Coupled-Quadratic (LCQ) benchmark, a structured networked-control benchmark with a closed-form linear-quadratic regulator (LQR) reference. Each of $n=9$ agents on a path graph carries scalar state $\bbs_i\in\mbR$ and action $\bba_i\in[-A,A]$, with local dynamics
	\feq{
		\bbs_i^{t+1}=\rho_{\rm dyn}\bbs_i^t+\beta\bba_i^t+\kappa_{\rm dyn}\!\sum_{j\in\ccalN_i\setminus\{i\}}\!(\bbs_j^t-\bbs_i^t)+\bbvarepsilon_i^t,
	}
	$\bbvarepsilon_i^t\sim\ccalN(0,\sigma_d^2)$, quadratic neighbor-coupled rewards $r_i=-\bbs_i^2-w_n\sum_{j\in\ccalN_i\setminus\{i\}}\bbs_j^2-w_a\bba_i^2$, and initial law $\mu_0=\ccalN(\bbzero,\sigma_0^2\bbI_n)$. Parameters are listed in Table~\ref{tab:lcq_params}. States and actions are projected onto $[-S_{\max},S_{\max}]$ and $[-A,A]$ as a numerical safeguard; boundary events were recorded on fewer than $0.1\%$ of transitions. Stacking the agents gives $\bbs'=\bbA\bbs+\bbB\bba+\bbvarepsilon$ with $\bbA=\rho_{\rm dyn}\bbI_n+\kappa_{\rm dyn}(\bbW_\ccalG-\bbD_\ccalG)$, $\bbB=\beta\bbI_n$, and stage cost weights $\bbQ=\bbI_n+w_n\bbD_\ccalG$, $\bbR=w_a\bbI_n$, where $\bbW_\ccalG$ and $\bbD_\ccalG$ are the adjacency and degree matrices. Let $\bbK_{\rm LQR}$ denote the gain solving the unconstrained discounted Riccati equation. The reference $V_\sigma^{\rm ref}=-2.713$ is the closed-form per-agent infinite-horizon discounted return of the stochastic policy $\bba=-\bbK_{\rm LQR}\bbs+\ccalN(\bbzero,\sigma_\pi^2\bbI)$ from $\mu_0$ in the unclipped model (the subscript indicates the $\sigma_\pi$-perturbed policy), confirmed to $10^{-2}$ by $5\times10^3$ Monte Carlo episodes. Because the simulator clips states and actions, $\bbK_{\rm LQR}$ is a reference controller and is not claimed to be optimal for the clipped MDP. Signed gaps below are $100(\bar J-V_\sigma^{\rm ref})/\abs{V_\sigma^{\rm ref}}\%$.

	The LCQ studies examine related implementations rather than an instance of Sets~A--D. The unclipped drift is linear and unbounded and the unclipped reward is quadratic and unbounded; clipping the successor state creates boundary probability masses, so the implemented transition is not the Gaussian density used by the spectral construction; a clipped Gaussian action has atoms at the action bounds, and the actor uses the score of the latent Gaussian draw rather than a bounded score of an action density as in (A3); the learning rate is cosine-decayed, whereas the analysis uses a constant step; and the control comparison uses a structured quadratic critic, whereas the feature study uses the augmented RFF representation. Accordingly, returns and reference discrepancies are reported as empirical observations. They are compared below with the qualitative form of the corresponding bounds, and they do not validate the projected-stationarity bound, its constants, or its excitation and decay assumptions.

	\begin{table}[!t]
		\centering
		\caption{Parameters of the LCQ benchmark.}
		\label{tab:lcq_params}
		\renewcommand{\arraystretch}{1.15}
		\begin{tabular}{@{}llll@{}}
			\hline
			$\rho_{\rm dyn}=0.7$ & $\beta=0.5$ & $\sigma_d=0.10$ & $\kappa_{\rm dyn}=0.20$ \\
			$w_n=0.5$ & $w_a=0.5$ & $A=5.0$ & $S_{\max}=3.0$ \\
			$\sigma_0=0.5$ & $\sigma_\pi=0.30$ & $\gamma=0.95$ & $n=9$ \\
			\hline
		\end{tabular}
	\end{table}

	\subsection{Setup and baselines}\label{sec:simu_setup}

	\emph{Policies.} In every method, agent $i$ samples a latent Gaussian action $\bbu_i\sim\ccalN(\mu_{\bbtheta_i}(\bbs),\sigma_\pi^2)$ with linear mean $\mu_{\bbtheta_i}(\bbs)=\bbtheta_i^\top\bbx_i(\bbs)$, where $\bbx_i(\bbs)$ collects the state coordinates observed by the policy ($\bbx_i=\bbs_{\ccalN_i^{\kappa_\pi}}$ for the local methods), and applies the clipped action $\bba_i=\Pi_{[-A,A]}(\bbu_i)$. The recorded actor update uses the Gaussian score of the latent draw, which for this linear mean is the parameter score $\nabla_{\bbtheta_i}\log p_{\bbtheta_i}(\bbu_i\mid\bbs)=\sigma_\pi^{-2}\bigl(\bbu_i-\bbtheta_i^\top\bbx_i(\bbs)\bigr)\bbx_i(\bbs)$. This is not the score of the distribution of the applied action, which has atoms at $\pm A$, so the policy class differs from that of (A3); the latent likelihood-ratio form is nevertheless a valid gradient identity for the clipped controller, since $\nabla_{\bbtheta}\mbE_{\bbu\sim p_\bbtheta}[F(\Pi_{[-A,A]}(\bbu))]=\mbE_{\bbu\sim p_\bbtheta}[F(\Pi_{[-A,A]}(\bbu))\nabla_\bbtheta\log p_\bbtheta(\bbu)]$ for bounded measurable $F$.

	\emph{CDCPG implementation.} The control comparison uses a one-hop actor ($\kappa_\pi=1$), critic radius $\kappa=1$, $\lambda=10^{-4}$, $\eta=5\times 10^{-2}$ with cosine decay, $M_s=2000$, and $M_g=1024$. This pair of radii lies outside the standing relation $\kappa\geq\kappa_\pi+1$ of the analysis. The critic in this comparison is a structured state-value critic tailored to the linear-quadratic structure: $\hat V_i$ is fitted by bootstrapped least-squares TD on the quadratic basis $\bigl(1,\bbs_{\ccalN_i^\kappa},\mathrm{vech}(\bbs_{\ccalN_i^\kappa}\bbs_{\ccalN_i^\kappa}^\top)\bigr)$ of dimension $1+D_\kappa+D_\kappa(D_\kappa+1)/2$. A state-only critic multiplied by the score has zero conditional mean, so the action dependence required by the estimator~\eqref{eq:gradient_estimator} is supplied, using the known local model, by the mean-transition plug-in surrogate
	\feq{
		\hat Q_i(\bbs,\bba)=r_i(\bbs_{\ccalN_i},\bba_i)+\gamma\,\hat V_i\bigl(\bbf_{i,\kappa}(\bbs,\bba)\bigr),
	}
	which is consistent with the model-assisted access of (D1). The surrogate evaluates $\hat V_i$ at the mean successor and is not an exact one-step Bellman expectation for the clipped simulator. For a quadratic $\hat V_i(\bby)=c+\bbb^\top\bby+\bby^\top\bbH\bby$ and unclipped additive noise, $\mbE[\hat V_i(\bbf+\bbvarepsilon)]=\hat V_i(\bbf)+\sigma_d^2\,\mathrm{tr}(\bbH)$; the omitted term does not depend on the action, so it has zero conditional mean against the centered score and does not change the expected actor update in the unclipped model, whereas under clipping the difference is in general action dependent. The locality, aggregation over $\ccalN_i^{\kappa_c}$, and projected update are those of Algorithm~\ref{alg:cdcpg}; the critic is not the augmented RFF--LSTD critic of Definition~\ref{def:aug_features} analyzed by the theorems, which is used only in Section~\ref{sec:simu_feat}.

	\emph{Baselines.} Independent policy gradient (IPG) applies REINFORCE with each agent's own state ($\kappa_\pi=0$). Networked AC is adapted from the communication-based method of~\cite{zhang2018fully}: a one-hop policy with a one-hop weight-consensus step on a bootstrapped state-value critic. MAAC is adapted from~\cite{iqbal2019actor}: attention aggregation over a fixed random feature map with a Monte Carlo regression critic. Centralized uses a single global linear Gaussian policy with a bootstrapped state-value critic on the global quadratic basis; it is a full-information baseline and not a guaranteed upper bound at finite training. The labels identify the reference methods; the implementations are adaptations to LCQ and do not reproduce the original algorithms or transfer their guarantees. All methods share the exploration scale $\sigma_\pi$ and the cosine schedule, and their per-iteration interaction budgets agree within a factor of two.

	\emph{Evaluation.} All runs use $K=200$ outer iterations and $5$ random seeds; curves report the mean $\pm$ one standard deviation across seeds. Each plotted iterate is evaluated by $32$ rollouts of horizon $100$ from $\mu_0$ under the stochastic policy, so the plotted returns are horizon-$100$ Monte Carlo estimates and not the infinite-horizon objective $J/n$. In the unclipped model the reference controller has infinite-horizon return $-2.71323$ and horizon-$100$ return $-2.69984$. Using its infinite-horizon value as the common benchmark shifts every finite-horizon gap numerator by $0.01339$ relative to using its horizon-$100$ value, that is, by about $0.5$ percentage points with the denominator held at $\abs{V_\sigma^{\rm ref}}$. This reference offset is common to all methods and leaves comparisons among the reported finite-horizon returns unchanged; it is not a common tail correction for the learned policies, whose omitted tails depend on the policy and on clipping. The evaluation protocol of the separately tabulated final returns (checkpoint, horizon, and number of rollouts) was not archived.

	\subsection{Comparison with decentralized baselines}

	Fig.~\ref{fig:convergence} and Table~\ref{tab:lcq_compare} report the return comparison. The structured-critic CDCPG implementation and the centralized baseline attain final mean returns of $-2.728$ and $-2.725$, respectively; no statistical equivalence is claimed at five seeds. Networked AC, MAAC, and IPG attain lower returns under the stated implementations and budgets, in this order. The tabulated final returns and the plotted per-iteration evaluations are two separately recorded summaries: the last plotted means differ from the tabulated values by at most $0.03$ (largest for IPG), which is of the order of one across-seed standard deviation, and the ranking of the methods is the same in both. Whether the tables use the same final iterates, horizon, and rollout count as the curves was not archived, so the two summaries are reported side by side and are not claimed to be reconciled. IPG uses an own-state policy, whereas CDCPG and Networked AC use one-hop policies. For orientation, in the unclipped model a numerically optimized diagonal ($\kappa_\pi=0$) linear feedback attains an infinite-horizon return of $-2.795$, while a numerically optimized one-hop feedback attains $-2.7134$ against the unrestricted Riccati value $-2.7132$, which bounds every linear feedback from above; the one-hop class therefore loses nothing at three decimals, and no global optimality is claimed for the diagonal candidate. The compared implementations also differ in critic construction and aggregation, so the observed returns do not isolate the effects of information, approximation, or optimization. Lemma~\ref{lem:grad_bias} bounds the omitted-reward contribution to the gradient bias; it does not predict return gaps, and the theorems make no prediction about the baselines.

	\begin{figure}[!t]
		\centering
		\includegraphics[width=0.95\columnwidth]{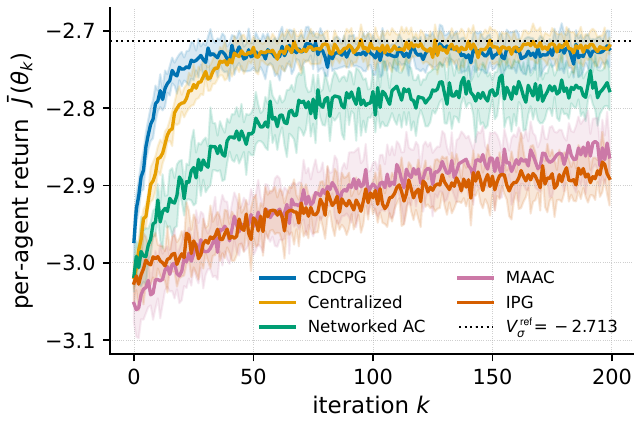}
		\caption{Horizon-$100$ Monte Carlo estimate of the per-agent discounted return (axis label $\bar J(\bbtheta_k)$) versus iteration $k$ on LCQ ($n=9$, $5$ seeds, mean $\pm$ one std). Dotted line: infinite-horizon reference $V_\sigma^{\rm ref}=-2.713$. CDCPG uses the structured quadratic-basis critic of Section~\ref{sec:simu_setup}; the separate RFF feature study is reported in Fig.~\ref{fig:q_vs_m}.}
		\label{fig:convergence}
	\end{figure}

	\begin{table}[!t]
		\centering
		\caption{LCQ benchmark, $n=9$ path graph, $K=200$, $5$ seeds, $\sigma_\pi=0.30$. Final per-agent return $\bar J$ (mean and std across seeds) and signed gap to $V_\sigma^{\rm ref}=-2.713$. The CDCPG row uses the structured quadratic-basis critic.}
		\label{tab:lcq_compare}
		\renewcommand{\arraystretch}{1.15}
		\begin{tabular}{@{}lcc@{}}
			\hline
			Method & $\bar J$ ($\pm$ std) & Gap \\
			\hline
			\textbf{CDCPG (structured critic)} & $-2.728\,(0.018)$ & $-0.55\%$ \\
			Centralized & $-2.725\,(0.015)$ & $-0.44\%$ \\
			Networked AC & $-2.756\,(0.042)$ & $-1.59\%$ \\
			MAAC & $-2.852\,(0.029)$ & $-5.12\%$ \\
			IPG & $-2.863\,(0.025)$ & $-5.53\%$ \\
			\hline
		\end{tabular}
	\end{table}

	\subsection{Feature-dimension trend}\label{sec:simu_feat}

	This study records the reference-discrepancy trend of an augmented RFF implementation (features as in Definition~\ref{def:aug_features}) at $m\in\{50,100,250,500,1000,2000\}$ with the policy, $\kappa$, the rollout protocol, and the remaining settings of Section~\ref{sec:simu_setup} held fixed. Fig.~\ref{fig:q_vs_m} reports the pointwise relative discrepancy $\abs{\hat Q-Q^{\rm ref}_\sigma}/\abs{Q^{\rm ref}_\sigma}$ from the closed-form unconstrained linear-quadratic reference $Q^{\rm ref}_\sigma$, averaged over $10^4$ test samples drawn from the occupancy of the fixed policy. Its across-seed mean decreases from about $43\%$ at $m=50$ to about $9\%$ at $m=2000$, and the log--log least-squares fit to the six means has slope $-0.46$ ($R^2=0.98$). Theorem~\ref{thm:rff_approx} gives $\epsilon_P(m)=\widetilde\ccalO(m^{-1/2})$, so the finite-feature residual $B_{\rm rff}(m)$ of Theorem~\ref{thm:q_linear} decreases at order $m^{-1/2}$ up to logarithms; the fitted exponent is consistent with this order, which would correspond to a regime in which the $m$-independent spatial residual $B_{\rm loc}(\kappa)$ and the finite-sample term of Theorem~\ref{thm:lstd_error}, whose bound grows with $m$ at fixed $M_s$, are not dominant.

	The sweep does not verify that bound, for four reasons. First, the plotted quantity is a discrepancy from the unclipped reference rather than a $Q$-error of the clipped MDP. The identity of the fixed policy was not archived, so a policy mismatch with the reference cannot be excluded, and the record does not state whether $\hat Q$ and $Q^{\rm ref}_\sigma$ are local-reward quantities or network aggregates, nor the values of $\alpha$, $W^\ast$, and $\tau_{\rm sv}$ used; the study is therefore a descriptive observation and not a reproducible same-policy test of the analyzed estimator. Second, finite-sample estimation, regularization, conditioning, and localization also affect it. Third, every tested dimension lies below the smallest $m$ at which the sufficient condition~\eqref{eq:rff_feasibility} can hold: with $D_\kappa d_S\geq3$, $\tilde g_\alpha\geq1$, $\epsilon_P\leq1$, and $\delta_{\rm rff}<1$, that condition requires $m>128\cdot7\cdot\log36\approx3.2\times10^3$. Sufficient constants can be conservative, so this does not show that smaller $m$ approximate poorly; it shows that the trend is descriptive and is not certified by the constants of Theorem~\ref{thm:rff_approx}. Fourth, the empirical TD matrix $\bbM_i^{(k)}$ is a sum of $M_s$ rank-one terms, and at $m=2000$ the augmented dimension $m+1=2001$ exceeds $M_s=2000$. The unshifted empirical matrix is therefore singular at that point and the certificate of Proposition~\ref{prop:diagnostic} cannot pass. This does not determine whether the regularized estimator is well defined: with the recorded ridge $\lambda=10^{-4}$ the estimator of Definition~\ref{def:lstd} inverts the shifted matrix $\bbM_i^{(k)}+\lambda\bbI$, whose invertibility is not precluded by the rank deficiency of $\bbM_i^{(k)}$, and the threshold branch was not logged. Refitting the plotted means without this point gives a slope of about $-0.5$, so the reported trend does not rest on it.

	\begin{figure}[!t]
		\centering
		\includegraphics[width=0.85\columnwidth]{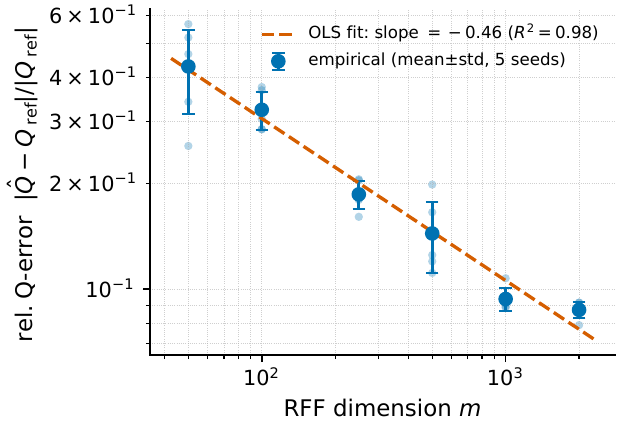}
		\caption{Mean relative discrepancy $\abs{\hat Q-Q^{\rm ref}_\sigma}/\abs{Q^{\rm ref}_\sigma}$ from the unconstrained linear-quadratic reference (the quantity on the vertical axis, abbreviated ``rel.\ $Q$-error'' in the plot; it is a reference discrepancy, not a $Q$-error of the clipped MDP) versus RFF dimension $m$ on LCQ ($M_s=2000$, log--log axes; faint markers: five seeds; error bars: one std). Dashed line: least-squares fit to the means with slope $-0.46$ ($R^2=0.98$). At $m=2000$ the unshifted empirical TD matrix is rank deficient (Section~\ref{sec:simu_feat}).}
		\label{fig:q_vs_m}
	\end{figure}

	\subsection{Decay proxy and ablations}

	The empirical absolute reward covariance decreases with graph distance, with fitted geometric rate $\hat\rho=0.86$ (Fig.~\ref{fig:decay}). This measures spatial dependence of rewards, whereas (B1) bounds worst-case sensitivity of an action-value function to changes outside a neighborhood. The covariance fit is therefore a descriptive statistic; it verifies neither (B1) nor the differentiated-decay premise (D3)(ii) and is not used to select the radius in Proposition~\ref{prop:adaptive}. For orientation, on the path graph ($\Delta=2$), the graph-tail condition $(\Delta-1)\rho<1$ of Lemma~\ref{lem:grad_bias} holds for every $\rho<1$, whereas the stronger condition $\Delta\rho<1$ of Proposition~\ref{prop:adaptive} requires $\rho<1/2$, which the finite-propagation baseline $\rho_0=\gamma^{1/(\kappa_\pi+1)}\geq0.95$ of Lemma~\ref{lem:baseline_decay} does not satisfy.

	\begin{figure}[!t]
		\centering
		\includegraphics[width=0.85\columnwidth]{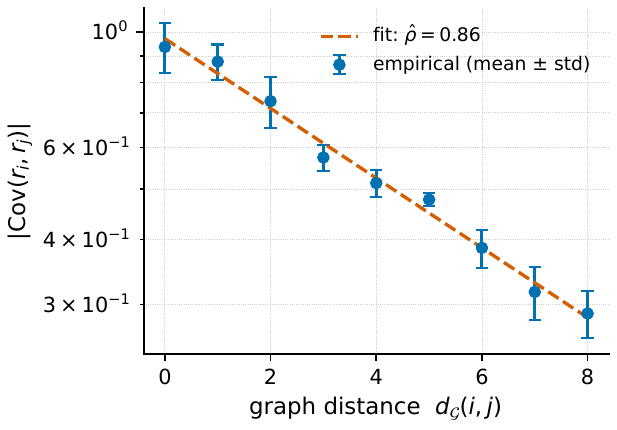}
		\caption{Empirical absolute cross-agent reward covariance $\abs{\mathrm{Cov}(r_i,r_j)}$ versus graph distance on LCQ ($5$ seeds, semi-log axes). Dashed line: geometric fit with rate $\hat\rho=0.86$.}
		\label{fig:decay}
	\end{figure}

	Fig.~\ref{fig:ablation} and Table~\ref{tab:lcq_ablation} sweep $\kappa$ (at $\lambda=10^{-4}$) and $\lambda$ (at $\kappa=3$) for the structured-critic implementation at $K=150$, $3$ seeds. These sweeps are a separate set of runs whose configuration beyond the swept parameters is not archived, and they are not directly comparable with Table~\ref{tab:lcq_compare}: the $\kappa=1$ ablation return ($-2.778$) differs from the main-comparison CDCPG return ($-2.728$) at the same $\kappa$ and $\lambda$, a difference not explained by the recorded settings. As in the main comparison, the tabulated final returns were recorded separately from the plotted evaluations; the two differ by up to $0.04$ (largest at $\kappa=1$ and $\lambda=1$). Both agree on the following. With the state-value basis restricted to the agent's own state ($\kappa=0$, outside the standing radius condition $\kappa\geq\max\{1,\kappa_\pi+1\}$; the model-assisted surrogate still uses neighboring states through $r_i$ and $f_i$) the final gap is about $-5\%$, whereas the positive-radius bases $\kappa\in\{1,2,3\}$ attain gaps between $-2.4\%$ and $-0.7\%$. The differences among the latter three are comparable to the across-seed standard deviations and are not ordered identically in the table and in the plotted endpoints. Because the remaining configurations were not retained, the comparison is descriptive: in these recorded runs the positive-radius bases achieve higher final returns than the radius-zero basis, the effect of the basis radius is not isolated, and no geometric rate in $\kappa$ is inferred; the gap is, moreover, a return difference and not the stationarity quantity bounded through $\Efloor^{\rm trunc}(\kappa)$ and $\Efloor^{\rm graph}(\kappa)$. Returns vary little over $\lambda\in\{10^{-6},10^{-4},10^{-2}\}$ and deteriorate markedly at $\lambda=1$, qualitatively consistent with the additive ridge-bias bound $2\lambda W^\ast/\mu_M$ of Theorem~\ref{thm:lstd_error}, which grows linearly in $\lambda$; the data do not identify whether the deterioration at $\lambda=1$ is caused by this bias, by conditioning, or by other implementation features.

	\begin{figure*}[!t]
		\centering
		\begin{minipage}{0.45\textwidth}
			\centering
			\includegraphics[width=\linewidth]{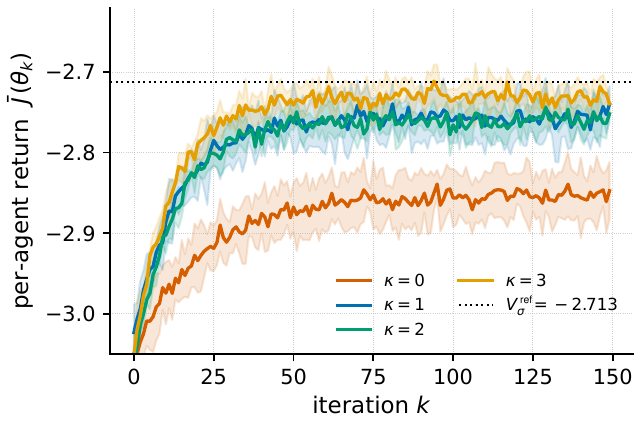}\\
			{\footnotesize (a) Truncation radius $\kappa$.}
		\end{minipage}\hfill
		\begin{minipage}{0.45\textwidth}
			\centering
			\includegraphics[width=\linewidth]{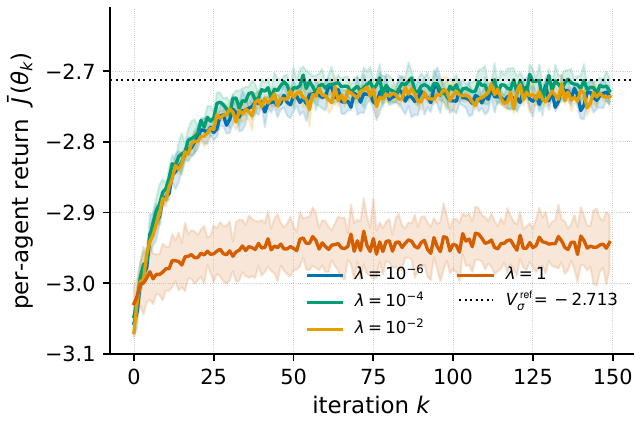}\\
			{\footnotesize (b) Regularization $\lambda$.}
		\end{minipage}
		\caption{Structured-critic CDCPG ablation on LCQ ($K=150$, $3$ seeds; mean $\pm$ one std of the per-iteration return estimates). Dotted line: infinite-horizon reference $V_\sigma^{\rm ref}=-2.713$. (a) Truncation-radius sweep: the radius-zero basis attains lower returns, and $\kappa\in\{1,2,3\}$ are not separated beyond the across-seed variation. (b) Regularization sweep: returns vary little over four decades and deteriorate at $\lambda=1$; the mechanism is not identified by these data.}
		\label{fig:ablation}
	\end{figure*}

	\begin{table}[!t]
		\centering
		\caption{Structured-critic CDCPG ablation on LCQ ($K=150$, $3$ seeds). Final per-agent return $\bar J$ (mean and std) and signed gap to $V_\sigma^{\rm ref}$. The $\kappa$ sweep fixes $\lambda=10^{-4}$, and the $\lambda$ sweep fixes $\kappa=3$. The two sweeps are separate runs; the repeated setting $\kappa=3$, $\lambda=10^{-4}$ appears once in each and was not pooled, and the sweeps are not directly comparable with Table~\ref{tab:lcq_compare} (see the text).}
		\label{tab:lcq_ablation}
		\renewcommand{\arraystretch}{1.15}
		\begin{tabular}{@{}llcc@{}}
			\hline
			Sweep & Setting & $\bar J$ ($\pm$ std) & Gap \\
			\hline
			$\kappa$ & $0$ & $-2.852\,(0.040)$ & $-5.13\%$ \\
			& $1$ & $-2.778\,(0.030)$ & $-2.40\%$ \\
			& $2$ & $-2.748\,(0.025)$ & $-1.29\%$ \\
			& $3$ & $-2.732\,(0.028)$ & $-0.70\%$ \\
			\hline
			$\lambda$ & $10^{-6}$ & $-2.732\,(0.018)$ & $-0.70\%$ \\
			& $10^{-4}$ & $-2.730\,(0.018)$ & $-0.63\%$ \\
			& $10^{-2}$ & $-2.728\,(0.018)$ & $-0.55\%$ \\
			& $1$ & $-2.978\,(0.034)$ & $-9.77\%$ \\
			\hline
		\end{tabular}
	\end{table}

	\section{Conclusion}\label{sec:conclu}

	We established a finite-sample analysis of localized policy optimization that separates population value approximation from critic estimation. The direct spectral representation and the discounted-occupancy Bellman bound yield an explicit spatial and finite-feature residual, while feature excitation controls the sample budget. At fixed neighborhoods and feature dimension, the expected squared projected-gradient mapping is bounded by this residual plus an excess whose shared-oracle sample cost is inverse-squared; the same decomposition yields an offline radius-selection rule under stronger graph-tail conditions and a trajectory-conditional companion result, and the accompanying diagnostic assesses conditioning along an executed run. Numerical studies on a structured benchmark illustrate related implementations. The guarantees remain conditional on the stated excitation, decay, and smoothness assumptions; quantitative verification of these premises for coupled model families and control of conditioning as the representation grows remain open questions for future work.

	\appendices

	\section{Baseline decay by finite propagation}\label{app:lipschitz}

	\subsection*{Proof of Lemma~\ref{lem:baseline_decay}}

	\begin{proof}
	Fix $i$, $\kappa\in\mbN$, and two state-action pairs $(\bbs,\bba)$ and $(\tilde\bbs,\tilde\bba)$ agreeing on $\ccalN_i^\kappa$. Set $h:=\kappa_\pi+1$. Fix the initial state-action pairs $(\bbS^{(0)},\bbA^{(0)})=(\bbs,\bba)$ and $(\widetilde\bbS^{(0)},\widetilde\bbA^{(0)})=(\tilde\bbs,\tilde\bba)$. Because the local state and action spaces are standard Borel, each local transition kernel and each local policy kernel admits a measurable randomization by a uniform variable. For transitions at times $t\geq0$, use the same independent uniform variables in the two trajectories' local transition randomizations; for policy sampling at times $t\geq1$, use the same independent uniform variables in their local policy randomizations. Equal local transition inputs then produce equal successor states, and equal policy observation windows produce equal actions; the time-zero actions remain the prescribed actions in the definition of $Q_i^\pi$. (For the Gaussian model of (A1), common additive noise is a special case.)

	Let $\ccalR_t$ denote a radius such that the two trajectories agree on $\ccalN_i^{\ccalR_t}$ at the state-action level at time $t$; $\ccalR_0=\kappa$. The local transition randomization depends only on the one-hop state-action window, so equal inputs and the shared randomization give equal successor states: the successor states agree for all $j$ with $\ccalN_j\subseteq\ccalN_i^{\ccalR_t}$, in particular on $\ccalN_i^{\ccalR_t-1}$. The actions at time $t+1$ then agree for all $j$ with $\ccalN_j^{\kappa_\pi}\subseteq\ccalN_i^{\ccalR_t-1}$, in particular on $\ccalN_i^{\ccalR_t-1-\kappa_\pi}$. The induction gives agreement on $\ccalN_i^{\kappa-th}$ for every $t$ with $\kappa-th\geq1$ (no negative-radius neighborhood is used). The reward $r_i(\bbs_{\ccalN_i}^{(t)},\bba_{\ccalN_i}^{(t)})$ depends only on $\ccalN_i^1$, so the two reward sequences agree for $t=0,\ldots,\lceil\kappa/h\rceil-1$ (using $\lfloor(\kappa-1)/h\rfloor=\lceil\kappa/h\rceil-1$). Since both trajectories have the correct marginal laws, taking expectations gives
	\feq{
		\abs{Q_i^\pi(\bbs,\bba)-Q_i^\pi(\tilde\bbs,\tilde\bba)}
		&\leq\sum_{t\geq\lceil\kappa/h\rceil}\gamma^t\,2\bbarr
		=\frac{2\bbarr}{1-\gamma}\gamma^{\lceil\kappa/h\rceil}\notag\\
		&\leq\frac{2\bbarr}{1-\gamma}\gamma^{\kappa/h}.
	}
	With $\rho_0:=\gamma^{1/h}$ the right-hand side equals $\frac{2\bbarr}{(1-\gamma)\rho_0}\rho_0^{\kappa+1}=c_0\rho_0^{\kappa+1}$, which is (B1) with $(c_0,\rho_0)$. Only bounded rewards, one-hop transitions, and $\kappa_\pi$-hop policies were used.
	\end{proof}

	The baseline argument establishes value decay with $\rho_0=\gamma^{1/(\kappa_\pi+1)}$. It does not establish a faster spatial rate or the cross-parameter curvature decay in (D3). Such conclusions require additional quantitative bounds on spatial influence and policy sensitivity, including control of the sums over reward locations in the Hessian. We treat the stronger rate in (B1), when required by the graph-tail conditions, and the differentiated-decay condition (D3) as separate assumptions.

	\section{Derivation of $L_J=\ccalO(1)$ under the differentiated regularity (D3)}\label{app:LJ_proof}

	This appendix shows that, under Assumption (D3), the Hessian $\nabla^2 J(\bbtheta)$ has operator norm bounded independently of the network size $n$ on $\Theta_0$, which justifies treating $L_J$ as an $n$-uniform primitive in (D2). The geometric decay of the cross-parameter Hessian block $\partial^2_{\bbtheta_i\bbtheta_j}J$ in graph distance is assumed in (D3)(ii); what this appendix establishes is the implication ``(D3) $\Rightarrow$ $L_J=\ccalO(1)$,'' i.e.\ that the assumed per-block decay, once granted, sums to an $n$-uniform operator-norm bound on the Hessian by a graph row-sum argument.

	\subsection*{Step 0: Twice-differentiability for each fixed $n$}

	Write $R_t:=\sum_i r_i(\bbs^{(t)}_{\ccalN_i},\bba^{(t)}_{\ccalN_i})$, so $J(\bbtheta)=\sum_{t\geq0}\gamma^t\mbE_\bbtheta[R_t]$ with $\abs{R_t}\leq n\bbarr$. The law of the trajectory prefix $(\bbs^{(0)},\bba^{(0)},\ldots,\bbs^{(t)},\bba^{(t)})$ has density, with respect to the $\bbtheta$-free reference measure $\mu_0(d\bbs^{(0)})\,d\bba^{(0)}\,\mbP(d\bbs^{(1)}\mid\bbs^{(0)},\bba^{(0)})\,d\bba^{(1)}\cdots$ (a kernel composition), equal to $\prod_{\tau\leq t}\pi(\bba^{(\tau)}\mid\bbs^{(\tau)};\bbtheta)$. Fix $n$ and the bounded open convex product neighborhood $U=\prod_iU_i$ of (A3) and (D3)(i), on which the common support and the score bounds hold, and a reference parameter $\bar\bbtheta\in U$. Writing $p_{\bbtheta,t}$ for this prefix density, the common support and the score bound give $p_{\bbtheta,t}\leq\exp\bigl(G(t+1)\sum_{j=1}^n\norm{\bbtheta_j-\bar\bbtheta_j}\bigr)p_{\bar\bbtheta,t}\leq C_{n,t}\,p_{\bar\bbtheta,t}$ with $C_{n,t}<\infty$ uniform over $\bbtheta\in U$; for fixed $t$ this is an integrable dominating density (it may grow exponentially in $t$ and is used only for the fixed-horizon integral). Its first parameter-block derivative is bounded in norm by $(t+1)G\,p_{\bbtheta,t}$ and its $(i,j)$ second derivative by $[(t+1)^2G^2+\bbone_{\{i=j\}}(t+1)G']\,p_{\bbtheta,t}$, so differentiation under each finite-prefix integral is justified and gives, with $\bbpsi_j^{(\tau)}:=\nabla_{\bbtheta_j}\log\pi_j(\bba_j^{(\tau)}\mid\bbs^{(\tau)}_{\ccalN_j^{\kappa_\pi}})$,
	\feq{
		&\partial^2_{\bbtheta_i\bbtheta_j}\mbE_\bbtheta[R_t]
		=\mbE_\bbtheta\Bigl[R_t\Bigl(\sum_{\tau\leq t}\bbpsi_i^{(\tau)}\Bigr)\Bigl(\sum_{\tau\leq t}\bbpsi_j^{(\tau)}\Bigr)^{\!\top}\Bigr]\notag\\
		&\qquad+\bbone_{\{i=j\}}\mbE_\bbtheta\Bigl[R_t\sum_{\tau\leq t}\partial_{\bbtheta_i}\bbpsi_i^{(\tau)}\Bigr],
	}
	whose operator norm is at most $n\bbarr\,[(t+1)^2G^2+\bbone_{\{i=j\}}(t+1)G']$. The first-derivative blocks are likewise bounded by $n\bbarr(t+1)G$. Since $\sum_{t\geq0}\gamma^t(t+1)^2=(1+\gamma)/(1-\gamma)^3$ and $\sum_{t\geq0}\gamma^t(t+1)=1/(1-\gamma)^2$, the series of values, first derivatives, and second derivatives converge uniformly on $U$, so termwise differentiation gives $J\in C^2(U)$, in particular on $\Theta_0$, with
	\feq{
		\nabla^2 J(\bbtheta)\in\mbR^{nd_\theta\times nd_\theta},\qquad
		[\nabla^2 J]_{ij}:=\partial^2_{\bbtheta_i\bbtheta_j}J\in\mbR^{d_\theta\times d_\theta}.
	}
	This establishes regularity for each fixed $n$; the bound just obtained is of order $n$ and is not the $n$-uniform bound sought, which is the role of (D3)(ii). We note why termwise bounds cannot do better: differentiating the per-agent policy-gradient identity~\eqref{eq:pg_theorem_agent} in $\bbtheta_j$ produces a value-sensitivity term $\mbE_{\nu^\pi}[\sum_\ell\bbpsi_i(\nabla_{\bbtheta_j}Q_\ell^\pi)^\top]$, an occupancy-sensitivity term, and (for $j=i$) a score-derivative term $\mbE_{\nu^\pi}[\sum_\ell Q_\ell^\pi\,\partial_{\bbtheta_i}\bbpsi_i]$; the last contains the sum of all $n$ local values and can be of order $n$; separate bounds on these terms can obscure cancellations and therefore need not recover decay of the full Hessian block. (For instance, if all local rewards are identically one, $J$ is constant and its Hessian vanishes, while for a nondegenerate parameterized policy the score-derivative term alone can have size proportional to $n$.) This is why (D3)(ii) is imposed on the full block.

	\subsection*{Step 1: Block row sums under (D3)(ii)--(iii)}

	Assumption (D3)(ii) postulates
	\feq{
		\norm{[\nabla^2 J(\bbtheta)]_{ij}}\leq c_\partial\,\rho^{\,d(i,j)},\label{eq:cross_block_decay}
	}
	uniformly in $\bbtheta\in\Theta_0$. The number of agents at exact graph distance $r$ from a fixed agent is at most $\Delta(\Delta-1)^{r-1}$ for $r\geq 1$ (and $1$ for $r=0$) on a graph of maximum degree $\Delta$. Hence, for any fixed agent $i$,
	\feq{
		\sum_{j=1}^n\norm{[\nabla^2 J]_{ij}}
		&\leq c_\partial\sum_{j=1}^n\rho^{d(i,j)}\notag\\
		&\leq c_\partial\Bigl(1+\sum_{r\geq 1}\Delta(\Delta-1)^{r-1}\rho^{r}\Bigr)\notag\\
		&=c_\partial\Bigl(1+\frac{\Delta}{\Delta-1}\cdot\frac{\tilde\delta}{1-\tilde\delta}\Bigr).
	}
	The right-hand side is finite under the standing graph-tail hypothesis $\tilde\delta=(\Delta-1)\rho<1$ of (D3)(iii) and is independent of $n$, because (D3)(ii)--(iii) make $c_\partial$, $\rho$, and the degree bound $\Delta$ uniform over the network family.

	\subsection*{Step 2: Conclusion}

	The Hessian is symmetric, and a symmetric block matrix has operator norm bounded by its maximum block-row sum (block Gershgorin/Schur bound: for symmetric $\bbA$, $\norm{\bbA}\leq\sqrt{\max_i\sum_j\norm{\bbA_{ij}}\cdot\max_j\sum_i\norm{\bbA_{ij}}}=\max_i\sum_j\norm{\bbA_{ij}}$). Therefore
	\feq{
		L_J:=\sup_{\bbtheta\in\Theta_0}\norm{\nabla^2 J(\bbtheta)}_{\rm op}
		\leq c_\partial\Bigl(1+\frac{\Delta}{\Delta-1}\cdot\frac{\tilde\delta}{1-\tilde\delta}\Bigr)=\ccalO(1),
	}
	independent of $n$; equivalently $L_J\leq c_\partial(1+\Delta\rho/(1-(\Delta-1)\rho))$. Because $\Theta_0$ is convex, integrating the Hessian along the segment joining any two parameters in $\Theta_0$ shows that this bound is also a Lipschitz constant for $\nabla J$ on $\Theta_0$, which is the $n$-uniform constant invoked in (D2) and used in Corollary~\ref{cor:opt_lr} and Theorem~\ref{thm:sample_complexity_fixed}. Assumption (B1) controls only $Q_\ell^\pi$ itself and would not, on its own, yield the geometric decay of the cross-parameter Hessian block; this is precisely why (D3)(ii) is stated as a separate primitive.

	\section{Continuity of the population TD matrix and measurability of $E_M$}\label{app:M_continuity}

	\begin{lemma}\label{lem:M_continuity}
		Under Set~A, for each fixed feature realization the map $\bbtheta\mapsto\bbM_i^{\bbtheta}$ is continuous on $\Theta_0$ for every $i$; consequently $\bbtheta\mapsto\sigma_{\min}(\bbM_i^{\bbtheta})$ is continuous, the infimum in~\eqref{eq:td_stability} over the compact $\Theta_0$ is attained, and the event $E_M$ is measurable with respect to the preprocessing $\sigma$-algebra.
	\end{lemma}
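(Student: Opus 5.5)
Since the features are fixed (deterministic on the preprocessing event) and bounded by $L$, each entry of $\bbM_i^{\bbtheta}$ is an expectation of a bounded measurable function $F(\bbs,\bba,\bbs',\bba')$ under the joint one-step law $d^{\pi_\bbtheta}\otimes\pi_\bbtheta\otimes\mbP\otimes\pi_\bbtheta$. We therefore plan to prove continuity of this law in total variation, as a function of $\bbtheta$, and then read off continuity of the entries.

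\emph{Step 1: continuity of the policy kernel.} Fix $\bbtheta\to\bbtheta^\ast$ in $\Theta_0$. On the convex set $U=\prod_iU_i$, the score bound of (A3) and the common support give, outside the common null set,
\[
\bigl|\log\pi(\bba\mid\bbs;\bbtheta)-\log\pi(\bba\mid\bbs;\bbtheta^\ast)\bigr|\leq G\sum_j\norm{\bbtheta_j-\bbtheta_j^\ast}=:\delta.
\]
This follows from the mean value theorem along the segment joining the two parameters. Consequently
\[
\sup_\bbs\norm{\pi_\bbtheta(\cdot\mid\bbs)-\pi_{\bbtheta^\ast}(\cdot\mid\bbs)}_{\rm TV}\leq e^{\delta}-1\to0 .
\]

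\emph{Step 2: continuity of the occupancy.} We use the same prefix-density comparison as in Appendix~\ref{app:LJ_proof}, Step~0, for a fixed number of agents $n$. The law of the prefix up to time $t$ has density ratio within $[e^{-(t+1)\delta},e^{(t+1)\delta}]$, so
\[
\norm{\mbP_t^\bbtheta-\mbP_t^{\bbtheta^\ast}}_{\rm TV}\leq e^{(t+1)\delta}-1 .
\]
Truncate the occupancy series at a horizon $T$, chosen so that the tail $2\gamma^T$ is small, and let $\delta\to0$ for each fixed $t\leq T$. This gives
\[
\norm{d^{\pi_\bbtheta}-d^{\pi_{\bbtheta^\ast}}}_{\rm TV}\to0 .
\]
This step is the one to handle carefully: the density ratio grows exponentially in $t$, so the truncation argument, rather than a single uniform ratio bound, is essential.

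\emph{Step 3: assembling the joint law.} Write the joint law as a composition of kernels: $d^{\pi_\bbtheta}$, then $\pi_\bbtheta$, then the $\bbtheta$-free transition $\mbP$, then $\pi_\bbtheta$. A telescoping bound, using that kernels are TV-nonexpansive, controls the total-variation distance between the two joint laws by
\[
\norm{d^{\pi_\bbtheta}-d^{\pi_{\bbtheta^\ast}}}_{\rm TV}+2\sup_\bbs\norm{\pi_\bbtheta(\cdot\mid\bbs)-\pi_{\bbtheta^\ast}(\cdot\mid\bbs)}_{\rm TV}\to0 .
\]
Since $|F|\leq L^2$ entrywise, continuity of $\bbH_i^\bbtheta$, $\bbC_i^\bbtheta$, and hence $\bbM_i^\bbtheta$ follows.

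\emph{Step 4: consequences.} The map $\bbA\mapsto\sigma_{\min}(\bbA)$ is $1$-Lipschitz by Weyl's inequality, so $\bbtheta\mapsto\sigma_{\min}(\bbM_i^\bbtheta)$ is continuous on the compact $\Theta_0$ and attains its infimum. That infimum equals the infimum over a countable dense subset $\Theta_0^{\mathbb Q}$.

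For measurability, first note that for each fixed $\bbtheta$ the matrix $\bbM_i^\bbtheta$ is a measurable function of the feature draw $(\bbomega,b)$. This holds by Fubini, because the integrand is jointly measurable in the draw and the sample and is bounded; the normalization $\bbarg_\alpha$ is deterministic.

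Hence
\[
E_M=\bigcap_i\bigcap_{\bbtheta\in\Theta_0^{\mathbb Q}}\{\sigma_{\min}(\bbM_i^\bbtheta)\geq\mu_M\}
\]
is a countable intersection of events in $\ccalF_{\rm RFF}$, and is therefore measurable.
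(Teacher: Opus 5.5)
Your proof is correct and follows the paper's route. The score bound gives a uniform total-variation bound on the policy kernel, which is propagated to the occupancy and then to the one-step law of $(\bbZ,\bbZ')$; boundedness of the integrands, Weyl's inequality, and a countable dense subset of $\Theta_0$ then give continuity, attainment, and measurability. Your Fubini justification of measurability for fixed $\bbtheta$ spells out a step the paper only asserts.

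The one difference is the occupancy step. The paper compares the two chains one transition at a time through the $\bbtheta$-free transition kernel. The discrepancy then grows only linearly, $\mathrm{TV}(\mbP_t^{\bbtheta},\mbP_t^{\bbtheta'})\leq t\zeta$, and sums to the explicit modulus $\zeta/(1-\gamma)$. Your prefix-density ratio grows like $e^{(t+1)\delta}$ and forces a truncation. That is valid for qualitative continuity, but the truncation is not essential: the kernel-nonexpansiveness you already use in Step~3, applied one time step at a time, gives the linear bound and a locally Lipschitz estimate for $\bbtheta\mapsto\bbM_i^{\bbtheta}$.
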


	\begin{proof}
		Fix the feature realization and $i$, and write $\mathrm{TV}(P,Q):=\sup_A\abs{P(A)-Q(A)}$. By (A3) the score is bounded by $G$ on the convex parameter segment (A4). Integrating along the segment gives, for $\bbtheta,\bbtheta'\in\Theta_0$, every $\bbs$, and a.e.\ $\bba_i$ on the common support $\ccalA_i^{\rm supp}$ of (A3), $\abs{\log\pi_i(\bba_i\mid\cdot;\bbtheta_i)-\log\pi_i(\bba_i\mid\cdot;\bbtheta_i')}\leq G\norm{\bbtheta_i-\bbtheta_i'}$, whence $\int\abs{\pi_i(\cdot;\bbtheta_i)-\pi_i(\cdot;\bbtheta_i')}\,d\bba_i\leq e^{G\norm{\bbtheta_i-\bbtheta_i'}}-1$, and by subadditivity of total variation over the product policy, $\mathrm{TV}(\pi_{\bbtheta}(\cdot\mid\bbs),\pi_{\bbtheta'}(\cdot\mid\bbs))\leq\sum_j(e^{G\norm{\bbtheta_j-\bbtheta_j'}}-1)\leq n\,(e^{G\norm{\bbtheta-\bbtheta'}}-1)=:\zeta$. The transition kernel does not depend on $\bbtheta$, so a coupling of the two chains started from the same $\mu_0$ gives $\mathrm{TV}(\mbP_t^{\bbtheta},\mbP_t^{\bbtheta'})\leq t\,\zeta$ for the time-$t$ state laws; hence $\mathrm{TV}(\nu^{\pi_\bbtheta},\nu^{\pi_{\bbtheta'}})\leq(1-\gamma)\sum_{t\geq0}\gamma^t\,(t+1)\,\zeta=\zeta/(1-\gamma)$, and the law generating $(\bbZ,\bbZ')$ (one further policy factor for $\bba'$) satisfies the same bound with $\zeta/(1-\gamma)$ replaced by $\zeta/(1-\gamma)+\zeta\leq 2\zeta/(1-\gamma)$. Each entry of $\bbM_i^{\bbtheta}$ is an expectation of a function bounded by $(1+\gamma)L^2$ under this law, so
		\feq{
			\norm{\bbM_i^{\bbtheta}-\bbM_i^{\bbtheta'}}&\leq\norm{\bbM_i^{\bbtheta}-\bbM_i^{\bbtheta'}}_F\notag\\
			&\leq(m+1)\cdot 2(1+\gamma)L^2\cdot\frac{2\zeta}{1-\gamma}\xrightarrow[\bbtheta'\to\bbtheta]{}0,
		}
		proving continuity; the same argument (with the reward factor bounded by $\bbarr$) covers $\bbb_i^{\bbtheta}$. Since $\sigma_{\min}$ is $1$-Lipschitz in the operator norm and $\Theta_0$ is compact, $\inf_{\bbtheta\in\Theta_0}\sigma_{\min}(\bbM_i^{\bbtheta})$ is attained and equals the infimum over any countable dense subset $\Theta_0^{\rm d}\subset\Theta_0$; as each $\omega\mapsto\bbM_i^{\bbtheta}(\omega)$ is measurable in the feature randomness for fixed $\bbtheta$, the event $E_M=\bigcap_{i}\bigcap_{\bbtheta\in\Theta_0^{\rm d}}\{\sigma_{\min}(\bbM_i^{\bbtheta})\geq\mu_M(m)\}$ is measurable. The same continuity in $\bbtheta$ and measurability in the feature realization imply joint measurability on the separable parameter space; consequently $\bbM_i^{\bbtheta^{(k)}}$ and its smallest singular value are $\ccalF_k^-$-measurable whenever $\bbtheta^{(k)}$ is. The thresholded inverse/zero map of Definition~\ref{def:lstd} is measurable (inversion is continuous on the open acceptance set, the zero fallback is measurable on its complement, and Euclidean projection is continuous), which supports the induction defining the random iterates.
	\end{proof}

	\section{Proof of Theorem~\ref{thm:rff_approx} (RFF approximation)}\label{app:rff_proof}

	\begin{proof}
	Fix an agent $i$ and abbreviate $d:=d_{i,\kappa}\leq d_\ast:=D_\kappa d_S$, $g:=g_\alpha^{(i)}$, $p:=p_\alpha^{(i)}$, $\delta:=\delta_{\rm rff}$, and $\bbu(\bbZ):=\bbf_{i,\kappa}(\bbZ)/\sqrt{1-\alpha^2}$. Writing $A_\ell:=\bbomega_{i,\ell}^\top\bbu(\bbZ)+b_{i,\ell}$ and $B_\ell(\bby):=\sqrt{1-\alpha^2}\,\bbomega_{i,\ell}^\top\bby+b_{i,\ell}$, the raw feature inner product is
	\feq{
		\inner{\phi_{i,\kappa}^{\rm raw}(\bbZ)}{\mu_{i,\kappa}^{\rm raw}(\bby)}&=g(\bbZ)\,p(\bby)\,\hat k_m(\bbu(\bbZ),\sqrt{1-\alpha^2}\,\bby),\notag\\
		\hat k_m(\bbx,\bbx'')&:=\frac{2}{m}\sum_{\ell=1}^m\cos(\bbomega_\ell^\top\bbx+b_\ell)\\
		&\qquad\qquad\quad\times\cos(\bbomega_\ell^\top\bbx''+b_\ell),
	}
	with $\abs{\hat k_m}\leq 2$ pointwise. By the product-to-sum identity, $2\cos A\cos B=\cos(A-B)+\cos(A+B)$; the sum-frequency term carries the phase $2b_{i,\ell}$ and has mean zero, while the difference term has mean, by the Gaussian characteristic function of $\bbomega\sim\ccalN(\bbzero,\sigma^{-2}\bbI_d)$, equal to $k_0(\bbx-\bbx'')$. Hence $\mbE\,\hat k_m(\bbx,\bbx'')=k_0(\bbx-\bbx'')$ and, by the splitting identity~\eqref{eq:p_splitting_identity}, the estimator is pointwise unbiased for $\mbP_{i,\kappa}(\bby\mid\bbZ)$. The sum-frequency term is a nonzero random process, so $\hat k_m$ is not a function of $\bbx-\bbx''$ alone; the uniform bound below is therefore proved for the two-argument estimator directly.

	\emph{Step 1 (uniform bound on the compact core).} Set $t:=\epsilon_P/(2\tilde g_\alpha)$ and $R:=D_\ccalR^{\rm int}(\epsilon_P)$. By (A1), $\norm{\bbu(\bbZ)}\leq\sqrt{D_\kappa}B_f/\sqrt{1-\alpha^2}\leq R$ for every $\bbZ$, and $\norm{\sqrt{1-\alpha^2}\,\bby}\leq R_2(\epsilon_P)\leq R$ whenever $\norm{\bby}\leq R_2(\epsilon_P)$; both kernel arguments therefore lie in the closed ball $\ccalB:=\{\bbx\in\mbR^d:\norm{\bbx}\leq R\}$. We show that, with probability at least $1-\delta/n$ over $\{(\bbomega_{i,\ell},b_{i,\ell})\}_\ell$,
	\feq{
		\sup_{\bbx,\bbx''\in\ccalB}\abs{\hat k_m(\bbx,\bbx'')-k_0(\bbx-\bbx'')}\leq t.\label{eq:core_uniform}
	}

	\emph{(1a) Frequency-norm event.} Let $B_\delta:=2n\sqrt{d_\ast}/(\sigma\delta)$ and $F_i:=\{\tfrac1m\sum_\ell\norm{\bbomega_{i,\ell}}\leq B_\delta\}$. Since $\mbE\norm{\bbomega}\leq(\mbE\norm{\bbomega}^2)^{1/2}=\sqrt d/\sigma\leq\sqrt{d_\ast}/\sigma$, Markov's inequality gives $\mbP(F_i^c)\leq(\sqrt{d_\ast}/\sigma)/B_\delta=\delta/(2n)$.

	\emph{(1b) Lipschitz control.} On $F_i$, differentiating in $\bbx$ gives $\norm{\nabla_\bbx\hat k_m}\leq\tfrac2m\sum_\ell\norm{\bbomega_\ell}\leq2B_\delta$, and likewise in $\bbx''$. The Gaussian kernel satisfies $\norm{\nabla k_0(\bbDelta)}=(\norm{\bbDelta}/\sigma^2)e^{-\norm{\bbDelta}^2/(2\sigma^2)}\leq1/(\sigma\sqrt e)\leq1/\sigma$. Hence $f(\bbx,\bbx''):=\hat k_m(\bbx,\bbx'')-k_0(\bbx-\bbx'')$ is $K_\delta$-Lipschitz in each argument on $F_i$, with $K_\delta=2B_\delta+\sigma^{-1}$ as in the theorem statement.

	\emph{(1c) Net.} Let $a:=t/(4K_\delta)$ and let $\ccalN_a\subset\ccalB$ be an $a$-net; a volumetric bound gives $\abs{\ccalN_a}\leq(1+2R/a)^d=(1+8RK_\delta/t)^d$. The net is deterministic, since $R$ and $K_\delta$ are. For any $\bbx,\bbx''\in\ccalB$ choose net points $\tilde\bbx,\tilde\bbx''$ within distance $a$; on $F_i$, $\abs{f(\bbx,\bbx'')-f(\tilde\bbx,\tilde\bbx'')}\leq2K_\delta a=t/2$.

	\emph{(1d) Concentration at net pairs.} For fixed $(\tilde\bbx,\tilde\bbx'')$, $\hat k_m(\tilde\bbx,\tilde\bbx'')$ is an average of $m$ i.i.d.\ summands $2\cos(\bbomega_\ell^\top\tilde\bbx+b_\ell)\cos(\bbomega_\ell^\top\tilde\bbx''+b_\ell)\in[-2,2]$ with mean $k_0(\tilde\bbx-\tilde\bbx'')$, so Hoeffding's inequality gives $\mbP(\abs{f(\tilde\bbx,\tilde\bbx'')}\geq t/2)\leq2\exp(-2m(t/2)^2/4^2)=2\exp(-mt^2/32)$. A union bound over the at most $\abs{\ccalN_a}^2$ net pairs bounds the probability that some net pair deviates by $t/2$ by $2(1+8RK_\delta/t)^{2d}\exp(-mt^2/32)$; this is at most $\delta/(2n)$ provided
	\feq{
		m\geq\frac{32}{t^2}\Bigl[\log\frac{4n}{\delta}+2d\log\Bigl(1+\frac{8RK_\delta}{t}\Bigr)\Bigr].\label{eq:m_condition_raw}
	}
	With $t=\epsilon_P/(2\tilde g_\alpha)$ we have $32/t^2=128\tilde g_\alpha^2/\epsilon_P^2$ and $8RK_\delta/t=16RK_\delta\tilde g_\alpha/\epsilon_P$; since $4n/\delta\geq1$ and $1+X\geq1$, $\log(4n/\delta)+2d\log(1+X)\leq(2d+1)\log(4n(1+X)/\delta)$, and the right-hand side of~\eqref{eq:m_condition_raw} is nondecreasing in $d$. Thus~\eqref{eq:rff_feasibility}, evaluated at the worst case $d=d_\ast$ with $C_{\rm RFF}(d_\ast)=128(2d_\ast+1)$, implies~\eqref{eq:m_condition_raw} for every agent. On the intersection of $F_i$ and the complement of the net-deviation event, which has probability at least $1-\delta/n$ by a union bound (no independence between the two events is used), every pair $(\bbx,\bbx'')\in\ccalB^2$ satisfies $\abs{f(\bbx,\bbx'')}\leq t/2+t/2=t$, which is~\eqref{eq:core_uniform}. Call this event $E_{\rm rff}^{(i)}$.

	\emph{Step 2 (weighted core error).} On $E_{\rm rff}^{(i)}$, for every $\bbZ$ (since $\bbu(\bbZ)\in\ccalB$ always, and $\sqrt{1-\alpha^2}\,\bby\in\ccalB$ for $\norm{\bby}\leq R_2$),
	\feq{
		\int_{\norm{\bby}\leq R_2}g\,p\,\abs{k_0-\hat k_m}\,d\bby&\leq t\,\bbarg_\alpha(\kappa)\Bigl(\frac{2\pi\sigma^2}{\alpha^2}\Bigr)^{d/2}\notag\\
		&\leq t\,\tilde g_\alpha\;\leq\;\frac{\epsilon_P}{2},
	}
	using $\int p_\alpha^{(i)}=(2\pi\sigma^2/\alpha^2)^{d/2}$, the derived bound~\eqref{eq:g_alpha_bound}, and $\bbarg_\alpha(2\pi\sigma^2/\alpha^2)^{d/2}\leq\tilde g_\alpha$ (both cases of the maximum in the definition of $\tilde g_\alpha$, since $d\leq D_\kappa d_S$).

	\emph{Step 3 (true-kernel tail).} Under $\mbP_{i,\kappa}(\cdot\mid\bbZ)$ the successor is $\bby=\bbf_{i,\kappa}(\bbZ)+\bbvarepsilon$ with $\bbvarepsilon\sim\ccalN(\bbzero,\sigma^2\bbI_{d})$. For $\bbg\sim\ccalN(\bbzero,\bbI_d)$ and $u>0$, the $\chi^2$ tail bound $\mbP(\norm{\bbg}^2>d+2\sqrt{du}+2u)\leq e^{-u}$ of Laurent and Massart~\cite{laurent2000adaptive} (it follows from $\log\mbE e^{t(\norm{\bbg}^2-d)}=-td-\tfrac d2\log(1-2t)\leq dt^2/(1-2t)$ for $0<t<1/2$ and the Chernoff bound at $t=\sqrt u/(\sqrt d+2\sqrt u)$), together with $(\sqrt d+\sqrt{2u})^2\geq d+2\sqrt{du}+2u$, gives $\mbP(\norm{\bbg}>\sqrt d+\sqrt{2u})\leq e^{-u}$. Since $\alpha\in(0,1)$ and $d\leq D_\kappa d_S$, $R_2(\epsilon_P)\geq\sqrt{D_\kappa}B_f+\sigma(\sqrt d+\sqrt{2\log(8\tilde g_\alpha/\epsilon_P)})$, so $\int_{\norm{\bby}>R_2}\mbP_{i,\kappa}(\bby\mid\bbZ)\,d\bby\leq\epsilon_P/(8\tilde g_\alpha)\leq\epsilon_P/8$, using $\tilde g_\alpha\geq 1$.

	\emph{Step 4 (signed-approximant tail).} Pointwise, $\abs{\inner{\phi^{\rm raw}}{\mu^{\rm raw}}}\leq 2\,g(\bbZ)\,p(\bby)\leq 2\,\bbarg_\alpha\,p_\alpha^{(i)}(\bby)$: the approximant is signed and need not integrate to one, but it is dominated by twice the Gaussian envelope. As $p_\alpha^{(i)}$ is the unnormalized density of $\ccalN(\bbzero,(\sigma^2/\alpha^2)\bbI_{d})$ with total mass $(2\pi\sigma^2/\alpha^2)^{d/2}$, and $R_2(\epsilon_P)\geq(\sigma/\alpha)(\sqrt d+\sqrt{2\log(8\tilde g_\alpha/\epsilon_P)})$, writing $\bbxi\sim\ccalN(\bbzero,(\sigma^2/\alpha^2)\bbI_d)$ and using the same Gaussian tail,
	\feq{
		\int_{\norm{\bby}>R_2}\abs{\inner{\phi^{\rm raw}}{\mu^{\rm raw}}}\,d\bby&\leq 2\,\bbarg_\alpha\Bigl(\frac{2\pi\sigma^2}{\alpha^2}\Bigr)^{d/2}\mbP_{\bbxi}\bigl(\norm{\bbxi}>R_2\bigr)\notag\\
		&\leq 2\,\tilde g_\alpha\cdot\frac{\epsilon_P}{8\,\tilde g_\alpha}=\frac{\epsilon_P}{4}.
	}
	Combining Steps 2--4 by the triangle inequality on $\{\norm{\bby}\leq R_2\}$ and $\{\norm{\bby}>R_2\}$, for every $\bbZ$ on $E_{\rm rff}^{(i)}$,
	$\norm{\mbP_{i,\kappa}(\cdot\mid\bbZ)-\inner{\phi_{i,\kappa}^{\rm raw}(\bbZ)}{\mu_{i,\kappa}^{\rm raw}(\cdot)}}_{\ccalL^1}\leq\tfrac{\epsilon_P}{2}+\tfrac{\epsilon_P}{8}+\tfrac{\epsilon_P}{4}<\epsilon_P$.

	\emph{Step 5 (agents, heterogeneous dimensions, normalization).} A union bound over $i\in[n]$ at confidence $\delta/n$ each gives $\mbP(\bigcap_i E_{\rm rff}^{(i)})\geq 1-\delta$. The normalization of Definition~\ref{def:rff} preserves the inner product ($\hhatphi=\phi^{\rm raw}/\bbarg_\alpha$, $\hhatmu=\mu^{\rm raw}\bbarg_\alpha$), so the same bound holds for $\inner{\hhatphi_{i,\kappa}}{\hhatmu_{i,\kappa}}$; hence $\bigcap_iE_{\rm rff}^{(i)}\subseteq E_{\rm rff}$ and $\mbP(E_{\rm rff})\geq1-\delta$. The event $E_{\rm rff}$ is measurable: for each fixed feature realization the displayed $\ccalL^1$ error is continuous in $\bbZ$ (the drift and the splitting factor are continuous, and the integrand is dominated by the integrable envelope $3\bbarg_\alpha p_\alpha^{(i)}$), the input space $\ccalS_{\ccalN_i^{\kappa+1}}\times\ccalA_{\ccalN_i^{\kappa+1}}$ is separable, so the supremum equals a supremum over a fixed countable dense subset, and each such error is measurable in the feature draw. Finally, the right-hand side of~\eqref{eq:rff_feasibility} is strictly decreasing in $\epsilon_P$ because $\tilde g_\alpha^2/\epsilon_P^2$, $D_\ccalR^{\rm int}(\epsilon_P)$ (through $R_2$), and $\tilde g_\alpha/\epsilon_P$ are all decreasing in $\epsilon_P$; and solving~\eqref{eq:rff_feasibility} for $\epsilon_P$ at given $m$ yields $\epsilon_P(m)^2=\widetilde\ccalO(C_{\rm RFF}(d_\ast)\tilde g_\alpha^2/m)$, with a polylogarithmic factor in $m$, $n/\delta$, and the primitives.
	\end{proof}

	\section{Proof of Lemma~\ref{lem:V_decay} (value decay)}\label{app:V_decay}

	\begin{proof}
	Let $\bbs,\bbs'$ agree on $\ccalN_i^\kappa$. Couple two action draws $\bba\sim\pi(\cdot\mid\bbs)$, $\bba'\sim\pi(\cdot\mid\bbs')$ so that $\bba_j=\bba_j'$ for every $j$ with $\ccalN_j^{\kappa_\pi}\subseteq\ccalN_i^\kappa$; this is possible because $\pi$ factorizes and each $\pi_j$ depends only on $\bbs_{\ccalN_j^{\kappa_\pi}}$, which coincide for such $j$. The set of agents $j$ with $\ccalN_j^{\kappa_\pi}\subseteq\ccalN_i^\kappa$ contains $\ccalN_i^{\kappa-\kappa_\pi}$. Hence $(\bbs,\bba)$ and $(\bbs',\bba')$ agree on $\ccalN_i^{\kappa-\kappa_\pi}$ at the state-action level. By the $V$-to-$Q$ relation $V_i^\pi(\bbs)=\mbE_{\bba\sim\pi(\cdot\mid\bbs)}[Q_i^\pi(\bbs,\bba)]$ and (B1) applied at radius $\kappa-\kappa_\pi\geq1$,
	\feq{
		\abs{V_i^\pi(\bbs)-V_i^\pi(\bbs')}\leq c\rho^{(\kappa-\kappa_\pi)+1}=c\rho^{-\kappa_\pi}\rho^{\kappa+1}.
	}
	Thus any $\tilde c\geq c\rho^{-\kappa_\pi}$ is admissible; we fix $\tilde c:=2c\rho^{-\kappa_\pi}$ throughout.
	\end{proof}

	\section{Proof of Theorem~\ref{thm:q_linear} (linear representation)}\label{app:q_linear_proof}

	\begin{proof}
	\emph{Step 1: $\ccalL^1$-mass of $\hhatmu_{i,\kappa}$ and the coefficient bound $U_1$.} By Definition~\ref{def:rff}, $\hhatmu_{i,\kappa}(\bby)=\bbarg_\alpha\,p_\alpha^{(i)}(\bby)\sqrt{2/m}\,\{\cos(\sqrt{1-\alpha^2}\,\bbomega_{i,\ell}^\top\bby+b_{i,\ell})\}_{\ell=1}^m$; since each of the $m$ cosines lies in $[-1,1]$, the cosine vector has Euclidean norm at most $\sqrt m$, so $\norm{\hhatmu_{i,\kappa}(\bby)}\leq\sqrt2\,\bbarg_\alpha\,p_\alpha^{(i)}(\bby)$ pointwise. The damping factor $p_\alpha^{(i)}(\bby)=\exp(-\alpha^2\norm{\bby}^2/(2\sigma^2))$ is Lebesgue-integrable, with $\int_{\mbR^{d_{i,\kappa}}}p_\alpha^{(i)}(\bby)\,d\bby=(2\pi\sigma^2/\alpha^2)^{d_{i,\kappa}/2}$. Hence, using $d_{i,\kappa}\leq D_\kappa d_S$ and the definition of $\tilde g_\alpha$,
	\feq{
		\int\norm{\hhatmu_{i,\kappa}(\bby)}\,d\bby\leq\sqrt2\,\bbarg_\alpha\,(2\pi\sigma^2/\alpha^2)^{d_{i,\kappa}/2}\leq\sqrt2\,\tilde g_\alpha.\label{eq:mu_L1_mass}
	}
	Since $\abs{h_i^\pi}\leq Q_{\max}=\bbarr/(1-\gamma)$, the RFF block of $\bbu_{i,\kappa}^\pi$ is a well-defined Bochner integral of norm at most $\sqrt2\gamma\tilde g_\alpha\bbarr/(1-\gamma)$, and $\norm{\bbu_{i,\kappa}^\pi}\leq1+\sqrt2\gamma\tilde g_\alpha\bbarr/(1-\gamma)=U_1$ by $\norm{(1,\bbv)}\leq1+\norm{\bbv}$.

	\emph{Step 2: replacing the full continuation by the local continuation.} Write $I=\ccalN_i^\kappa$. The genuine Bellman equation reads $Q_i^\pi(\bbs,\bba)=r_i(\bbs_{\ccalN_i},\bba_{\ccalN_i})+\gamma\int V_i^\pi(\bbs')\,\mbP(d\bbs'\mid\bbs,\bba)$. For every full successor state $\bbs'$, the states $\bbs'$ and $(\bbs'_I,\bbs^\circ_{I^c})$ agree on $\ccalN_i^\kappa$, so Lemma~\ref{lem:V_decay} (which requires $\kappa\geq\kappa_\pi+1$) gives $\abs{V_i^\pi(\bbs')-h_i^\pi(\bbs'_I)}\leq\tilde c\rho^{\kappa+1}$. Moreover, the marginal law of $\bbs'_I$ under $\mbP(\cdot\mid\bbs,\bba)$ is $\prod_{j\in I}\mbP_j(\cdot\mid\bbs_{\ccalN_j},\bba_{\ccalN_j})=\mbP_{i,\kappa}(\cdot\mid\bbZ)$, because $\ccalN_j\subseteq\ccalN_i^{\kappa+1}$ for $j\in I$. Hence
	\feq{
		Q_i^\pi(\bbs,\bba)&=r_i(\bbs_{\ccalN_i},\bba_{\ccalN_i})+\gamma\!\int\!h_i^\pi(\bby)\,\mbP_{i,\kappa}(d\bby\mid\bbZ)\\
		&\quad+e_{{\rm loc},i}(\bbs,\bba),\label{eq:local_bellman}
	}
	with $\abs{e_{{\rm loc},i}}\leq\gamma\tilde c\rho^{\kappa+1}=B_{\rm loc}(\kappa)$ for every $(\bbs,\bba)$. The reference completion is used only inside $h_i^\pi$; the transition law is the true one.

	\emph{Step 3: finite-feature approximation of the local continuation.} On $E_{\rm rff}$, for every $\bbZ$,
	\feq{
		&\Bigl|\gamma\!\int\!h_i^\pi\,d\mbP_{i,\kappa}(\cdot\mid\bbZ)-\gamma\!\int\!h_i^\pi\inner{\hhatphi_{i,\kappa}(\bbZ)}{\hhatmu_{i,\kappa}}\,d\bby\Bigr|\notag\\
		&\quad\leq\gamma\norm{h_i^\pi}_\infty\,\norm{\mbP_{i,\kappa}(\cdot\mid\bbZ)-\inner{\hhatphi_{i,\kappa}(\bbZ)}{\hhatmu_{i,\kappa}(\cdot)}}_{\ccalL^1}\notag\\
		&\quad\leq\frac{\gamma\bbarr\epsilon_P}{1-\gamma}=B_{\rm rff}(m).
	}
	By~\eqref{eq:mu_L1_mass} and Fubini, $\gamma\int h_i^\pi(\bby)\inner{\hhatphi_{i,\kappa}(\bbZ)}{\hhatmu_{i,\kappa}(\bby)}\,d\bby=\bigl\langle\hhatphi_{i,\kappa}(\bbZ),\gamma\!\int\!\hhatmu_{i,\kappa}(\bby)h_i^\pi(\bby)\,d\bby\bigr\rangle$, which is exactly the inner product of the RFF block of $\tdbbphi_{i,\kappa}^{\rm raw}(\bbZ)$ with the RFF block of $\bbu_{i,\kappa}^\pi$; the discount factor $\gamma$ is carried inside $\bbu_{i,\kappa}^\pi$ and appears once. The first coordinates contribute $r_i(\bbs_{\ccalN_i},\bba_{\ccalN_i})\cdot1$. Hence $\inner{\tdbbphi_{i,\kappa}^{\rm raw}(\bbZ)}{\bbu_{i,\kappa}^\pi}=r_i+\gamma\int h_i^\pi\inner{\hhatphi_{i,\kappa}(\bbZ)}{\hhatmu_{i,\kappa}}\,d\bby$, and combining with~\eqref{eq:local_bellman},
	\feq{
		Q_i^\pi(\bbs,\bba)&=\inner{\tdbbphi_{i,\kappa}^{\rm raw}(\bbZ)}{\bbu_{i,\kappa}^\pi}+e_i(\bbs,\bba),\notag\\
		\abs{e_i}&\leq B_{\rm loc}(\kappa)+B_{\rm rff}(m)=\Bk,
	}
	for every $(\bbs,\bba)$, which is~\eqref{eq:q_linear}--\eqref{eq:xi_bound}. The event $E_{\rm rff}$ does not depend on $\pi$; only $h_i^\pi$ and hence $\bbu_{i,\kappa}^\pi$ do.
	\end{proof}

	\section{Proof of Lemma~\ref{lem:pop_L2} (projected Bellman characterization)}\label{app:pop_L2}

	\begin{proof}
	Work on $E_M\cap E_{\rm rff}$, fix $i,\bbtheta$ and the feature realization, and write $\nu=\nu^\pi$, $\tdbbphi=\tdbbphi_{i,\kappa}^{\rm raw}$, $\bbM=\bbM_i^{\bbtheta}$, $\bbH=\bbH_i^{\bbtheta}$, $\bbb=\bbb_i^{\bbtheta}$. Let $\nu_+$ be the law of the successor pair $(\bbS',\bbA')$ under the one-step law of (D1); as in the proof of Lemma~\ref{lem:excitation}, the occupancy identity gives $\nu_+\leq\nu/\gamma$.

	\emph{(a) Contraction.} If $f=g$ $\nu$-a.e.\ then $f=g$ $\nu_+$-a.e.\ by domination, so $\ccalP^\pi f=\ccalP^\pi g$ $\nu$-a.e.; thus $\ccalP^\pi$ is well defined on $\ccalL^2(\nu)$ equivalence classes. By conditional Jensen, $(\ccalP^\pi f)^2\leq\ccalP^\pi(f^2)$ $\nu$-a.e., hence
	\feq{
		\norm{\gamma\ccalP^\pi f}_{\ccalL^2(\nu)}^2\leq\gamma^2\!\int\!\ccalP^\pi(f^2)\,d\nu=\gamma^2\!\int\!f^2\,d\nu_+\leq\gamma\!\int\!f^2\,d\nu,
	}
	where the middle equality is the tower property under the one-step law and the last step is domination. So $\norm{\gamma\ccalP^\pi}_{\ccalL^2(\nu)\to\ccalL^2(\nu)}\leq\sqrt\gamma<1$.

	\emph{(b) Normal equations.} For $q=\inner{\tdbbphi}{\bbw}\in\ccalV_i$ and $T^\pi q:=r_i+\gamma\ccalP^\pi q$, the residual $T^\pi q-q$ is orthogonal in $\ccalL^2(\nu)$ to every coordinate $\tdbbphi_j$ if and only if
	\feq{
		&\mbE_\nu\bigl[\tdbbphi\,(r_i+\gamma\,\mbE[\tdbbphi(\bbZ')^\top\bbw\mid\bbS,\bbA]-\tdbbphi^\top\bbw)\bigr]\notag\\
		&\qquad=\bbb+\gamma\bbC_i^{\bbtheta}\bbw-\bbH\bbw=\bbb-\bbM\bbw=\bbzero,
	}
	where the tower step $\mbE[\tdbbphi\,\mbE[\tdbbphi'^\top\mid\bbS,\bbA]]=\mbE[\tdbbphi\tdbbphi'^\top]$ is valid because $\tdbbphi(\bbZ)$ is measurable in $(\bbS,\bbA)$. Orthogonality of the residual to $\ccalV_i$ is precisely the statement $q=\Pi_iT^\pi q$. On $E_M$, $\bbM$ is invertible, so $\bbw_i^\ast=\bbM^{-1}\bbb$ is the unique coefficient solving the normal equations, and $q_i^\ast=\inner{\tdbbphi}{\bbw_i^\ast}$ is a fixed point of $\Pi_iT^\pi$. Since $\Pi_i$ is nonexpansive and $T^\pi$ is a $\sqrt\gamma$-contraction on $\ccalL^2(\nu)$ by (a), $\Pi_iT^\pi$ has a unique fixed point in $\ccalL^2(\nu)$. Finally, $\bbH\succ\bbzero$: if $\bbv^\top\bbH\bbv=0$ then $\bbv^\top\tdbbphi=0$ $\nu$-a.s., whence $\bbv^\top\bbM=\mbE[(\bbv^\top\tdbbphi)(\tdbbphi-\gamma\tdbbphi')^\top]=\bbzero$, contradicting invertibility; so $\bbw\mapsto\inner{\tdbbphi}{\bbw}$ is injective from $\mbR^{m+1}$ into $\ccalL^2(\nu)$ and the coefficient and function formulations agree.

	\emph{(c) Population prediction error.} Set $e:=Q_i^\pi-q_i^\ast$. Since $Q_i^\pi=T^\pi Q_i^\pi$ (the genuine Bellman equation, with $\ccalP^\pi Q_i^\pi(\bbs,\bba)=\mbE[V_i^\pi(\bbS')\mid\bbs,\bba]$) and $q_i^\ast=\Pi_iT^\pi q_i^\ast$,
	\feq{
		\Pi_ie=\Pi_iT^\pi Q_i^\pi-\Pi_iT^\pi q_i^\ast=\gamma\,\Pi_i\ccalP^\pi e.
	}
	As $q_i^\ast\in\ccalV_i$, $(I-\Pi_i)e=(I-\Pi_i)Q_i^\pi$, and orthogonality gives
	\feq{
		\norm{e}_{\ccalL^2(\nu)}^2&=\norm{(I-\Pi_i)Q_i^\pi}^2_{\ccalL^2(\nu)}+\norm{\Pi_ie}_{\ccalL^2(\nu)}^2\notag\\
		&\leq\dist_{\ccalL^2(\nu)}(Q_i^\pi,\ccalV_i)^2+\gamma\norm{e}_{\ccalL^2(\nu)}^2,
	}
	using $\norm{\Pi_i\gamma\ccalP^\pi e}\leq\norm{\gamma\ccalP^\pi e}\leq\sqrt\gamma\norm{e}$ from (a). Rearranging yields the first inequality of~\eqref{eq:pop_L2}. For the second, Theorem~\ref{thm:q_linear} gives $\abs{Q_i^\pi-\inner{\tdbbphi}{\bbu_{i,\kappa}^\pi}}\leq\Bk$ pointwise, so $\dist_{\ccalL^2(\nu)}(Q_i^\pi,\ccalV_i)\leq\norm{Q_i^\pi-\inner{\tdbbphi}{\bbu_{i,\kappa}^\pi}}_{\ccalL^2(\nu)}\leq\Bk$.

	\emph{(d) Coefficient and pointwise comparison.} Substitute the representation $Q_i^\pi=\inner{\tdbbphi}{\bbu_{i,\kappa}^\pi}+e_i$ of Theorem~\ref{thm:q_linear} into the Bellman equation $Q_i^\pi(\bbS,\bbA)=r_i+\gamma\,\mbE[Q_i^\pi(\bbS',\bbA')\mid\bbS,\bbA]$, multiply by $\tdbbphi(\bbZ)$, which is $(\bbS,\bbA)$-measurable, and take $\mbE_\nu$; the tower property gives
	\feq{
		\bbM\bbu_{i,\kappa}^\pi=\bbb+\mbE_\nu\bigl[\tdbbphi(\bbZ)\bigl(\gamma\,e_i(\bbS',\bbA')-e_i(\bbS,\bbA)\bigr)\bigr].
	}
	The last term has norm at most $(1+\gamma)L\Bk$ by Jensen for the Bochner integral and the pointwise bounds. Hence $\bbw_i^\ast-\bbu_{i,\kappa}^\pi=-\bbM^{-1}\mbE_\nu[\tdbbphi(\gamma e_i'-e_i)]$ has norm at most $(1+\gamma)L\Bk/\mu_M$, and $\abs{q_i^\ast-Q_i^\pi}\leq\abs{\inner{\tdbbphi}{\bbw_i^\ast-\bbu_{i,\kappa}^\pi}}+\abs{e_i}\leq L\cdot(1+\gamma)L\Bk/\mu_M+\Bk=(1+C')\Bk$ pointwise.

	\emph{(e) Radius.} $\norm{\bbb}=\norm{\mbE_\nu[\tdbbphi\,r_i]}\leq L\bbarr$ and $\norm{\bbM^{-1}}\leq1/\mu_M$ on $E_M$, so $\norm{\bbw_i^\ast}\leq L\bbarr/\mu_M$.
	\end{proof}

	\section{Proof of Theorem~\ref{thm:lstd_error} (LSTD concentration)}\label{app:lstd_concentration}

	\begin{proof}
	Work on $E_M\cap E_{\rm rff}$ and fix $(i,k)$; write $\bbtheta=\bbtheta^{(k)}$, $\bbM=\bbM_i^{\bbtheta}$, $\bbM^{(k)}=\bbM_i^{(k)}$, $\bbb^{(k)}=\bbb_i^{(k)}$, $\bbw^\ast=\bbw_i^\ast$, and $u:=\log(4(m+1)/\delta)$. Conditional on $\ccalF_k^-$, the $M_s$ transitions in $\ccalD_s^{(k)}$ are i.i.d.\ from the one-step law of (D1).

	\emph{Step 1: matrix-Bernstein for $\bbM^{(k)}-\bbM$.} Each summand $\bbY_j:=\tdbbphi^{\rm raw}(\bbz^{(j)})(\tdbbphi^{\rm raw}(\bbz^{(j)})-\gamma\tdbbphi^{\rm raw}(\bbz^{\prime(j)}))^\top$ is a (generally non-symmetric) $(m{+}1)\times(m{+}1)$ matrix bounded in operator norm by $B:=(1+\gamma)L^2\leq2L^2$, with conditional mean $\bbM$. The centered summands $\bbS_j:=\bbY_j-\bbM$ are independent conditional on $\ccalF_k^-$, satisfy $\mbE\bbS_j=\bbzero$, $\norm{\bbS_j}\leq2B$, and $\max\{\norm{\mbE\bbS_j\bbS_j^\top},\norm{\mbE\bbS_j^\top\bbS_j}\}\leq B^2$ (subtracting the positive semidefinite mean outer product from each second moment). The rectangular matrix-Bernstein inequality~\cite[Thm.~6.1.1]{tropp2015introduction}, applied through the Hermitian dilation $\mathcal{H}(\bbX):=\begin{psmallmatrix}\bbzero&\bbX\\\bbX^\top&\bbzero\end{psmallmatrix}$ of dimension $2(m+1)$, gives for every $s>0$
	\feq{
		\mbP\Bigl(\norm{\bbM^{(k)}-\bbM}\geq s\,\Big|\,\ccalF_k^-\Bigr)&\leq2(m+1)\\
		&\quad\times\exp\Bigl(-\frac{M_s s^2/2}{B^2+2Bs/3}\Bigr).
	}
	Hence, with probability $\geq1-\delta/2$,
	\feq{
		\norm{\bbM^{(k)}-\bbM}\leq r_\delta:=B\Bigl(\sqrt{\frac{2u}{M_s}}+\frac{4u}{3M_s}\Bigr).\label{eq:rM_explicit}
	}
	For $M_s\geq C_0L^4u/\mu_M^2$ with $C_0=512$ one has $2L^2\sqrt{2u/M_s}\leq\mu_M/8$ and, since $\mu_M\leq(1+\gamma)L^2/(m+1)\leq2L^2$ by Proposition~\ref{prop:mu_M_obstruction}, also $2L^2\cdot4u/(3M_s)\leq\mu_M/8$; thus $r_\delta\leq\mu_M/4$.

	\emph{Step 2: dimension-free vector concentration for the centered target.} Define the centered vector $\bbzeta_i^{(k)}:=\bbb^{(k)}-\bbM^{(k)}\bbw^\ast$, whose conditional mean is $\bbb_i^{\bbtheta}-\bbM\bbw^\ast=\bbzero$ by the definition of $\bbw^\ast$. Each i.i.d.\ summand $\bbY_j^{\rm vec}:=\tdbbphi^{\rm raw}(\bbz^{(j)})\bigl(r_i^{(j)}-(\tdbbphi^{\rm raw}(\bbz^{(j)})-\gamma\tdbbphi^{\rm raw}(\bbz^{\prime(j)}))^\top\bbw^\ast\bigr)$ of $M_s\bbzeta_i^{(k)}$ is a mean-zero random vector in the finite-dimensional Hilbert space $\mbR^{m+1}$, bounded in norm by $B_Y=L(\bbarr+(1+\gamma)LW^\ast)$ (using $\norm{\tdbbphi^{\rm raw}}\leq L$, $\abs{r_i}\leq\bbarr$, $\norm{\bbw^\ast}\leq W^\ast$). Conditionally on $\ccalF_k^-$ the centered summands are independent, hence martingale differences with the deterministic increment bound $B_Y/M_s$ for the normalized sum. The Hilbert-space Hoeffding-type inequality of Pinelis~\cite[Thm.~3.5]{pinelis1994optimum} (a Hilbert space is $(2,1)$-smooth) gives $\mbP(\norm{\bbzeta_i^{(k)}}\geq s\mid\ccalF_k^-)\leq2\exp(-s^2M_s/(2B_Y^2))$, so with probability $\geq 1-\delta/2$,
	\feq{
		\norm{\bbzeta_i^{(k)}}\leq B_Y\sqrt{\frac{2\log(4/\delta)}{M_s}}\leq C_vB_Y\sqrt{\frac{u}{M_s}},\qquad C_v=\sqrt2.
	}
	This bound carries no $\sqrt{m}$ or $\log(m+1)$ dimensional factor; the theorem writes $\log(4(m+1)/\delta)$ in place of $\log(4/\delta)$ only so that both terms share one logarithmic envelope.

	\emph{Step 3: assembling the LSTD error.} On the intersection of the two favorable events of Steps 1--2 (probability $\geq 1-\delta$), Lemma~\ref{lem:excitation}(a) and the reverse triangle inequality give, for every $\lambda\geq0$, $\sigma_{\min}(\bbM^{(k)}+\lambda\bbI)\geq\sigma_{\min}(\bbM+\lambda\bbI)-\norm{\bbM^{(k)}-\bbM}\geq\mu_M-\mu_M/4\geq\mu_M/2$, so the lower bound exceeds $\tau_{\rm sv}$ and the thresholded fallback of Definition~\ref{def:lstd} is not triggered. Writing $(\bbM^{(k)}+\lambda\bbI)(\hat\bbw_i^{(k),\rm raw}-\bbw^\ast)=\bbzeta_i^{(k)}-\lambda\bbw^\ast$ and bounding,
	\feq{
		\norm{\hat\bbw_i^{(k),\rm raw}-\bbw^\ast}
		&\leq\frac{2}{\mu_M}\bigl(\norm{\bbzeta_i^{(k)}}+\lambda\norm{\bbw^\ast}\bigr)\notag\\
		&\leq\frac{2C_v B_Y}{\mu_M}\sqrt{\frac{u}{M_s}}+\frac{2\lambda W^\ast}{\mu_M},
	}
	which is~\eqref{eq:lstd_error}. The uniform version follows by replacing $\delta$ with $\delta/(nK)$ and union-bounding over $i\in[n]$, $k<K$.
	\end{proof}

	\section{Proof of Theorem~\ref{thm:joint_optimal} (critic error)}\label{app:joint_optimal}

	\begin{proof}
	Fix $(i,k)$ and work on $E_M\cap E_{\rm rff}$; write $\pi=\pi_{\bbtheta^{(k)}}$, $\nu=\nu^\pi$.

	\emph{(a) Favorable event.} With $\lambda^\ast=0$, Theorem~\ref{thm:lstd_error} at confidence $\delta_\star$ requires $M_s\geq C_0L^4\ell_\star/\mu_M^2$ and gives, with probability $\geq1-\delta_\star$ over $\ccalD_s^{(k)}$,
	\feq{
		\norm{\hat\bbw_i^{(k)}-\bbw_i^\ast}\leq\norm{\hat\bbw_i^{(k),\rm raw}-\bbw_i^\ast}\leq\frac{2C_vB_Y}{\mu_M}\sqrt{\frac{\ell_\star}{M_s}},
	}
	where the first inequality is nonexpansiveness of the norm projection onto the ball of radius $W^\ast\geq\norm{\bbw_i^\ast}$ (Lemma~\ref{lem:bounded_target}). Since $\abs{\hat Q_i^{(k)}-q_i^\ast}=\abs{\inner{\tdbbphi_{i,\kappa}^{\rm raw}}{\hat\bbw_i^{(k)}-\bbw_i^\ast}}\leq L\norm{\hat\bbw_i^{(k)}-\bbw_i^\ast}$ pointwise, the second requirement in~\eqref{eq:Ms_choice}, $M_s\geq C_sL^2B_Y^2\ell_\star/(\mu_M^2\epsilon_Q^2)$ with $C_s\geq4C_v^2=8$, ensures $\sup\abs{\hat Q_i^{(k)}-q_i^\ast}\leq\epsilon_Q$; the choice $C_s=512\geq\max\{C_0,8\}$ covers both requirements. Then, by $(a+b)^2\leq2a^2+2b^2$ and Lemma~\ref{lem:pop_L2}(c),
	\feq{
		\norm{\hat Q_i^{(k)}-Q_i^\pi}_{\ccalL^2(\nu)}^2&\leq2\norm{\hat Q_i^{(k)}-q_i^\ast}_{\ccalL^2(\nu)}^2+2\norm{q_i^\ast-Q_i^\pi}_{\ccalL^2(\nu)}^2\notag\\
		&\leq2\epsilon_Q^2+\frac{2\Bk^2}{1-\gamma},
	}
	and, by Lemma~\ref{lem:pop_L2}(d), $\sup\abs{\hat Q_i^{(k)}-Q_i^\pi}\leq\epsilon_Q+(1+C')\Bk$. This is~\eqref{eq:joint_optimal_bound}.

	\emph{(b) Expectation over the critic sample.} Let $\ccalE$ denote the favorable event of (a), with $\mbP(\ccalE^c\mid\ccalF_k^-)\leq\delta_\star$. On $\ccalE^c$ the deterministic projection bound $\abs{\hat Q_i^{(k)}}\leq LW^\ast$ together with $\abs{Q_i^\pi}\leq Q_{\max}$ gives $\abs{\hat Q_i^{(k)}-Q_i^\pi}\leq A_Q$ pointwise, hence $\norm{\hat Q_i^{(k)}-Q_i^\pi}_{\ccalL^2(\nu)}^2\leq A_Q^2$. Therefore
	\feq{
		\mbE_{\rm alg}\norm{\hat Q_i^{(k)}-Q_i^\pi}_{\ccalL^2(\nu)}^2&\leq2\epsilon_Q^2+\frac{2\Bk^2}{1-\gamma}+\delta_\star A_Q^2\notag\\
		&\leq3\epsilon_Q^2+\frac{2\Bk^2}{1-\gamma},
	}
	using $\delta_\star A_Q^2\leq\epsilon_Q^2$. This is~\eqref{eq:critic_L2}. No confinement of the occupancy is used: the off-event contribution is controlled by the deterministic projection bound alone.

	\emph{Positive ridge.} If $\lambda\in(0,\mu_M\epsilon_Q/(4LW^\ast)]$ is used, the additional term $2\lambda W^\ast/\mu_M$ in~\eqref{eq:lstd_error}, multiplied by $L$, contributes at most $\epsilon_Q/2$ to $\sup\abs{\hat Q_i^{(k)}-q_i^\ast}$; the conclusions hold with $\epsilon_Q$ replaced by $\tfrac32\epsilon_Q$. The total unconditional failure probability of (a) is at most $\delta_M+\delta_{\rm rff}+\delta_\star$, and the uniform version replaces $\delta_\star$ by $\delta_\star/(nK)$.
	\end{proof}

	\section{Proof of Lemma~\ref{lem:grad_bias} (gradient bias)}\label{app:grad_bias}

	\begin{proof}
	Work conditional on $\ccalF_k^Q$, so the critics $\{\hat\bbw_\ell^{(k)}\}$ are fixed and $\ccalD_g^{(k)}$ is i.i.d.\ from $d^{\pi^{(k)}}\otimes\pi^{(k)}$; then take the expectation over the critic randomness. The bias of $\hat\bbg_i^{(k)}$ relative to $\nabla_{\bbtheta_i}J(\bbtheta^{(k)})$ decomposes into two sources: (i) the critic-approximation error within the $\kappa_c$-neighborhood, and (ii) the aggregation error from truncating the global value sum $\sum_\ell Q_\ell^\pi$ to the neighborhood sum $\sum_{\ell\in\ccalN_i^{\kappa_c}}\hat Q_\ell^{(k)}$.

	\emph{Critic-approximation route.} The per-agent policy-gradient identity~\eqref{eq:pg_theorem_agent} and the estimator~\eqref{eq:gradient_estimator} differ, on the $\kappa_c$-neighborhood, by the score-weighted critic error $\tfrac{1}{1-\gamma}\mbE_{\nu^\pi}[\sum_{\ell\in\ccalN_i^{\kappa_c}}(\hat Q_\ell^{(k)}-Q_\ell^\pi)\nabla_{\bbtheta_i}\log\pi_i]$. Using $\norm{\nabla_{\bbtheta_i}\log\pi_i}\leq G$, $\abs{\ccalN_i^{\kappa_c}}\leq D_{\kappa_c}$, Jensen ($\mbE\abs{\cdot}\leq(\mbE\abs{\cdot}^2)^{1/2}$ applied to $\mbE_{\rm alg}\mbE_{\nu^\pi}$), and the $\ccalL^2$-critic bound~\eqref{eq:critic_L2},
	\feq{
		\text{(i)}&\leq\frac{G}{1-\gamma}\sum_{\ell\in\ccalN_i^{\kappa_c}}\Bigl(\mbE_{\rm alg}\norm{\hat Q_\ell^{(k)}-Q_\ell^\pi}_{\ccalL^2(\nu^\pi)}^2\Bigr)^{1/2}\notag\\
		&\leq\frac{GD_{\kappa_c}}{1-\gamma}\Bigl(\sqrt 3\,\epsilon_Q+\sqrt{\tfrac{2}{1-\gamma}}\,\Bk\Bigr),
	}
	using $\sqrt{a+b}\leq\sqrt a+\sqrt b$.

	\emph{Aggregation route.} The truncated value sum omits the agents $\ell\notin\ccalN_i^{\kappa_c}$, so route (ii) equals $\tfrac{1}{1-\gamma}\norm{\mbE_{\nu^\pi}[\sum_{\ell\notin\ccalN_i^{\kappa_c}}Q_\ell^\pi\,\nabla_{\bbtheta_i}\log\pi_i]}$. We first establish a \emph{single-coordinate score-decoupling bound}. Fix $\ell$ with $d(i,\ell)>\kappa_c$. Under $\nu^\pi=d^\pi\otimes\pi$ the joint policy factorizes, so conditional on $\bbS$ the actions are independent across agents and the score satisfies $\mbE[\nabla_{\bbtheta_i}\log\pi_i(\bbA_i\mid\bbS_{\ccalN_i^{\kappa_\pi}})\mid\bbS,\bbA_{-i}]=\mbE_{\bbA_i\sim\pi_i}[\nabla_{\bbtheta_i}\log\pi_i]=\bbzero$, the score identity. Hence, for any function $\tilde Q_\ell$ of $(\bbS,\bbA)$ that does not depend on the single coordinate $\bbA_i$,
	\feq{
		\norm{\mbE_{\nu^\pi}[Q_\ell^\pi\,\nabla_{\bbtheta_i}\log\pi_i]}&=\norm{\mbE_{\nu^\pi}[(Q_\ell^\pi-\tilde Q_\ell)\,\nabla_{\bbtheta_i}\log\pi_i]}\notag\\
		&\leq G\,\sup\abs{Q_\ell^\pi-\tilde Q_\ell}.
	}
	Take $\tilde Q_\ell$ to be $Q_\ell^\pi$ with the coordinate $\bba_i$ frozen at an arbitrary fixed value. The tuple $(\bbS,\bbA)$ and its $\bba_i$-frozen counterpart agree on $\ccalN_\ell^{\,d(i,\ell)-1}$, since agent $i$ lies outside that neighborhood; the exponential decay (B1) at radius $d(i,\ell)-1\geq\kappa_c\geq1$ therefore gives $\sup\abs{Q_\ell^\pi-\tilde Q_\ell}\leq c\rho^{\,d(i,\ell)}$. Thus $\norm{\mbE_{\nu^\pi}[Q_\ell^\pi\,\nabla_{\bbtheta_i}\log\pi_i]}\leq Gc\,\rho^{\,d(i,\ell)}$, with no $\kappa_\pi$-dependent shift in the exponent: the zero-mean cancellation is taken over the single coordinate $\bbA_i$, not over the score-support neighborhood $\ccalN_i^{\kappa_\pi}$. (An omitted agent in another connected component contributes zero: apply (B1) at every finite radius, which is the convention $\rho^{\infty}:=0$ of Section~\ref{sec:pre}.) Summing over the omitted agents, with at most $\Delta(\Delta-1)^{r-1}$ agents at graph distance $r$ and $\tilde\delta:=(\Delta-1)\rho<1$,
	\feq{
		\text{(ii)}&\leq\frac{G}{1-\gamma}\sum_{\ell:\,d(i,\ell)>\kappa_c}c\,\rho^{\,d(i,\ell)}\leq\frac{Gc}{1-\gamma}\sum_{r>\kappa_c}\Delta(\Delta-1)^{r-1}\rho^{r}\notag\\
		&=\frac{Gc}{(1-\gamma)(1-\tilde\delta)}\cdot\frac{\Delta}{\Delta-1}\,\tilde\delta^{\kappa_c+1}=:\epsilon_{\rm agg}(\kappa),
	}
	which is~\eqref{eq:eps_agg_def}. Summing (i) and (ii) gives~\eqref{eq:grad_bias_bound}.
	\end{proof}

	\section{Proof of Lemma~\ref{lem:bias_squared}}\label{app:bias_squared}

	\begin{proof}
	We bound $\mbE[\norm{\bbd_i^{(k)}}^2\mid\ccalF_k^-]$, the history-conditional expectation over the critic dataset of the squared $\ccalF_k^Q$-conditional bias $\bbd_i^{(k)}=\mbE[\hat\bbg_i^{(k)}\mid\ccalF_k^Q]-\nabla_{\bbtheta_i}J(\bbtheta^{(k)})$. We do not square the fully-averaged bound of Lemma~\ref{lem:grad_bias} directly; instead we use the $\hat Q$-conditional decomposition that underlies it. Conditional on $\ccalF_k^Q$ the critics $\{\hat\bbw_\ell^{(k)}\}$ are fixed, and the two routes of Appendix~\ref{app:grad_bias} give
	\feq{
		&\norm{\mbE[\hat\bbg_i^{(k)}\mid\hat Q,\bbtheta^{(k)}]-\nabla_{\bbtheta_i}J}\notag\\
		&\quad\leq\frac{G}{1-\gamma}\sum_{\ell\in\ccalN_i^{\kappa_c}}\mbE_{\nu^\pi}\bigl[\abs{\hat Q_\ell^{(k)}-Q_\ell^\pi}\,\big|\,\hat Q\bigr]+\epsilon_{\rm agg}(\kappa),
	}
	the first term being the critic-approximation route (score bounded by $G$, neighborhood size $\abs{\ccalN_i^{\kappa_c}}\leq D_{\kappa_c}$, triangle inequality) and the second the aggregation route, which is deterministic. The bound holds $\hat Q$-conditionally because, given $\ccalF_k^Q$, the gradient sample $\ccalD_g^{(k)}$ is i.i.d.\ from $\nu^\pi$ and each $\hat Q_\ell^{(k)}$ is a fixed function. Applying $(a+b)^2\leq 2a^2+2b^2$ and then the Cauchy--Schwarz inequality over the at most $D_{\kappa_c}$ summands,
	\feq{
		&\mbE_{\hat Q}\norm{\mbE[\hat\bbg_i^{(k)}\mid\hat Q,\bbtheta^{(k)}]-\nabla_{\bbtheta_i}J}^2\notag\\
		&\quad\leq 2\epsilon_{\rm agg}(\kappa)^2+\frac{2G^2 D_{\kappa_c}}{(1-\gamma)^2}\sum_{\ell\in\ccalN_i^{\kappa_c}}\mbE_{\hat Q}\,\mbE_{\nu^\pi}\bigl[\abs{\hat Q_\ell^{(k)}-Q_\ell^\pi}\,\big|\,\hat Q\bigr]^2.
	}
	By conditional Jensen, $\mbE_{\nu^\pi}[\abs{\hat Q_\ell^{(k)}-Q_\ell^\pi}\mid\hat Q]^2\leq\mbE_{\nu^\pi}[\abs{\hat Q_\ell^{(k)}-Q_\ell^\pi}^2\mid\hat Q]$; taking $\mbE_{\hat Q}$ and invoking~\eqref{eq:critic_L2} for each $\ell$, the sum over the at most $D_{\kappa_c}$ neighbors is at most $D_{\kappa_c}(3\epsilon_Q^2+2\Bk^2/(1-\gamma))$. Substituting,
	\feq{
		\mbE\bigl[\norm{\bbd_i^{(k)}}^2\,\big|\,\ccalF_k^-\bigr]&\leq 2\epsilon_{\rm agg}(\kappa)^2\\
		&\quad+\frac{2G^2 D_{\kappa_c}^2}{(1-\gamma)^2}\Bigl(3\epsilon_Q^2+\frac{2\Bk^2}{1-\gamma}\Bigr),
	}
	which is~\eqref{eq:eps_b_bound} with $A_\kappa=G^2D_{\kappa_c}^2/(1-\gamma)^2$.
	\end{proof}

	\section{Proof of Theorem~\ref{thm:main_convergence} (convergence)}\label{app:main_convergence}

	\begin{proof}
	By (D2), $J$ is $L_J$-smooth on $\Theta_0$. Write $\hat\bbg^{(k)}=\nabla J(\bbtheta^{(k)})+\bbe^{(k)}$ with total error $\bbe^{(k)}:=\bbd^{(k)}+\bbn^{(k)}$, where $\bbd^{(k)}:=\mbE[\hat\bbg^{(k)}\mid\ccalF_k^Q]-\nabla J(\bbtheta^{(k)})$ is the conditional bias and $\bbn^{(k)}$ the conditionally mean-zero noise, and define the stochastic projected-gradient mapping $\tilde\ccalG_\eta^{(k)}:=\eta^{-1}(\Pi_{\Theta_0}(\bbtheta^{(k)}+\eta\hat\bbg^{(k)})-\bbtheta^{(k)})$, so that $\bbtheta^{(k+1)}=\bbtheta^{(k)}+\eta\tilde\ccalG_\eta^{(k)}$.

	\emph{Why the classical noise accounting fails under projection.} An inequality of the form $\mbE J(\bbtheta^{(k+1)})\geq\mbE J(\bbtheta^{(k)})+c\,\eta\,\mbE\norm{\ccalG_\eta(\bbtheta^{(k)})}^2-C\eta^2 L_J\,\mbE\norm{\bbn^{(k)}}^2$ does not hold in general for projected stochastic updates: mean-zero noise can produce an expected objective decrease of order $\Theta(\eta)$ at a point where $\ccalG_\eta=\bbzero$. A one-dimensional example shows the obstruction. Let $\Theta_0=[0,1]$, $J(x)=-x-x^2/2$ (so $L_J=1$), $x=0$, $0<\eta\leq1$, and noise $\xi=+2$ with probability $1/3$ and $\xi=-1$ with probability $2/3$; then $\ccalG_\eta(0)=0$ yet $\mbE[J(x^+)]-J(0)=-\eta/3-\eta^2/6$. We therefore argue through the stochastic mapping, following the composite-optimization route of Ghadimi, Lan, and Zhang~\cite{ghadimi2016mini} (see~\cite{ghadimi2013stochastic} for the unconstrained smooth counterpart).

	\emph{Step 1 (pathwise one-step inequality).} The projection optimality condition for $\bbtheta^{(k+1)}=\Pi_{\Theta_0}(\bbtheta^{(k)}+\eta\hat\bbg^{(k)})$, tested at the feasible point $\bbtheta^{(k)}$, (using that $\Theta_0$ is closed and convex, as each $\Theta_{0,i}$ is compact convex by (A4)) reads $\inner{\bbtheta^{(k)}+\eta\hat\bbg^{(k)}-\bbtheta^{(k+1)}}{\bbtheta^{(k)}-\bbtheta^{(k+1)}}\leq 0$, i.e.\ $\inner{\hat\bbg^{(k)}}{\tilde\ccalG_\eta^{(k)}}\geq\norm{\tilde\ccalG_\eta^{(k)}}^2$. Combining with the $L_J$-smoothness lower bound and Young's inequality $\inner{\bbe^{(k)}}{\tilde\ccalG_\eta^{(k)}}\leq\tfrac12\norm{\bbe^{(k)}}^2+\tfrac12\norm{\tilde\ccalG_\eta^{(k)}}^2$,
	\feq{
		J(\bbtheta^{(k+1)})&\geq J(\bbtheta^{(k)})+\eta\inner{\nabla J(\bbtheta^{(k)})}{\tilde\ccalG_\eta^{(k)}}-\tfrac{L_J\eta^2}{2}\norm{\tilde\ccalG_\eta^{(k)}}^2\notag\\
		&\geq J(\bbtheta^{(k)})+\tfrac{\eta}{2}\bigl(1-L_J\eta\bigr)\norm{\tilde\ccalG_\eta^{(k)}}^2-\tfrac{\eta}{2}\norm{\bbe^{(k)}}^2,
	}
	and with $\eta\leq 1/(4L_J)$ the coefficient of $\norm{\tilde\ccalG_\eta^{(k)}}^2$ is at least $3\eta/8$.

	\emph{Step 2 (from the stochastic to the true mapping).} Non-expansiveness of $\Pi_{\Theta_0}$ gives $\norm{\ccalG_\eta(\bbtheta^{(k)})-\tilde\ccalG_\eta^{(k)}}\leq\norm{\bbe^{(k)}}$, hence $\norm{\tilde\ccalG_\eta^{(k)}}^2\geq\tfrac12\norm{\ccalG_\eta(\bbtheta^{(k)})}^2-\norm{\bbe^{(k)}}^2$. Substituting into Step~1 yields the pathwise inequality
	\feq{
		J(\bbtheta^{(k+1)})\geq J(\bbtheta^{(k)})+\tfrac{3\eta}{16}\norm{\ccalG_\eta(\bbtheta^{(k)})}^2-\tfrac{7\eta}{8}\norm{\bbe^{(k)}}^2,\label{eq:pathwise_descent}
	}
	in which the error term is proportional to $\eta$. This inequality holds on every sample path before any expectation is taken; it is exactly the inequality that the predictable-prefix argument of Appendix~\ref{app:sample_complexity_traj} multiplies by the indicator $\bbone_{A_0\cap\cdots\cap A_k}\bbone_{E_{\rm rff}}$.

	\emph{Step 3 (bias--variance and aggregation).} Taking expectations and using that $\bbd^{(k)}$ is $\ccalF_k^Q$-measurable while $\mbE[\bbn^{(k)}\mid\ccalF_k^Q]=\bbzero$ gives $\mbE\norm{\bbe^{(k)}}^2=\mbE\norm{\bbd^{(k)}}^2+\mbE\norm{\bbn^{(k)}}^2$. All norms are aggregate quantities; under the product structure (A4), $\norm{\bbv}^2=\sum_{i=1}^n\norm{\bbv_i}^2$ and the projection acts blockwise, so the analyzed update equals the global projected step. Since $\bbd^{(k)}=(\bbd_i^{(k)})_{i=1}^n$ stacks the per-agent biases, $\mbE[\norm{\bbd^{(k)}}^2\mid\ccalF_k^-]=\sum_i\mbE[\norm{\bbd_i^{(k)}}^2\mid\ccalF_k^-]\leq n\,\bar\epsilon_b^2$ by Lemma~\ref{lem:bias_squared}; shared samples across agents do not affect this step, since the squared norm of the stacked vector is the sum of the squared block norms and no independence between agents is needed. Since $\bbn^{(k)}$ stacks the per-agent mean-zero noises, $\mbE\norm{\bbn^{(k)}}^2\leq n\,\sigma_g^2/M_g$, where $\sigma_g^2=G^2D_{\kappa_c}^2(LW^\ast)^2/(1-\gamma)^2$ bounds the per-agent per-sample gradient-noise variance (the per-sample gradient magnitude is at most $GD_{\kappa_c}LW^\ast/(1-\gamma)$ under the projected linear critic) and the division by $M_g$ reflects the average over $M_g$ i.i.d.\ samples.

	\emph{Step 4 (telescoping).} Summing~\eqref{eq:pathwise_descent} over $k=0,\ldots,K-1$, using $\Delta_J=J^\ast-J(\bbtheta^{(0)})$, dividing by $3\eta K/16$ and then by $n$: the explicit factor $n$ in the error terms cancels against the division by $n$, while the descent term retains the factor $1/n$. This yields~\eqref{eq:convergence} with the constants $16/3$ and $14/3=(7/8)(16/3)$. The per-agent normalization is a consequence of the product structure, not an assumed scaling.
	\end{proof}

	\section{Proof of Theorem~\ref{thm:sample_complexity_fixed}}\label{app:sample_complexity}

	\begin{proof}
	Substitute the parameter choices into~\eqref{eq:explicit_bound}. With $K\geq64L_J\Delta_J/(n\epsilon)$ the descent term $64L_J\Delta_J/(3nK)$ is at most $\epsilon/3$; with $M_g\geq42\sigma_g^2/\epsilon$ the gradient-noise term $14\sigma_g^2/(3M_g)$ is at most $\epsilon/9\leq\epsilon/3$; with $\epsilon_Q^2\leq\epsilon/(84A_\kappa)$ the statistical critic term $28A_\kappa\epsilon_Q^2$ is at most $\epsilon/3$. The choice $\delta_\star=\min\{1/2,\epsilon_Q^2/A_Q^2\}$ and the sample size~\eqref{eq:Ms_choice} at $\ell_\star=\log(4(m+1)/\delta_\star)$ make Theorem~\ref{thm:joint_optimal}(b), and hence Lemma~\ref{lem:bias_squared}, applicable at every $(i,k)$; the per-iteration LSTD failure events are thereby absorbed into the term $3\epsilon_Q^2$ in expectation, and no union bound over iterations is required. The remaining terms of~\eqref{eq:explicit_bound} are $(28/3)\epsilon_{\rm agg}(\kappa)^2\leq\Cfloor\Efloor^{\rm graph}(\kappa)$ and $\tfrac{112A_\kappa}{3(1-\gamma)}(B_{\rm loc}^2+B_{\rm rff}^2)=\Cfloor[\Efloor^{\rm trunc}(\kappa)+\Efloor^{\rm rff}(\kappa,m)]$ with $\Cfloor=112/3$. Hence
	\feq{
		\frac{1}{nK}\sum_{k<K}\mbE\norm{\ccalG_\eta(\bbtheta^{(k)})}^2\leq\epsilon+\Efloor(\kappa,m).
	}
	For the complexity count, $\Delta_J\leq2nQ_{\max}$ gives $K=\ccalO(L_JQ_{\max}/\epsilon)$; $B_Y\leq4L^2W^\ast$ (since $LW^\ast=L^2\bbarr/\mu_M\geq\bbarr/2$ by $\mu_M\leq2L^2$) and $\epsilon_Q^2=\Theta(\epsilon/A_\kappa)$ give $M_s=\widetilde\ccalO(L^6W^{\ast2}A_\kappa/(\mu_M^2\epsilon))$ with $\ell_\star=\ccalO(\log(mA_Q^2A_\kappa/\epsilon))$; and $M_g=\ccalO(\sigma_g^2/\epsilon)$. Thus $N_{\rm global}=K(M_s+M_g)=\widetilde\ccalO(1/\epsilon^2)$ with the displayed prefactor.
	\end{proof}

	\section{Proof of Theorem~\ref{thm:sample_complexity_traj}}\label{app:sample_complexity_traj}

	\begin{proof}
	The only structural difference from Appendix~\ref{app:sample_complexity} is that the TD-stability lower bound $\mu_M$ is replaced by $\mu_{\rm traj}$ and holds only on the trajectory event $E_{\rm traj}=\bigcap_{k<K}A_k$, with each $A_k=\{\inf_i\sigma_{\min}(\bbM_i^{\bbtheta^{(k)}})\geq\mu_{\rm traj}\}$ being $\ccalF_k^-$-measurable (predictable).

	\emph{Predictable-event factorization.} The per-step descent inequality~\eqref{eq:pathwise_descent}, established pathwise before expectations, requires the critic bound only at the current iterate $\bbtheta^{(k)}$, which is controlled by $A_k$. Multiply~\eqref{eq:pathwise_descent} by the predictable indicator $\bbone_{A_0\cap\cdots\cap A_k}\bbone_{E_{\rm rff}}\in\ccalF_k^-$ before taking expectations. Because the indicator is $\ccalF_k^-$-measurable, it commutes with the conditional expectation in the concentration step (Theorem~\ref{thm:lstd_error} applied conditional on $\ccalF_k^-$), so the conditional matrix-Bernstein and Hilbert-space vector-concentration bounds, Lemma~\ref{lem:pop_L2} at threshold $\mu_{\rm traj}$, and Theorem~\ref{thm:joint_optimal} hold on $A_k\cap E_{\rm rff}$ with $\mu_M$ replaced by $\mu_{\rm traj}$ and $E_M$ by $A_k$, because their proofs use the conditioning bound only at the current iterate. No conditioning on the global future event $E_{\rm traj}$ is used.

	\emph{Stopped-process telescoping.} Define the stopping time $\tau:=\inf\{k:A_k^c\}\wedge K$, which is a valid stopping time for the filtration $(\ccalF_k^-)$ since $\{\tau>k\}=A_0\cap\cdots\cap A_k\in\ccalF_k^-$. Telescoping the predictable-prefix-weighted per-step inequality over $k=0,\ldots,K-1$ is equivalent to telescoping the descent inequality over the stopped process $\{\bbtheta^{(k)}\}_{k<\tau}$ on $E_{\rm rff}$; the telescoped objective increments satisfy the pathwise identity $\sum_{k<\tau}[J(\bbtheta^{(k+1)})-J(\bbtheta^{(k)})]=J(\bbtheta^{(\tau)})-J(\bbtheta^{(0)})\leq\Delta_J$, which is the only property of the stopped sum the argument uses; no sign condition on any boundary term is needed. This yields the unconditional predictable-prefix bound
	\feq{
		\mbE\Bigl[\tfrac{1}{nK}\sum_{k=0}^{K-1}\norm{\ccalG_\eta(\bbtheta^{(k)})}^2\,\bbone_{A_0\cap\cdots\cap A_k}\,\bbone_{E_{\rm rff}}\Bigr]\leq\epsilon+\Efloor(\kappa,m),
	}
	which is statement~(a); the equivalent stopped-process form follows from $\bbone_{A_0\cap\cdots\cap A_k}=\bbone_{\{\tau>k\}}$. Statement~(b) follows because $E_{\rm traj}\subseteq A_0\cap\cdots\cap A_k$ for every $k$, whence $\bbone_{E_{\rm traj}}\leq\bbone_{A_0\cap\cdots\cap A_k}$. Statement~(c) is Assumption~\ref{asm:Cp} together with $\mbP(E_{\rm rff})\geq 1-\delta_{\rm rff}$ and a union bound, and statement~(d) follows by dividing the bound of statement~(b) by $\mbP(E_{\rm traj}\cap E_{\rm rff})\geq 1/2$. The complexity count $N_{\rm global}^{\rm traj}=K(M_s+M_g)$ carries the factor $L_J$ in both terms through $K=\ccalO(1+L_J\Delta_J/(n\epsilon))$, with all derived constants ($W^\ast,A_Q,B_Y,\sigma_g^2,\Efloor$) recomputed using $\mu_{\rm traj}$ in place of $\mu_M$.
	\end{proof}

	\section{Proof of Proposition~\ref{prop:adaptive} (offline radius selection)}\label{app:adaptive}

	\begin{proof}
	Under the strengthened hypothesis $\Delta\rho<1$, the graph-tail base satisfies $\tilde\delta=(\Delta-1)\rho<\Delta\rho<1$. By the growth bound $D_{\kappa_c}\leq C_D\Delta^{\kappa+\kappa_\pi}$,
	\feq{
		\Efloor^{\rm trunc}(\kappa)&=\frac{G^2D_{\kappa_c}^2\gamma^2\tilde c^2\rho^{2(\kappa+1)}}{(1-\gamma)^3}\notag\\
		&\leq\frac{G^2C_D^2\Delta^{2\kappa_\pi}\gamma^2\tilde c^2\rho^2}{(1-\gamma)^3}\,(\Delta\rho)^{2\kappa}=C_t(\Delta\rho)^{2\kappa},
	}
	and, since $\tilde\delta\leq\Delta\rho$ and $\kappa_c+1\geq\kappa$,
	\feq{
		\Efloor^{\rm graph}(\kappa)=\epsilon_{\rm agg}(\kappa)^2=C_g\,\tilde\delta^{2(\kappa_c+1)}\leq C_g(\Delta\rho)^{2\kappa}.
	}
	The third term in~\eqref{eq:kappa_star} guarantees $(\Delta\rho)^{2\kappa^\ast}\leq\epsilon/(\Cfloor\max\{C_t,C_g\})$, hence $\Cfloor\Efloor^{\rm trunc}(\kappa^\ast)\leq\epsilon$ and $\Cfloor\Efloor^{\rm graph}(\kappa^\ast)\leq\epsilon$; the first two terms enforce the standing restriction $\kappa^\ast\geq\max\{1,\kappa_\pi+1\}$ required by Lemma~\ref{lem:V_decay} and Theorem~\ref{thm:q_linear}. Under the stated hypotheses (RFF feasibility and Assumption~\ref{asm:C} at $\kappa^\ast$), Theorem~\ref{thm:sample_complexity_fixed} applies at radius $\kappa^\ast$ and gives $\epsilon+\Efloor(\kappa^\ast,m)\leq3\epsilon+\Cfloor\Efloor^{\rm rff}(\kappa^\ast,m)$, which is the stated bound. The envelope for $\Efloor^{\rm rff}(\kappa^\ast,m)$ follows by substituting $\epsilon_P(m)^2=\widetilde\ccalO(C_{\rm RFF}(D_{\kappa^\ast}d_S)\tilde g_\alpha(\kappa^\ast)^2/m)$ from Theorem~\ref{thm:rff_approx} into $\Efloor^{\rm rff}=A_{\kappa^\ast}\gamma^2\bbarr^2\epsilon_P^2/(1-\gamma)^3$. Since $\tilde g_\alpha(\kappa)=\exp(\ccalO(D_\kappa d_S))$ and $\kappa^\ast=\ccalO(\log(1/\epsilon))$, substituting the relevant upper bound on $D_{\kappa^\ast}$ ($\ccalO(\kappa^\ast)$, $\ccalO(\kappa^{\ast p})$, or $\ccalO(\Delta^{\kappa^\ast})$) yields the polynomial, quasi-polynomial, and exponential-in-a-power envelopes of the statement, respectively; in the exponential-growth case $\Delta^{\kappa^\ast}=\ccalO((1/\epsilon)^{\log\Delta/(2\log(1/(\Delta\rho)))})$, which gives the stated exponent $\chi$. When the realized $D_{\kappa_c^\ast}$ reaches $n$, the cap $D_{\kappa_c^\ast}\leq n$ contributes the explicit $n^2$ factor of Remark~\ref{rem:n_scaling}.
	\end{proof}

	\bibliographystyle{IEEEtran}
	\bibliography{reference}

\end{document}